\newtheorem{theorem}{Theorem}
\newtheorem{lemma}{Lemma}
\newtheorem{corollary}{Corollary}
\newtheorem{remark}{Remark}
\DeclareMathOperator*{\argmin}{arg\,min}
\DeclareMathOperator*{\sign}{sign}
\newtheorem{proposition}{Proposition}
\DeclareMathOperator{\Tr}{tr}
\DeclareMathOperator*{\E}{{\mathbb{E}}}
\DeclareMathOperator*{\Var}{\mathrm{Var}}
\DeclareMathOperator{\cD}{{\mathcal{D}}}
\DeclareMathOperator{\cE}{{\mathcal{E}}}
\DeclareMathOperator{\cF}{{\mathcal{F}}}
\newcommand{\ketbra}[2]{\lvert #1 \rangle \! \langle #2 \rvert}
\newcommand{\norm}[1]{\left\lVert#1\right\rVert}
\algrenewcommand\alglinenumber[1]{\sf\scriptsize\color{blue}{#1}}
\algrenewcommand\algorithmicrequire{\textbf{Input:}}
\algrenewcommand\algorithmicensure{\textbf{Output:}}
\newcommand{\revision}[1]{{\color{black} #1}}
\newcommand{\newrevision}[1]{{\color{black} #1}}
\begin{document}

\title{Information-theoretic bounds on quantum advantage in machine learning}
\date{\today}
\author{Hsin-Yuan Huang}
\affiliation{Institute for Quantum Information and Matter, Caltech, Pasadena, CA, USA}
\affiliation{Department of Computing and Mathematical Sciences, Caltech, Pasadena, CA, USA}
\author{Richard Kueng}
\affiliation{Institute for Integrated Circuits, Johannes Kepler University Linz, Austria}
\author{John Preskill}
\affiliation{Institute for Quantum Information and Matter, Caltech, Pasadena, CA, USA}
\affiliation{Department of Computing and Mathematical Sciences, Caltech, Pasadena, CA, USA}
\affiliation{Walter Burke Institute for Theoretical Physics, Caltech, Pasadena, CA, USA}
\affiliation{AWS Center for Quantum Computing, Pasadena, CA, USA}

\begin{abstract}
We study the \revision{performance} % data efficiency / prediction performance / predictive power
of classical and quantum machine learning (ML) models in predicting outcomes of physical experiments.
The experiments depend on an input parameter $x$ and involve execution of a (possibly unknown) quantum process $\cE$.
Our figure of merit is the number of runs of $\cE$ \revision{required to achieve a  desired prediction performance.}
We consider classical ML models that perform a measurement and record the classical outcome after each run of $\cE$, and quantum ML models that can access $\cE$ coherently to acquire quantum data; the classical or quantum data is then used to predict outcomes of future experiments.
We prove that for any input distribution $\mathcal{D}(x)$, a classical ML model can provide accurate predictions \emph{on average} by accessing $\cE$ a number of times comparable to the optimal quantum ML model.
%\revision{This limits the potential for quantum ML models to learn from fewer experiments, but boosts our hope in using classical ML models trained on measurement data to address challenging problems in physics, chemistry, and material science.}
In contrast, for achieving accurate prediction on \emph{all} inputs, we prove that exponential quantum advantage is possible. For example, to predict expectations of all Pauli observables in an $n$-qubit system $\rho$, classical ML models require $2^{\Omega(n)}$ copies of $\rho$, but we present a quantum ML model using only $\mathcal{O}(n)$ copies.
\revision{Our results clarify where quantum advantage is possible and highlight the potential for classical ML models to address challenging quantum problems in physics and chemistry.}
%%%
% Older versions:
%We study whether quantum machines can learn from fewer experiments than classical machines to predict the outcomes of new physical experiments. We prove that one can always design a classical machine learning (ML) model -- using classical measurement data -- to achieve a small average prediction error with a number of experiments comparable to the best quantum ML model -- using quantum data from the experiments. However, for achieving accurate prediction on all inputs, we show that exponential quantum advantage is possible. As an example, to predict the expectation of \emph{all} Pauli observables in an $n$-qubit system, classical ML requires $\exp(\Omega(n))$ data, but we give a quantum ML using only $\mathcal{O}(n)$ data.
\end{abstract}

\maketitle

\section{Introduction}

The widespread applications of machine learning (ML) to problems of practical interest have fueled interest in machine learning using quantum platforms \cite{biamonte2017quantum, schuld2019quantum, havlivcek2019supervised}.  Though many potential applications of quantum ML have been proposed, so far the prospect for quantum advantage in solving purely classical problems remains unclear \cite{tang2019quantum, tang2018quantum, gilyen2018quantum, arrazola2019quantum}. On the other hand, it seems plausible that quantum ML can be fruitfully applied to problems faced by quantum scientists, such as characterizing the properties of quantum systems and predicting the outcomes of quantum experiments \cite{carleo2017solving, van2017learning, carrasquilla2017machine, gilmer2017neural, melnikov2018active, sharir2020deep, aharonov2021quantum}.

Here we focus on an important class of learning problems motivated by quantum mechanics. Namely, we are interested in predicting functions of the form
\begin{equation}
f(x) = \Tr(O \cE(\lvert x \rangle \! \langle x \rvert)),
\label{eq:target-function}
\end{equation}
where $x$ is a classical input, $\cE$ is an arbitrary (possibly unknown) completely positive and trace preserving (CPTP) map, and $O$ is a known observable.
Equation~\eqref{eq:target-function} encompasses \emph{any} physical process that takes a classical input and produces a real number as output.
\revision{The goal is to construct a function $h(x)$ that accurately approximates $f(x)$ after accessing the physical process $\cE$ as few times as possible. }
%\revision{In particular, we are interested in learning a function $h(x)$ that accurately approximates $f(x)$ for new input $x$.}

%Therefore, this general setup also includes learning classical functions, such as functions generated by neural networks \cite{goodfellow2016deep}. It also encompasses many classical ML tasks, such as classification and regression \cite{mohri2018foundations}.
A particularly important special case of setup~\eqref{eq:target-function} is training an ML model to predict what would happen in physical experiments \cite{melnikov2018active}.
Such experiments might explore, for instance, the outcome of a reaction in quantum chemistry \cite{zhou2017optimizing}, ground state properties of a novel molecule or material \cite{parr1980density, car1985unified, becke1993new, white1993density, peruzzo2014variational, kandala2017hardware, gilmer2017neural}, or the behavior of neutral atoms in an analog quantum simulator \cite{buluta2009quantum, levine2018high, bernien2017probing}. In these cases, the input $x$ subsumes parameters that characterize the process, e.g., chemicals involved in the reaction, a description of the molecule, or the intensity of lasers that control the neutral atoms. The map $\cE$ characterizes a quantum evolution happening in the lab. Depending on the parameter $x$, it produces the quantum state $\cE(\lvert x \rangle \! \langle x \rvert)$.
Finally, the experimentalist measures a certain observable $O$ at the end of the experiment.
The goal is to predict the measurement outcome for new physical experiments, with new values of $x$ that have not been encountered during the training process.

\begin{figure*}[t]
\centering
\includegraphics[width=0.83\textwidth]{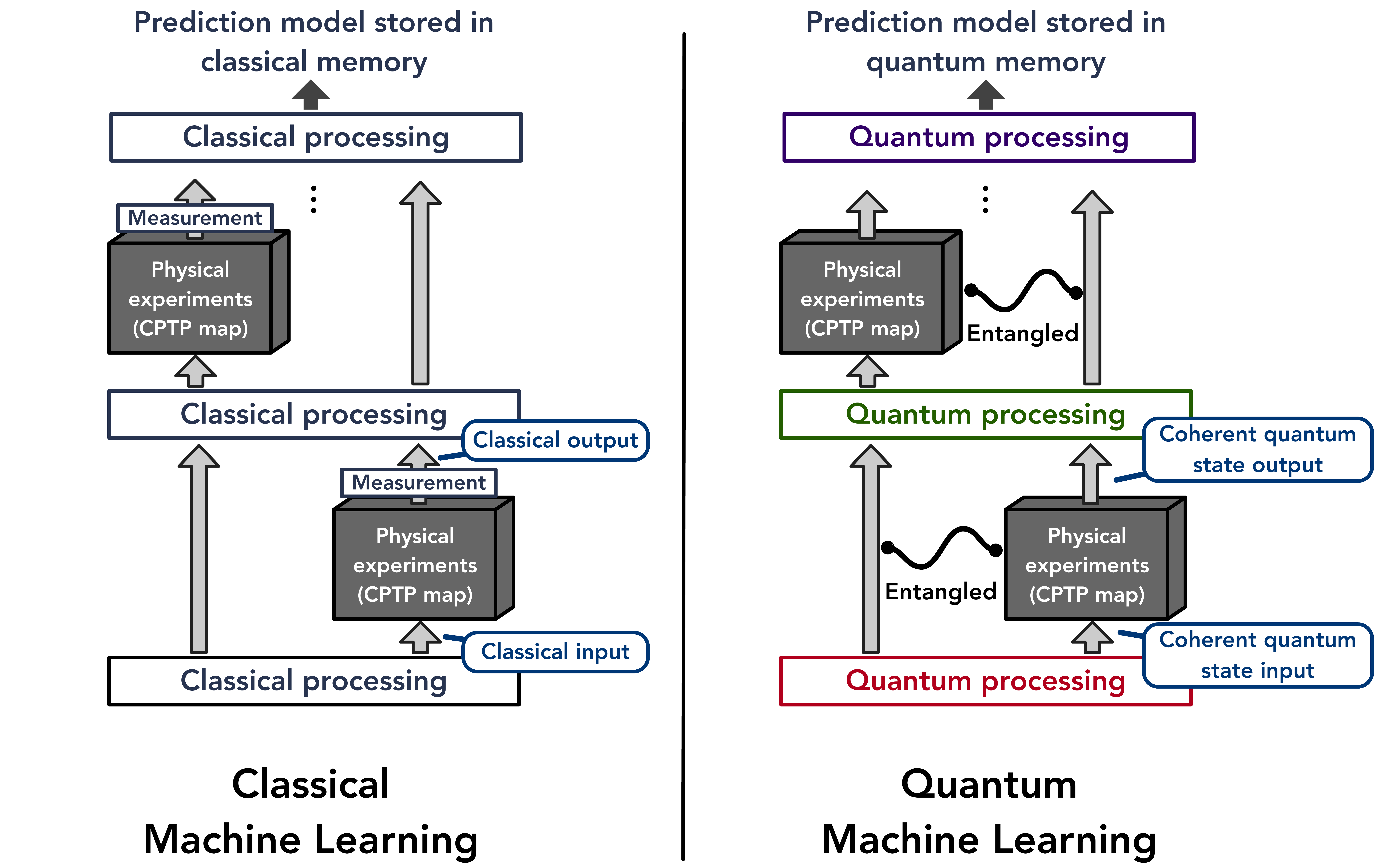}
    \caption{
    \emph{Illustration of classical and quantum machine learning settings:} The goal is to learn about an unknown CPTP map $\cE$ by performing physical experiments. (Left) In the learning phase of the classical ML setting, a measurement is performed after each query to $\cE$; the classical measurement outcomes collected during the learning phase are consulted during the prediction phase. (Right) In the learning phase of the quantum ML setting, multiple queries to $\cE$ may be included in a single coherent quantum circuit, yielding an output state stored in a quantum memory; this stored quantum state is consulted during the prediction phase.
    \label{fig:CMLvsQML}}
\end{figure*}

Motivated by these concrete applications, we want to understand the power of classical and quantum ML models in predicting functions of the form given in Equation~\eqref{eq:target-function}.
On the one hand, we consider classical ML models that can gather classical measurement data $\{(x_i, o_i)\}_{i=1}^{N_{\mathrm{C}}}$, where $o_i$ is the outcome when we perform a POVM measurement on the state $\cE(\ketbra{x_i}{x_i})$.
We denote by $N_{\mathrm{C}}$ the number of such experiments performed during training in the classical ML setting.
On the other hand, we consider quantum ML models in which %the ML models can compose
multiple runs of the CPTP map $\cE$ can be composed coherently to collect quantum data, and predictions are produced by a quantum computer with access to the quantum data. We denote by $N_{\mathrm{Q}}$ the number of times $\cE$ is %being
used during training in the quantum setting.
The classical and quantum ML settings are illustrated in Figure~\ref{fig:CMLvsQML}.
%The goal of the ML models is to learn a function $h(x)$ that accurately approximates $f(x)$ after accessing the quantum process $\cE$ as few times as possible.

We focus on the question of whether quantum ML can have a large advantage over classical ML: to achieve a small prediction error, can the optimal $N_{\mathrm{Q}}$ in the quantum ML setting be much less than the optimal $N_{\mathrm{C}}$ in the classical ML setting? For the purpose of this comparison, we disregard the runtime of the classical or quantum ML models that generate the predictions; we are only interested in how many times the process $\cE$ must run during the learning phase in the quantum and classical settings.

Our first main result addresses small \emph{average} prediction error, i.e.\ the prediction error $|h(x)-f(x)|^2$ averaged over some specified input distribution $\mathcal{D}(x)$.
%We rigorously show that, for any $\cE$, $O$, and $\mathcal{D}$, one can always design a classical ML model that achieves a similar average prediction error as the optimal quantum ML model such that $N_{\mathrm{C}}$ is larger than $N_{\mathrm{Q}}$ by a small polynomial factor.
We rigorously show that, for any $\cE$, $O$, and $\mathcal{D}$, and for any quantum ML model, one can always design a classical ML model achieving a similar average prediction error such that $N_{\mathrm{C}}$ is larger than $N_{\mathrm{Q}}$ by at worst a small polynomial factor.
%We say that the classical ML model is \emph{restricted} if the POVM after each execution of $\cE$ is a measurement of the observable $O$ whose expectation value we wish to predict; in this special case, $o_i \in \mathbb{R}$ and $\E[o_i] = \Tr(O\cE(\ketbra{x_i}{x_i}))$.
%\revision{Hence there can not be an exponential quantum advantage in existing quantum ML models \cite{havlivcek2019supervised, schuld2019quantum, huang2020power} in the number of . In such cases, these quantum ML models belongs to the classical setting and they do not have quantum advantage in the number of data. However, it is still possible to obtain computational speed-up \cite{servedio2004equivalences, huang2020power}.}
Hence, there is no exponential advantage of quantum ML over classical ML if the goal is to achieve a small average prediction error, and if the efficiency is quantified by the number of times $\cE$ is used in the learning process.
\revision{
This statement holds for existing quantum ML models running on near-term devices \cite{havlivcek2019supervised, schuld2019quantum, huang2020power} and future quantum ML models yet to be conceived. We note, though, that while there is no large advantage in query complexity, a substantial quantum advantage in computational complexity is possible \cite{servedio2004equivalences}.}
%\revision{We note that even if $N_{\mathrm{C}}$ is comparable to the optimal $N_{\mathrm{Q}}$, it is still possible to obtain computational speedup \cite{servedio2004equivalences}.}

However, the situation changes if the goal is to achieve a small \emph{worst-case} prediction error rather than a small average prediction error --- an exponential separation between $N_{\mathrm{C}}$ and $N_{\mathrm{Q}}$ becomes possible if we insist on predicting $f(x)=\Tr(O \cE(\lvert x \rangle \! \langle x \rvert))$ accurately for \emph{every} input $x$.
We illustrate this point with an example: accurately predicting expectation values of Pauli observables in an unknown $n$-qubit quantum state $\rho$. This is a crucial subroutine in many quantum computing applications; see e.g.~\cite{peruzzo2014variational, kokail2019self, huang2019near, crawford2020efficient, huang2020predicting, izmaylov2019unitary, jiang2020optimal, huggins2019efficient}.
We present a quantum ML model that uses $N_{\mathrm{Q}}=\mathcal{O}(n)$ copies of $\rho$ to predict expectation values of all $n$-qubit Pauli observables. In contrast we prove that \emph{any} classical ML model requires $N_{\mathrm{C}}=2^{\Omega(n)}$ copies of $\rho$ to achieve the same task even if the ML model can perform arbitrary adaptive single-copy POVM measurements.
%\revision{Furthermore, even for problems where quantum ML can not greatly improve the data efficiency, it is still possible to obtain computational speed-up \cite{servedio2004equivalences}.}

\section{Machine learning settings}

We assume that the observable $O$ (with \revision{$\norm{O} \leq 1$}) is known and the physical experiment $\cE$ is an unknown CPTP map that belongs to \revision{a set of CPTP maps $\cF$}. Apart from $\cE \in \cF$, the process can be arbitrary --- a common assumption in statistical learning theory~\cite{valiant1984theory, blumer1989learnability, bartlett2002rademacher, vapnik2013nature, arunachalam2017guest}.
For the sake of concreteness, we assume that $\cE$ is a CPTP map from a Hilbert space of $n$ qubits to a Hilbert space of $m$ qubits.
Regarding inputs, we consider bit-strings of size $n$: $x \in \left\{0,1\right\}^n$. This is not a severe restriction, since floating-point representations of continuous parameters can always be truncated to a finite number of digits.
We now give precise definitions for classical and quantum ML settings; see Fig.~\ref{fig:CMLvsQML} for an illustration.

\paragraph{Classical (C) ML:} The ML model consists of two phases: learning and prediction. During the learning phase, a randomized algorithm selects classical inputs $x_i$ and we perform a (quantum) experiment that results in an outcome $o_i$ from performing a POVM measurement on $\cE(\ketbra{x_i}{x_i})$.
% with expectation $f(x_i)=\Tr(O \mathcal{E}(\lvert x_i \rangle \! \langle x_i \rvert))$.
A total of  $N_{\mathrm{C}}$ experiments give rise to the classical training data
 $\{(x_i, o_i)\}_{i=1}^{N_{\mathrm{C}}}$.
After obtaining this training data, the ML model %will
executes a randomized algorithm $\mathcal{A}$ to learn a prediction model
\begin{equation}
    s_{\mathrm{C}} = \mathcal{A}\left(\{(x_1, o_1), \ldots (x_{N_{\mathrm{C}}},o_{N_{\mathrm{C}}})\}\right),
\end{equation}
where $s_{\mathrm{C}}$ is stored in the classical memory.
In the prediction phase, a sequence of new inputs $\tilde{x}_1, \tilde{x}_2, \ldots \in \{0, 1\}^n$ is provided. The ML model will use $s_{\mathrm{C}}$ to evaluate predictions $h_{\mathrm{C}}(\tilde{x}_1), h_{\mathrm{C}}(\tilde{x}_2), \ldots$ that approximate $f(\tilde{x}_1), f(\tilde{x}_2), \ldots$ up to small errors.

\paragraph{Restricted classical ML:} We will also consider a restricted version of the classical setting. Rather than performing arbitrary POVM measurements, we restrict the ML model to measure the target observable $O$ on the output state $\cE{\ketbra{x_i}{x_i}}$ to obtain the measurement outcome $o_i$. In this case, we always have $o_i \in \mathbb{R}$ and $\E[o_i] = \Tr(O \cE(\ketbra{x_i}{x_i}))$.

\paragraph{Quantum (Q) ML:} During the learning phase, the model starts with an initial state $\rho_0$ in a Hilbert space of arbitrarily high dimension. Subsequently, the quantum ML model accesses the unknown CPTP map $\cE$ a total of $N_{\mathrm{Q}}$ times. These queries are interleaved with quantum data processing steps:
\begin{equation} \label{eq:outputQSQML-main}
    \rho_{\cE} = \mathcal{C}_{N_{\mathrm{Q}}} (\cE \otimes \mathcal{I}) \mathcal{C}_{N_{\mathrm{Q}}-1} \ldots \mathcal{C}_1 (\cE \otimes \mathcal{I}) (\rho_0),
\end{equation}
where each $\mathcal{C}_{i}$ is an arbitrary but known CPTP map, and we write $\cE \otimes \mathcal{I}$ to emphasize that $\cE$ acts on an $n$-qubit subsystem of a larger quantum system.
The final state $\rho_{\cE}$, encoding the prediction model learned from the queries to the unknown CPTP map $\cE$, is stored in a quantum memory.
In the prediction phase, a sequence of new inputs $\tilde{x}_1, \tilde{x}_2, \ldots \in \{0, 1\}^n$ is provided. A quantum computer with access to the stored quantum state $\rho_{\cE}$ executes a computation
to produce prediction values $h_{\mathrm{Q}}(\tilde{x}_1), h_{\mathrm{Q}}(\tilde{x}_2), \ldots$ that approximate $f(\tilde{x}_1), f(\tilde{x}_2), \ldots$ up to small errors\begin{NoHyper}\footnote{Due to non-commutativity of quantum measurements, the ordering of new inputs matters. For instance, the two lists $\tilde{x}_1, \tilde{x}_2$ and $\tilde{x}_2, \tilde{x}_1$ can lead to different outcome predictions $h_{\mathrm{Q}}(\tilde{x}_i)$. Our main results do not depend on this subtletey --- they are valid, irrespective of prediction input ordering.}\end{NoHyper}.

The quantum ML setting is strictly more powerful than the classical ML setting.
During the prediction phase, classical ML models are restricted to processing classical data, albeit data obtained by measuring a quantum system during the learning phase.
In contrast, quantum ML models can work directly \revision{with the quantum data and perform quantum data processing}. A quantum ML model can have an exponential advantage relative to classical ML models for some tasks, as we demonstrate in Sec.~\ref{sec:worstcaseprederr}.

\section{Average-case prediction error}

For a prediction model $h(x)$, we consider the average-case prediction error
\begin{equation}
\sum_{x \in \{0, 1\}^n} \mathcal{D}(x) |h(x) - \Tr(O \cE(\ketbra{x}{x}))|^2,
\end{equation}
with respect to a fixed distribution $\mathcal{D}$ over inputs. This could, for instance, be the uniform distribution.

Although learning from quantum data is strictly more powerful than learning from classical data, there are fundamental limitations.  The following rigorous statement limits the potential for quantum advantage.

\begin{theorem} \label{thm:noadvquantum}
Fix an $n$-bit probability distribution $\mathcal{D}$, an $m$-qubit observable $O$ $(\norm{O}\leq 1)$ and \revision{a set $\cF$} of CPTP maps with $n$ input qubits and $m$ output qubits.
Suppose there is %an ML model in the quantum setting
a quantum ML model which accesses the map $\cE\in \cF$ $N_{\mathrm{Q}}$ times, producing with high probability a function $h_{\mathrm{Q}}(x)$ that achieves
\begin{equation}
    \sum_{x \in \{0, 1\}^n} \mathcal{D}(x) \left| h_{\mathrm{Q}}(x) - \Tr(O \mathcal{E}(\ketbra{x}{x})) \right|^2 \leq \epsilon.
\end{equation}
Then there is an ML model in the restricted classical setting which accesses $\cE$ $N_{\mathrm{C}} =\mathcal{O}(m N_{\mathrm{Q}} / \epsilon)$ times and produces with high probability a function $h_C$ that achieves% is guaranteed to achieve
\begin{equation} \label{eq:aveprederr}
    \sum_{x \in \{0, 1\}^n} \mathcal{D}(x) \left| h_{\mathrm{C}}(x) - \Tr(O \mathcal{E}(\ketbra{x}{x})) \right|^2 =\mathcal{O}(\epsilon).
\end{equation}
\end{theorem}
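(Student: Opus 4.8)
The plan is to use the hypothesized quantum ML model only indirectly: its mere existence forces the induced class of target functions $\{f_\cE : \cE\in\cF\}$, where $f_\cE(x):=\Tr(O\cE(\ketbra{x}{x}))$, to admit a small $\sqrt\epsilon$-net in $L^2(\cD)$, and then the restricted classical ML model succeeds by empirical risk minimization over that net. Throughout I use that $\|O\|\le 1$ forces $f_\cE$ and each measurement outcome $o_i$ from measuring $O$ to lie in $[-1,1]$, and that predictions may be clipped to $[-1,1]$ without increasing any error.

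Step 1: a covering-number bound $\log M = O(m N_{\mathrm{Q}})$. Take a maximal family $\cE_1,\dots,\cE_M\in\cF$ with $f_{\cE_1},\dots,f_{\cE_M}$ pairwise $L^2(\cD)$-separated by at least $3\sqrt\epsilon$; by maximality every $f_\cE$ lies within $3\sqrt\epsilon$ of some $f_{\cE_k}$, so this packing is also a net. Since the quantum ML model attains squared error $\le\epsilon$, i.e.\ $\|h_{\mathrm{Q}}-f_{\cE_j}\|_{L^2(\cD)}\le\sqrt\epsilon$ with high probability when run on $\cE_j$, rounding $h_{\mathrm{Q}}$ to the nearest net point recovers $j$ with high probability. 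Run the model on $\cE_J$ for $J$ uniform on $\{1,\dots,M\}$, let $\rho_{\mathrm{Q}}$ be the learned quantum memory, and let $\hat J$ be the rounded prediction. The maps $\mathcal{C}_i$ and the final measurement-plus-rounding are independent of $J$, so $I(J:\hat J)\le I(J:\rho_{\mathrm{Q}})$; moreover each application of $\cE\otimes\mathcal I$ acts on the memory by discarding the $n$-qubit input register (which cannot increase $I(J:\cdot)$) and adjoining an $m$-qubit output register (which increases $I(J:\cdot)$ by at most $2m$, by subadditivity of von Neumann entropy together with the Araki--Lieb inequality). Hence $I(J:\rho_{\mathrm{Q}})\le I(J:\rho_0)+2mN_{\mathrm{Q}}=2mN_{\mathrm{Q}}$, and since $\hat J$ recovers $J$ with constant probability, Fano's inequality yields $\log M=O(mN_{\mathrm{Q}})$.

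Step 2: classical empirical risk minimization. The restricted classical ML model draws $x_1,\dots,x_{N_{\mathrm{C}}}$ i.i.d.\ from $\cD$, measures $O$ on $\cE(\ketbra{x_i}{x_i})$ to obtain $o_i$ with $\E[o_i\mid x_i]=f_\cE(x_i)$, and outputs $h_{\mathrm{C}}=f_{\cE_{k^\star}}$ where $k^\star=\argmin_k\sum_i\bigl(f_{\cE_k}(x_i)-o_i\bigr)^2$ (it can construct the net and evaluate each $f_{\cE_k}$ since runtime is disregarded). Writing $a_k:=\|f_{\cE_k}-f_\cE\|_{L^2(\cD)}^2$, the net gives some $k_0$ with $a_{k_0}\le 9\epsilon$, while the variable $(f_{\cE_k}(x_i)-o_i)^2-(f_{\cE_{k_0}}(x_i)-o_i)^2$ has mean $a_k-a_{k_0}$, bounded range, and variance $O(a_k+\epsilon)$. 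A Bernstein bound with a union bound over the $M=2^{O(mN_{\mathrm{Q}})}$ candidates then shows that, for $N_{\mathrm{C}}=\Theta(mN_{\mathrm{Q}}/\epsilon)$, the minimizer obeys $a_{k^\star}=O(\epsilon)$ with high probability, which is exactly \eqref{eq:aveprederr}. It is essential that the per-sample variance is controlled by the per-sample mean: this is what upgrades the $\log M/\epsilon^2$ samples that Hoeffding would require to the claimed $\log M/\epsilon=O(mN_{\mathrm{Q}}/\epsilon)$.

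The main obstacle is the covering bound of Step 1, and within it the estimate $I(J:\rho_{\mathrm{Q}})\le 2mN_{\mathrm{Q}}$. Per-query information must be charged to the $m$ output qubits of $\cE$, not to the (unbounded) dimension of the quantum memory; this forces one to process each query in the right order --- discard the input register before adjoining the output register --- and to combine it with a clean "$\epsilon$-prediction implies packing-distinguishing" reduction that tolerates the quantum model succeeding only with high probability. The remaining ingredients are routine, if careful, applications of Fano's inequality and Bernstein-type concentration.
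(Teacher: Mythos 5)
Your proposal is correct and follows essentially the same route as the paper: a maximal packing net of the target functions that doubles as a covering net, a Fano/Holevo communication argument charging at most $O(m)$ bits of Holevo information per query (via subadditivity and Araki--Lieb, using that the marginal on the non-output registers is unchanged by the channel), and empirical risk minimization over the net with a Bernstein bound whose variance-controlled-by-mean feature yields the $\log M/\epsilon$ rather than $\log M/\epsilon^2$ sample complexity. The only deviations are cosmetic (different packing radius constants, centering the empirical loss at the best net element rather than at $f_\cE$, and a one-shot Bernstein over the joint $(x_i,o_i)$ randomness instead of the paper's two-stage conditioning), none of which changes the argument.
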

\begin{proof}[Proof sketch]
The proof consists of two parts.
First, we cover the entire set of CPTP maps $\cF$ with a maximal packing net, i.e.\ the largest subset $\mathcal{S} = \{\cE_s\}_{s=1}^{|\mathcal{S}|} \subset \cF$ such that the functions $f_{\cE_s}(x) = \Tr(O \cE_s(\ketbra{x}{x}))$ obey
$\sum_{x \in \left\{0,1\right\}^n} \mathcal{D}(x) \left| f_{\mathcal{E}_s}(x) - \newrevision{f_{\mathcal{E}_{s'}}} (x)\right|^2 >4 \epsilon$ whenever $s \neq s'$.
We then set up a communication protocol as follows. \revision{Alice chooses an element $s$ of the packing net uniformly at random, records her choice $s$, and then applies $\cE_s$ $N_Q$ times to prepare a quantum state $\rho_{\cE_s}$ as in Eq.~\eqref{eq:outputQSQML-main}. Alice's random ensemble of quantum states is thus given by
\begin{equation} \label{eq:randomensemble}
    \rho_{\cE_s} \, \mbox{ with probability } \, p_s = \tfrac{1}{|\mathcal{S}|}
    %\rho_A= \frac{1}{|\mathcal{S}|}\sum_{s=1}^{|\mathcal{S}|}\left(|s\rangle\!\langle s|\right)_C\otimes \left(\rho_{\cE_s}\right)_B.
\end{equation}
for $s = 1, \ldots, |\mathcal{S}|$.
Alice then sends the randomly sampled quantum state $\rho_{\cE_s}$ to Bob, hoping that Bob can decode the state $\rho_{\cE_s}$ to recover her chosen message $s$.} Using the quantum ML model,
%in the quantum setting,
Bob can produce the function $h_{Q, s}(x)$. Because by assumption the function $h_{Q, s}(x)$ achieves a small average-case prediction error with high probability, and because the packing net has been constructed so that the functions $\{f_{\cE_s}\}$ are sufficiently distinguishable, Bob can determine $s$ successfully with high probability. \revision{Because Alice chose from among $|\mathcal{S}|$ possible messages, the mutual information of the chosen message $s$ and Bob's measurement outcome must be at least of order $\log |\mathcal{S}|$ bits.
According to Holevo's theorem, the Holevo $\chi$ quantity of Alice's ensemble Eq.~\eqref{eq:randomensemble} upper bounds this mutual information, and therefore
%This means that the Holevo $\chi$ quantity of Alice's ensemble in
must also be $\chi=\Omega(\log |\mathcal{S})|$.}
Furthermore, we can analyze how $\chi$ depends on $N_Q$, finding that each additional application of $\cE_s$ can increase $\chi$ by at most $\mathcal{O}(m)$. We conclude that $\chi =\mathcal{O}(m N_{\mathrm{Q}})$, yielding the lower bound $N_{\mathrm{Q}} = \Omega(\log(|\mathcal{S}|) / m)$. The lower bound applies to any quantum ML model, where the size $|\mathcal{S}|$ of the packing net depends on the average-case prediction error $\epsilon$. This completes the first part of the proof.

In the second part, we explicitly construct an ML model in the restricted classical setting that achieves a small average-case prediction error using a modest number of experiments.
In this ML model, an input $x_i$ is selected by sampling from the probability distribution $\mathcal{D}$, and an experiment is performed in which the observable $O$ is measured in the output quantum state $\cE(\ketbra{x_i}{x_i})$, obtaining measurement outcome $o_i$ which has expectation value $\Tr(O \cE(\ketbra{x_i}{x_i}))$.
A total of $N_{\mathrm{C}}$ such experiments are conducted.
Then, the ML model minimizes the least-squares error to find the best fit within the aforementioned maximal packing net $\mathcal{S}$:
\begin{equation}
    h_{\mathrm{C}} = \argmin_{f \in \mathcal{S}} \frac{1}{N_{\mathrm{C}}} \sum_{i=1}^{N_{\mathrm{C}}} |f(x_i) - o_i|^2.
\end{equation}
Because the measurement outcome $o_i$ fluctuates about the expectation value of $O$, it may be impossible to achieve zero training error.
Yet it is still possible for $h_{\mathrm{C}}$ to achieve a small average-case prediction error, potentially even smaller than the training error.
We use properties of maximal packing nets %$\mathcal{S}$
and of quantum fluctuations of measurement outcomes to perform a tight statistical analysis of the average-case prediction error, finding that with high probability $\sum_{x \in \{0, 1\}^n} \mathcal{D}(x) \left| h_{\mathrm{C}}(x) - \Tr(O \mathcal{E}(\ketbra{x}{x})) \right|^2 = \mathcal{O}(\epsilon)$, provided that $N_{\mathrm{C}}$ is of order $\log(|\mathcal{S}|) / \epsilon$.

Finally, we combine the two parts to conclude $N_{\mathrm{C}} = \mathcal{O}(m N_{\mathrm{Q}} / \epsilon)$.
The full proof is in Appendix~\ref{sec:proofnoadv}.
\end{proof}

Theorem~\ref{thm:noadvquantum} shows that all problems that are approximately learnable by a quantum ML model are also approximately learnable by some restricted classical ML model which executes the quantum process $\cE$ a comparable number of times. This %extends,
applies in particular, to predicting outputs of quantum-mechanical processes.
The relation $N_{\mathrm{C}} = \mathcal{O}(m N_{\mathrm{Q}} / \epsilon)$ is tight. We give an example in Appendix~\ref{sec:satnoadv} with $N_{\mathrm{C}} = \Omega(m N_{\mathrm{Q}} / \epsilon)$.

\revision{For the task of learning classical Boolean circuits, fundamental limits on quantum advantage have been established in previous work \cite{servedio2004equivalences, zhang2010improved, arunachalam2016optimal, arunachalam2017guest, chung2019sample, arunachalam2020quantum}.  Theorem~\ref{thm:noadvquantum} generalizes these existing results to the task of learning outcomes of quantum processes.}

\section{Worst-case prediction error}
\label{sec:worstcaseprederr}

\begin{figure*}
\centering
\begin{minipage}{.33\textwidth}
    \renewcommand{\figurename}{Table}
    \begin{tabular}{c|c|c}
        Model & Upp. bd. & Low. bd. \\
        \hline
        Quantum ML & $\mathcal{O}(n)$ & $\Omega(n)$  \\
        Classical ML & $\mathcal{O}(n 2^n)$ & $ 2^{\Omega(n)}$\\
        Restricted CML & $\mathcal{O}(2^{2n})$ & $\Omega(2^{2n})$
    \end{tabular}
    \caption{Sample complexity for predicting expectations of \emph{all} $4^n$ Pauli observables (worst-case prediction error) in an $n$-qubit quantum state. \emph{Upp. bd.} is the achievable sample complexity of a specific algorithm. \emph{Low. bd.} is the lower bound for any algorithm. The classical ML upper bound can be achieved using classical shadows based on random Clifford measurements \cite{huang2020predicting}. The rest of the bounds are obtained in Appendix~\ref{app:exp-sep-Pauli}. }
    \label{tab:QSCC-samp}
\end{minipage}\,\,\,\,
\begin{minipage}{.64\textwidth}
    \renewcommand{\figurename}{Figure}
    \includegraphics[width=1.0\textwidth]{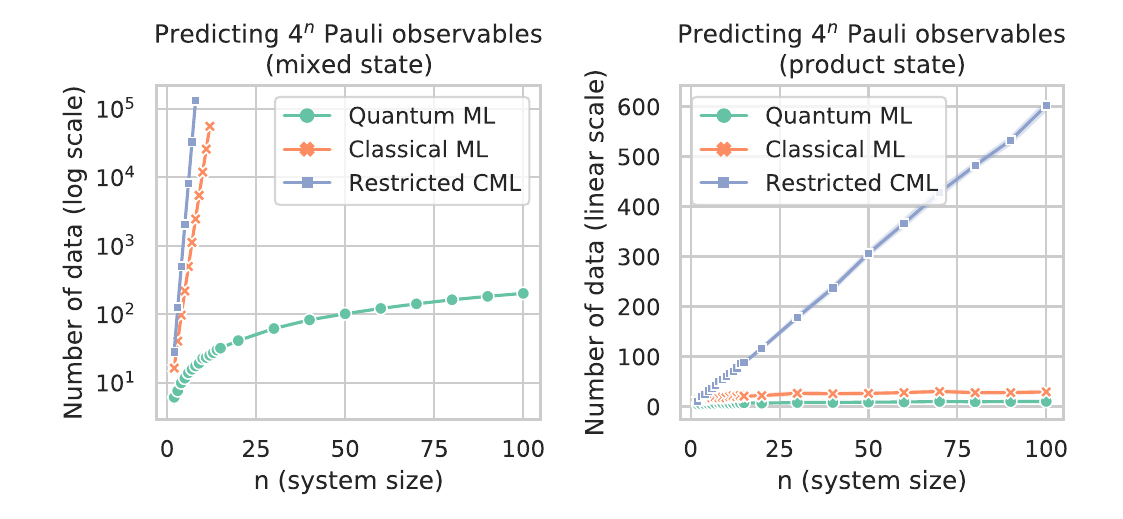}
    \caption{Numerical experiments -- number of copies of an unknown $n$-qubit state needed for predicting expectation values of all $4^n$ Pauli observables, with constant worst-case prediction error. \emph{Mixed states:} Quantum states of the form $(I + P) / 2^n$, where $P$ is an $n$-qubit Pauli observable. \emph{Product states:} Tensor products of single-qubit stabilizer states. }
    \label{fig:numeric}
\end{minipage}
\end{figure*}

Rather than achieving a small average prediction error, one may be interested in obtaining a prediction model that is accurate for all inputs $x \in \{0, 1\}^n$.
For a prediction model $h(x)$, we consider the worst-case prediction error to be
\begin{equation}
    \max_{x \in \{0, 1\}^n} |h(x) - \Tr(O \cE(\ketbra{x}{x}))|^2.
\end{equation}
Under such a stricter performance requirement, exponential quantum advantage becomes possible.

We highlight this potential by means of an illustrative and practically relevant example: predicting expectation values of Pauli operators in an unknown $n$-qubit quantum state $\rho$.
This is a central task for many quantum computing applications \cite{peruzzo2014variational, huang2019near, crawford2020efficient, huang2020predicting, kokail2019self, izmaylov2019unitary, jiang2020optimal, huggins2019efficient}.
To formulate this problem in our framework, suppose the $2n$-bit input $x$ specifies one of the $4^n$ $n$-qubit Pauli operators $P_x \in \{I, X, Y, Z\}^{\otimes n}$, and suppose that $\cE_{\rho}(|x\rangle\!\langle x|)$ prepares the unknown state $\rho$ and maps $P_x$ to the fixed observable $O$, which is then measured; hence
\begin{equation}
    f(x) = \Tr(O \cE_\rho(\ketbra{x}{x})) =
    \Tr(P_x \rho).
\end{equation}
In this setting, according to Theorem~\ref{thm:noadvquantum}, there is no large quantum advantage if our goal is to estimate the Pauli operator expectation values with a small \emph{average} prediction error.
However, an exponential quantum advantage is possible if we insist on accurately predicting \emph{every one} of the $4^n$ Pauli observables.

\revision{
First we show there is an efficient quantum ML model that achieves a small prediction error. Details are in Appendix~\ref{sec:pauli-qml}; here we just sketch the main ideas.
The procedure for predicting $\Tr(P_x \rho)$ has two stages. The goal of the first stage is to predict the absolute value $|\Tr(P_x \rho)|$ for each $x$, and the goal of the second stage is to determine the \emph{sign} of $\Tr(P_x \rho)$. The key idea used in the first stage is that, although two different Pauli operators $P_x$ and $P_y$ may either commute or anticommute, the tensor products $P_x \otimes P_x$ and $P_y \otimes P_y$ are mutually commuting for all $x$ and $y$. Therefore, although it is not possible to measure anticommuting Pauli operators simultaneously using a single copy of the state $\rho$, it is possible to measure $P_x \otimes P_x$ simultaneously for all $x$ using two copies of $\rho$. Indeed, all $4^n$ expectation values $\Tr((P_x \otimes P_x) (\rho \otimes \rho)) = \Tr(P_x \rho)^2$ can be determined by measuring pairs of qubits in the Bell basis, which is highly efficient. This completes the first stage.

If $|\Tr(P_x \rho)|$ is found to be small in the first stage, we may predict $h(x)=0$ and be assured that the prediction error is small. Therefore, in the second stage, we need only determine the sign if $|\Tr(P_x \rho)|$ was found to be reasonably large in the first stage. In that case we can perform a coherent measurement across several copies of $\rho$ which performs a majority vote and yields the correct value of the sign with high success probability.
Because the measurement is strongly biased in favor of one of the two possible outcomes, it introduces only a very ``gentle'' disturbance of the pre-measurement state.
Therefore, by performing many such measurements in succession on the same quantum memory register, we can determine the sign of $\Tr(P_x \rho)$ for many different values of $x$.
The second stage can also be more amenable to near-term implementation using a heuristic that groups commuting observables \cite{izmaylov2019unitary, huggins2019efficient}; see Appendix~\ref{sec:heuristicsNTImp} for further discussion.
Each of the two stages requires only a small number of copies of $\rho$; a careful analysis yields the following theorem.
}

\begin{theorem} \label{thm:pauli-shadow}
The quantum ML model only needs $N_{\mathrm{Q}} = \mathcal{O}(\log(M / \delta)/\epsilon^4)$
copies of $\rho$ to predict expectation values of any $M$ Pauli observables to error $\epsilon$ with probability at least $1 - \delta$.
\end{theorem}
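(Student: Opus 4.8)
The plan is to establish the two claimed copy costs for the two-stage protocol sketched above and then add them. Throughout I keep an error budget: accuracy $\epsilon/4$ for the magnitudes produced in stage~1 and spectral gap $\epsilon/4$ for the signs extracted in stage~2.

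\emph{Stage 1 (magnitudes via Bell sampling).} Split the $N_1$ copies into $N_1/2$ pairs, and on each pair $\rho\otimes\rho$ measure every one of the $n$ qubit-pairs in the Bell basis, reading off an $n$-qubit Pauli label $Q$ from the $n$ two-bit outcomes. The first thing I would verify is the Bell-sampling identity: expanding $\rho=2^{-n}\sum_R\Tr(R\rho)\,R$ in the Pauli basis and using $QRQ=\pm R$ together with $R^T=\pm R$ gives $\Pr[Q]=4^{-n}\sum_R\Tr(R\rho)^2(-1)^{|R|_Y+\langle Q,R\rangle}$, where $\langle\cdot,\cdot\rangle$ is the symplectic form and $|R|_Y$ counts the $Y$ factors; a Fourier inversion over $\langle\cdot,\cdot\rangle$ then shows that $\widehat v(P):=(-1)^{|P|_Y+\langle Q,P\rangle}\in\{-1,+1\}$ is an unbiased estimator of $\Tr(P\rho)^2$. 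Averaging $\widehat v(P_x)$ over the $N_1/2$ pairs, Hoeffding's inequality and a union bound over the $M$ targets would give $|\tilde v(P_x)-\Tr(P_x\rho)^2|\le(\epsilon/4)^2$ for all $M$ targets with high probability once $N_1=\mathcal{O}(\log(M)/\epsilon^4)$; since $|\sqrt b-\sqrt a|\le\sqrt{|a-b|}$, the quantities $\widehat a_x:=\sqrt{\max(\tilde v(P_x),0)}$ then satisfy $\big|\widehat a_x-|\Tr(P_x\rho)|\big|\le\epsilon/4$ for all $M$ targets.

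\emph{Stage 2 (signs via gentle measurements).} Store $\rho^{\otimes N_2}$ in the quantum memory. For each target with $\widehat a_x\ge\epsilon/2$ --- which forces $|\Tr(P_x\rho)|\ge\epsilon/4$ --- I would use the projective two-outcome measurement defined by the sign of the commuting empirical-mean observable $\widehat M_x=N_2^{-1}\sum_{j=1}^{N_2}P_x^{(j)}$ (optionally with a deadband at $\pm\epsilon/8$). Measuring these commuting single-copy observables on $\rho^{\otimes N_2}$ realises $N_2$ i.i.d.\ $\pm1$ samples with mean $\Tr(P_x\rho)$, so a Chernoff bound gives $\Tr\big(\Pi^{(x)}_{\mathrm{correct}}\,\rho^{\otimes N_2}\big)\ge1-\delta_0$ with $\delta_0=e^{-\Omega(N_2\epsilon^2)}$, where $\Pi^{(x)}_{\mathrm{correct}}$ projects onto $\sign(\widehat M_x)=\sign(\Tr(P_x\rho))$. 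The delicate point is that these $M$ measurements act in sequence on the \emph{same} register, so they are not independent; I would invoke the non-commutative (quantum) union bound to conclude that the probability that \emph{every} outcome equals $\sign(\Tr(P_x\rho))$ is at least $1-\mathcal{O}(\sqrt{M\delta_0})$. Taking $N_2=\mathcal{O}(\log(M)/\epsilon^2)$ makes $\delta_0$ a sufficiently small inverse polynomial in $M$, so this event has high probability and all requested signs $\widehat s_x$ are read off correctly.

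\emph{Combining.} Output $h(x)=\widehat s_x\,\widehat a_x$ when $\widehat a_x\ge\epsilon/2$ and $h(x)=0$ otherwise. In the former case the sign is correct and $|h(x)-\Tr(P_x\rho)|=\big|\widehat a_x-|\Tr(P_x\rho)|\big|\le\epsilon/4$; in the latter case $|\Tr(P_x\rho)|<3\epsilon/4$, so $|h(x)-\Tr(P_x\rho)|<3\epsilon/4$. Thus with high probability the worst-case error over the $M$ Pauli targets is at most $\epsilon$, using $N_{\mathrm{Q}}=N_1+N_2=\mathcal{O}(\log(M)/\epsilon^4)+\mathcal{O}(\log(M)/\epsilon^2)=\mathcal{O}(\log(M)/\epsilon^4)$ copies of $\rho$. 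I expect the main obstacle to be stage~2: the sign measurement must be heavily biased so that it barely disturbs the stored state, and one must then show the disturbances do not compound destructively across the $M$ successive extractions --- naive iteration of the gentle-measurement lemma blows up, and it is precisely the non-commutative union bound, together with the exponentially small per-measurement failure probability $\delta_0$, that preserves the $\log M$ scaling. Verifying the Bell-sampling identity and the variance bound in stage~1 is also required but is comparatively routine.
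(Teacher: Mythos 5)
Your proposal is correct and follows essentially the same route as the paper's proof of Theorem~\ref{thm:sampPauliQ}: qubit-wise Bell measurements on pairs $\rho\otimes\rho$ yield an unbiased $\pm1$ estimator of $|\Tr(P\rho)|^2$ (your symplectic Fourier-inversion phrasing is equivalent to the paper's observation that each Bell outcome is a $\pm1$ eigenstate of $\sigma\otimes\sigma$), Hoeffding plus a union bound and a square root give the magnitudes with $N_1=\mathcal{O}(\log(M)/\epsilon^4)$, and the signs of the large-magnitude targets are extracted by biased majority-vote measurements on $\rho^{\otimes N_2}$ whose cumulative disturbance is controlled by the quantum union bound with $N_2=\mathcal{O}(\log(M)/\epsilon^2)$. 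The only differences are immaterial constants in the error budget and the exact form of the union-bound failure probability, neither of which affects the stated scaling.
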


More details regarding the quantum ML model, as well as a rigorous proof, are provided in Appendix~\ref{sec:pauli-qml}.
\revision{The sample complexity stated in Theorem~\ref{thm:pauli-shadow} improves upon previously known shadow tomography protocols \cite{aaronson2018shadow, aaronson2019gentle, huang2020predicting, badescu2020improved} for the special case of predicting Pauli observables; see Appendix~\ref{sec:related-work}.}
Because each access to $\cE_\rho$ allows us to obtain one copy of $\rho$, we only need $N_{\mathrm{Q}} = \mathcal{O}(n)$ to predict expectation values of all $4^n$ Pauli observables up to a constant error.

For classical ML models, we prove the following fundamental lower bound; see Appendix~\ref{sec:lowerboundindmeas}. %\revision{The proof idea is that for a large set of highly incompatible observables, any POVM measurement could only extract a very small amount of information. This behavior can be seen as a new form of uncertainty principle limiting the information extraction of classical ML models. On the other hand, quantum ML models that can process multiple copies stored in a quantum memory can elude such an information constraint.}

\begin{theorem} \label{thm:cmllwbd}
Any classical ML must use $N_{\mathrm{C}} \geq 2^{\Omega(n)}$ copies of $\rho$ to predict expectation values of all Pauli observables up to a small error with a constant success probability.
\end{theorem}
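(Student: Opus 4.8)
The plan is to exhibit a family of states for which predicting \emph{every} Pauli expectation value accurately forces the learner to identify a hidden Pauli, and then to lower-bound the number of adaptive single-copy measurements that this identification requires. For a nonidentity $n$-qubit Pauli $P$, let $\rho_P=(I+P)/2^n$; this is a valid density matrix (eigenvalues $0$ and $2^{1-n}$), and $\Tr(P\rho_P)=1$ while $\Tr(Q\rho_P)=0$ for every Pauli $Q\notin\{I,P\}$. Hence, if a classical ML model consumes $N_{\mathrm{C}}$ copies of $\rho_P$ and with probability $\geq 2/3$ outputs predictions with $\max_Q|h_Q-\Tr(Q\rho_P)|<1/4$, then with probability $\geq 2/3$ the rule $\hat P=\argmax_{Q\neq I}h_Q$ recovers $P$ exactly. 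So it suffices to show: any protocol performing $N_{\mathrm{C}}$ adaptive single-copy POVM measurements on copies of $\rho_P$ that identifies a uniformly random nonidentity Pauli with probability $\geq 2/3$ needs $N_{\mathrm{C}}=2^{\Omega(n)}$.

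Such a protocol is a rooted tree whose internal nodes at depth $t$ carry a rank-one POVM $\{\beta_o\ketbra{\psi_o}{\psi_o}\}_o$ (with $\sum_o\beta_o=2^n$) selecting the child, and whose leaves $\ell$ are full transcripts; for a state $\sigma$ one has $p_\sigma(\ell)=\prod_t\beta_{o_t}\langle\psi_{o_t}|\sigma|\psi_{o_t}\rangle$, a product along the path. Write $\rho_0=I/2^n$, let $\mathcal{P}$ be the set of $4^n-1$ nonidentity Paulis, and let $\hat P(\ell)$ denote the learner's Pauli guess as a function of the transcript. Since $\tfrac1{|\mathcal{P}|}\sum_\ell p_{\rho_{\hat P(\ell)}}(\ell)\geq 2/3$ for a successful learner, and $p_{\rho_{\hat P(\ell)}}(\ell)\leq p_{\rho_0}(\ell)+\sum_{Q\in\mathcal{P}}|p_{\rho_Q}(\ell)-p_{\rho_0}(\ell)|$, summing over $\ell$ gives
\begin{equation}
\E_{P}\left\|p_{\rho_P}-p_{\rho_0}\right\|_1 \;\geq\; \tfrac{2}{3}-\tfrac{1}{4^n-1}\;\geq\;\tfrac13 .
\end{equation}
It remains to upper bound the left-hand side and show it is $o(1)$ unless $N_{\mathrm{C}}$ is exponentially large.

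The likelihood ratio factorizes: $p_{\rho_P}(\ell)/p_{\rho_0}(\ell)=\prod_t(1+\langle\psi_{o_t}|P|\psi_{o_t}\rangle)$. Using $\|p-q\|_1\leq 2\sqrt{2(1-F(p,q))}$ with $F(p,q)=\sum_\ell\sqrt{p(\ell)q(\ell)}$, we have $\|p_{\rho_P}-p_{\rho_0}\|_1\leq 2\sqrt{2(1-F_P)}$ with $F_P=\E_{\ell\sim p_{\rho_0}}\prod_t\sqrt{1+\langle\psi_{o_t}|P|\psi_{o_t}\rangle}$. I will unroll this nested expectation from the leaves to the root, tracking $\bar A_k(o_{<k})=\E_P\,\E\big[\prod_{t\geq k}\sqrt{1+\langle\psi_{o_t}|P|\psi_{o_t}\rangle}\mid o_{<k}\big]$ (inner expectation over the $\rho_0$-outcome distribution), so that $\bar A_1=\E_P F_P$. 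The two structural inputs are: (i) $\sum_o\beta_o\langle\psi_o|P|\psi_o\rangle=\Tr(P\cdot I)=0$ for $P\neq I$; and (ii) the Pauli-basis Parseval identity $\sum_Q\langle\psi|Q|\psi\rangle^2=2^n$ for every pure state, which yields $\E_{P\in\mathcal{P}}\langle\psi|P|\psi\rangle^2\leq 2^{-n}$ and $|\E_{P\in\mathcal{P}}\langle\psi|P|\psi\rangle|\leq 2^{-n/2}$, uniformly in $|\psi\rangle$. Combining (i)--(ii) with $\sqrt{1+x}\geq 1+\tfrac x2-\tfrac{x^2}{2}$ and $|\sqrt{1+x}-1|\leq|x|$ on $[-1,1]$, one checks that at every node, for every fixed child state $|\psi\rangle$ and every $[0,1]$-valued function $A$ of $P$,
\begin{equation}
\E_P\!\left[\sqrt{1+\langle\psi|P|\psi\rangle}\;A\right]\;\geq\;(1-2^{-n/2})\,\E_P[A]\;-\;2^{-n/2},
\end{equation}
by separating off $\E_P[\sqrt{1+\langle\psi|P|\psi\rangle}]\geq 1-2^{-n/2}$ and bounding the residual covariance by $\sqrt{\mathrm{Var}_P(\sqrt{1+\langle\psi|P|\psi\rangle})}\leq 2^{-n/2}$. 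Iterating this across the $N_{\mathrm{C}}$ levels of the tree gives $\bar A_1\geq 2(1-2^{-n/2})^{N_{\mathrm{C}}}-1$, hence
\begin{equation}
\E_P\|p_{\rho_P}-p_{\rho_0}\|_1\;\leq\;2\sqrt{2\,\E_P(1-F_P)}\;\leq\;4\sqrt{1-(1-2^{-n/2})^{N_{\mathrm{C}}}}\;\leq\;4\sqrt{N_{\mathrm{C}}\,2^{-n/2}} .
\end{equation}
Against the bound $\geq 1/3$ above this forces $N_{\mathrm{C}}\geq 2^{n/2}/144=2^{\Omega(n)}$.

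The main obstacle is exactly the tree unrolling while averaging over the single shared hidden Pauli: the POVMs and the overlaps $\langle\psi|P|\psi\rangle$ at depth $t$ depend on all earlier outcomes and on $P$, so $\E_P$ cannot be pushed through the product, and naive pairwise ($\chi^2$- or KL-based) chain-rule bounds blow up by a factor $2^{N_{\mathrm{C}}}$. The fix is to run the recursion on the $P$-averaged partial fidelity and, at each node, to control the covariance between $\sqrt{1+\langle\psi|P|\psi\rangle}$ and the remaining deeper $[0,1]$-valued factor using only the uniform second-moment estimate $\E_{P\in\mathcal{P}}\langle\psi|P|\psi\rangle^2\leq 2^{-n}$ coming from Parseval over the Pauli basis; this makes each extra single-copy measurement worth only a multiplicative $(1+O(2^{-n/2}))$ factor and produces the exponential bound. (More careful bookkeeping of the deeper factor's variance should push the exponent toward the $\Omega(2^n/\mathrm{poly}(n))$ suggested by the classical upper bound, but $2^{\Omega(n)}$ already suffices for the statement.) Everything else is routine: the reduction in the first paragraph is immediate from $\Tr(Q\rho_P)=\delta_{QP}$ on $\mathcal{P}$, and the displayed Le Cam-type inequality is a one-line averaging over the hidden Pauli.
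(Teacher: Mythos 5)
Your proof is correct, and the core technical step takes a genuinely different route from the paper's. The paper reduces to a \emph{binary} hypothesis test (Alice sends either the maximally mixed state or a uniformly random $\rho_a=(I+P_a)/2^n$) and bounds the total-variation distance between the two transcript distributions directly, via a counting argument: Parseval forces all but an $N(2^n+1)^{-1/3}$ fraction of Paulis to have overlap at most $(2^n+1)^{-1/3}$ with \emph{every} rank-one measurement vector along a given transcript, so for those ``good'' Paulis the likelihood ratio $\prod_t(1+\langle\psi_{o_t}|P_a|\psi_{o_t}\rangle)$ stays near $1$, while the remaining Paulis are handled with the trivial bound. This yields $N=\Omega(2^{n/3})$. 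You instead keep the full $(4^n-1)$-ary identification problem, pass to $\E_P\|p_{\rho_P}-p_{\rho_0}\|_1$ by a Le Cam-type averaging, convert to fidelity, and run a leaf-to-root recursion in which each level contributes only a multiplicative $(1-2^{-n/2})$ loss plus an additive $2^{-n/2}$, controlled by the covariance between $\sqrt{1+\langle\psi|P|\psi\rangle}$ and the remaining subtree factor. Both arguments rest on the same structural fact, $\E_{P}\langle\psi|P|\psi\rangle^2\le(2^n+1)^{-1}$ from Parseval over the Pauli basis; but your per-level multiplicative accounting avoids the union-bound-over-levels loss in the paper's good-set argument, which is why your exponent improves to $\Omega(2^{n/2})$ (the paper explicitly leaves such sharpening to future work). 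The one step you should make explicit is why the ``remaining deeper factor'' $A(P)$ is $[0,1]$-valued: for fixed $P$ and fixed transcript prefix it equals $\sum_{o_{>k}}\sqrt{q(o_{>k})\,p_P(o_{>k})}$, the Bhattacharyya coefficient between the conditional transcript distributions under $I/2^n$ and under $\rho_P$, hence lies in $[0,1]$ by Cauchy--Schwarz; this is exactly what licenses the $\Var_P(A)\le 1/4$ bound at every level of the recursion.
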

This theorem holds even when the POVM measurements performed by the classical ML model could depend on the previous POVM measurement outcomes adaptively.
When combined with  Theorem~\ref{thm:pauli-shadow}, Theorem~\ref{thm:cmllwbd} establishes an exponential gap separating classical ML models from fully quantum ML models.
Table~\ref{tab:QSCC-samp} provides a summary of the upper and lower bounds on the sample complexity for predicting expectation values of Pauli observables.

\section{Numerical experiments}

We support our theoretical findings with numerical experiments,
focusing on the task of predicting the expectation values of all $4^n$ Pauli observables in an unknown $n$-qubit quantum state $\rho$, with a small worst-case prediction error.
In this case, the function is $f(x) = \Tr(O \cE_{\rho}(\ketbra{x}{x})) = \Tr(P_x \rho)$, where $x \in \{I, X, Y, Z\}^n$ indexes the Pauli observables, and $\cE_{\rho}$ prepares the unknown state $\rho$ then maps $P_x$ to the fixed observable $O$.
This is the task we considered in Section~\ref{sec:worstcaseprederr}.
Note that average-case prediction of Pauli observables is a much easier task, because most of the $4^n$ expectation values are exponentially small in $n$.

We consider two classes of underlying states $\rho$:
(i) \emph{Mixed states:}  $\rho =(I + P) / 2^n$, where $P$ is a tensor product of $n$ Pauli operators. States in this class have rank $2^{n-1}$.
(ii) \emph{Product states:} $\rho = \bigotimes_{i=1}^n |s_i \rangle \! \langle s_i|$,
where each $|s_i \rangle$ is one the six possible single-qubit stabilizer states.
We consider stabilizer states to ensure that classical simulation of the quantum ML model is tractable for reasonably large system size.

The numerical experiment in Figure~\ref{fig:numeric} implements the best-known ML procedures.
We can clearly see that there is an exponential separation between the number of copies of the state $\rho$ required for classical and quantum ML to predict expectation values when $\rho$ is in the class of mixed states.
However, for the class of product states, the separation is much less pronounced. Restricted classical ML can only obtain outcomes $o_i \in \{\pm 1\}$ with $\E[o_i] = \Tr(P_{x_i} \rho)$. Hence each copy of $\rho$ provides at most one bit of information, and therefore $\mathcal{O}(n)$ copies are needed to predict expectation values of all $4^n$ Pauli observables. In contrast, standard classical ML can perform arbitrary POVM measurements on the state $\rho$, so each copy can provide up to $n$ bits of information. The separation between classical ML and quantum ML is marginal for product states.

\vspace{-0.5em}
\section{Conclusion and outlook}
\vspace{-0.5em}

\revision{We have studied the task of learning functions of the form Equation~\eqref{eq:target-function}, using as a figure of merit the number of runs of $\cE$.
%We compared a classical setting, in which a measurement is performed after each query to $\cE$, and a quantum setting, in which multiple queries to $\cE$ may be included in a single coherent quantum circuit.
Our main result Theorem \ref{thm:noadvquantum} shows that, when the objective is achieving a specified \emph{average} prediction error, a classical ML model can perform as well as a quantum ML model, using a comparable number of runs of $\cE$.
This result establishes a fundamental limit on quantum advantage in machine learning that holds for any quantum ML model \cite{havlivcek2019supervised, schuld2019quantum, huang2020power}.

From a different perspective, Theorem \ref{thm:noadvquantum} means that the classical ML setting, in which a measurement is performed after each query to $\cE$, can be surprisingly effective. The quantum ML setting, in which multiple queries to $\cE$ can be included in a single coherent quantum circuit, is far more challenging and may be infeasible until far in the future.
Therefore finding that classical and quantum ML have comparable power (for average-case prediction) boosts our hopes that the combination of classical ML and near-term quantum algorithms \cite{preskill2018quantum, huang2019near, havlivcek2019supervised, schuld2019quantum, huang2020power} may fruitfully address challenging quantum problems in physics, chemistry, and materials science.}

On the other hand, Theorem \ref{thm:pauli-shadow}~and~\ref{thm:cmllwbd} rigorously establish that quantum ML can have an exponential advantage over classical ML for certain problems where the objective is achieving a specified \emph{worst-case} prediction error.
\newrevision{
This exponential advantage of quantum ML over classical ML may be viewed as an exponential separation between coherent measurements (in which a measurement apparatus interacts coherently multiple times with a measured system, storing quantum data which is then processed by a quantum computer) and incoherent measurements (in which a POVM measurement is performed and the outcome recorded after each interaction between system and apparatus, and the classical measurement outcomes are then processed by a classical computer).
%(in which a measurement apparatus interacts coherently, storing quantum data which is then processed by a quantum computer) and incoherent measurements (in which a POVM measurement is performed, and the classical measurement outcomes are then processed by a classical computer).
Such a separation has been challenging to establish because incoherent measurements are difficult to analyze % due to the difficulty of analyzing incoherent measurements
in the adaptive setting, where each measurement performed may depend on the outcomes of all previous measurements.
%adaptive POVM measurement procedures.
Our proof technique overcomes this challenge,
%and enables the identification of
enabling us to identify tasks which allow substantial quantum advantage.
%An important future direction is identifying further tasks with quantum advantage, pointing toward potential practical applications with quantum technology.
An important future direction will be identifying further learning problems which allow substantial quantum advantage, pointing toward potential practical applications of quantum technology.}

\vspace{-1em}
\subsection*{Acknowledgments:}
\vspace{-0.5em}
The authors thank Victor Albert, Sitan Chen, Jerry Li, Seth Lloyd, Jarrod McClean, Spiros Michalakis, Yuan Su, and Thomas Vidick for valuable input and inspiring discussions.
We would also like to thank anonymous reviewers for in-depth comments and suggestions.
HH is supported by the J. Yang \& Family Foundation.
%RK acknowledges funding provided by the Office of Naval Research (Award N00014-17-1-2146) and the Army Research Office (Award W911NF121054).
JP acknowledges funding from  the U.S. Department of Energy Office of Science, Office of Advanced Scientific Computing Research, (DE-NA0003525, DE-SC0020290), and the National Science Foundation (PHY-1733907). The Institute for Quantum Information and Matter is an NSF Physics Frontiers Center.

\bibliography{ref}
\bibliographystyle{abbrv}

\onecolumngrid
\newpage
\appendix

\section*{\Large Appendix}

\paragraph*{\textbf{Roadmap:}}
Appendix~\ref{sec:related-work} provides additional context and discusses relevant existing work. Details regarding the numerical experiments can be found in Section~\ref{sec:numerics}. The remaining portions are devoted to theory and mathematical proofs. Appendix~\ref{sec:proofnoadv} provides a thorough treatment of average prediction errors, including bounds on query complexity (the number times the quantum process $\cE$ is accessed) culminating in a proof of Theorem~\ref{thm:noadvquantum}. Appendix~\ref{sec:satnoadv} provides a stylized example demonstrating that the bound is tight.
Finally, Appendix~\ref{app:exp-sep-Pauli} provides sample complexity upper and lower bounds for predicting many Pauli expectation values with small worst-case error (leading to exponential separation in query complexity).

\section{Related works} \label{sec:related-work}

\paragraph{Quantum PAC learning:}

It is instructive to relate our main result on achieving average-case prediction error (Theorem~\ref{thm:noadvquantum}) to quantum probably approximately correct (PAC) learning \cite{servedio2004equivalences, zhang2010improved, arunachalam2016optimal, arunachalam2017guest, chung2019sample, arunachalam2020quantum, caro2020binary}. The latter rigorously established the absence of information-theoretic quantum advantage for learning classical Boolean functions $h: \{0, 1\}^n \rightarrow \{0, 1\}$.
This is a special case of predicting functions of the form $f(x)=\Tr \left( O \cE(\ketbra{x}{x})\right)$.
To see this, we reversibly encode every $n$-bit Boolean function $h$ in a (unitary) CPTP map that acts on $(n+1)$ qubits:
\begin{equation}
    \cE_h(\rho) = U_h \rho U_h^\dagger, \quad \text{where} \quad  U_h\ket{x, a} = \ket{x, h(x) \oplus a} \quad \text{for all} \quad  x \in \{0, 1\}^n, a \in \{0, 1\},
\end{equation}
where $\oplus$ is addition in $\mathbb{Z}_2$.
This is the quantum oracle for implementing the Boolean function $h$.
Fix $O=\mathbb{I}^{\otimes n} \otimes Z=Z_{n+1}$ --- a Pauli-$Z$ operator acting on the final qubit --- to conclude
 $\Tr(O \cE_h(\ketbra{x}{x})) = h(x)$.
In quantum PAC learning, the quantum ML models (or quantum learners) are restricted to quantum samples of the form $\sum_x \sqrt{\mathcal{D}(x)} \ket{x} \ket{h(x)}$, where $\mathcal{D}(x)$ is the probability for sampling $x$ in the input distribution $\mathcal{D}$ and $h(x)$ is the Boolean function in question.
Furthermore, quantum PAC learning focuses on the worst-case input distribution $\mathcal{D}_\star$.
This leaves open the potential for large quantum advantages in more general settings. For instance, the quantum ML could be allowed to access the CPTP map $\cE_h$ or we may want to consider an input distribution with the largest classical-quantum separation.
Theorem~\ref{thm:noadvquantum} closes these open questions showing that no large quantum advantage in sample complexity (or query complexity) is possible even in the much more general setting of learning $f(x) = \Tr(O \cE(\ketbra{x}{x}))$.
Note that direct access to $\cE_h$ would also enable the construction of quantum PAC samples  (if we assume the quantum ML model depends on the fixed input distribution $\cD$). The quantum ML model simply inputs the pure state $\sum_x \sqrt{\mathcal{D}(x)} \ket{x} \ket{0}$ to $\cE_h$ to obtain
\begin{equation}
    U_h \sum_x \sqrt{\mathcal{D}(x)} \ket{x} \ket{0} = \sum_x \sqrt{\mathcal{D}(x)} \ket{x} \ket{h(x)},
\end{equation}
which is the quantum sample considered in quantum PAC learning.
For related works on PAC learning a distribution rather than a function, see \cite{sweke2020quantum, niu2020learnability}.

While existing results \cite{servedio2004equivalences, zhang2010improved, arunachalam2016optimal, arunachalam2017guest, chung2019sample, arunachalam2020quantum, caro2020binary} have shown that no large quantum advantage in sample complexity is possible, we can still have substantial quantum speedups in computational complexity.
In Ref.~\cite{servedio2004equivalences}, for instance, a contrived learning problem is constructed based on factoring. This work showcases the possibility of a quantum advantage in computational complexity, even if no quantum advantage in sample complexity is possible.
\newrevision{
On the other hand, exponential separation in sample complexity is possible if one considers worst-case approximation errors.
}

\paragraph{Quantum machine learning:}

Quantum computers have the potential to improve existing machine learning models based on classical computers.
A series of works require that an exponential amount of data is stored in a quantum random access memory \cite{biamonte2017quantum, harrow2009quantum, rebentrost2014quantum, lloyd2014quantum, lloyd2013quantum, wiebe2012quantum, dunjko2018machine, huang2019near}.
Due to the quantum data structure, one can obtain exponential speed-up for certain machine learning tasks.
However, the act of storing the exponential amount of data will take exponential time. Furthermore, if we assume a similar data structure for classical machines, then the exponential advantage may
 %not be available
vanish altogether \cite{aaronson2015read, tang2019quantum, tang2018quantum, gilyen2018quantum, arrazola2019quantum}.
Another recent line of works focuses on using quantum computers to represent a set of classically intractable functions \cite{farhi2018classification, havlivcek2019supervised, schuld2019quantum, cong2019quantum}.  However, computational power provided by data in a machine learning task can also make a classical ML model stronger \cite{huang2020power} and may challenge some of these proposals in practice.
When we consider learning an arbitrary unitary, due to the inherent complexity of the unitary group, the sample complexity is going to scale with the Hilbert space dimension (exponential in the system size) \cite{poland2020no, sharma2020reformulation}.

\revision{
\paragraph{Classical learning theory:}

For a general introduction to classical learning theory, see the comprehensive book on the foundations of machine learning \cite{mohri2018foundations}.
The classical learning strategy is the same as learning probabilistic real-valued functions.
Existing works on learning real-valued functions \cite{alon1997scale, bartlett1998prediction}, or probabilistic Boolean functions \cite{kearns1994efficient}, mainly focus on the worst-case input distribution
that maximizes sample complexity. Geometric quantities, such as the fat-shattering dimension \cite{alon1997scale}, then provide a characterization of the sample complexity \cite{alon1997scale}.
For example, \cite{bartlett1998prediction} proves that a (worst-case) sample complexity upper bound is given by $\tilde{\mathcal{O}}(\mathrm{fat(\epsilon / 5) / \epsilon^2})$, where $\tilde{\mathcal{O}}(\cdot)$ suppresses logarithmic factors.
In contrast, our proof of Theorem~\ref{thm:noadvquantum} produces a sample complexity upper bound $\mathcal{O}\left(\log(|M^p_{4\epsilon}(\cF_f)|) / \epsilon \right))$ that scales with $1/\epsilon$ instead of $1/\epsilon^2$. The geometric constant $\log(|M^p_{4\epsilon}(\cF_f)|)$ is also different (it measures the cardinality of a maximal packing net that depends on the input distribution).
%In our proof of Theorem~\ref{thm:noadvquantum}, the classical upper bound is based on a different geometric constant $\log(|M^p_{4\epsilon}(\cF_f)|)$. Furthermore, due to the statistical structure of the quantum measurement noise and properties of the maximal packing net $M^p_{4\epsilon}(\cF_f)$ that depends on the input distribution $\mathcal{D}$, we obtain a sample complexity upper bound $\mathcal{O}\left(\log(|M^p_{4\epsilon}(\cF_f)|) / \epsilon \right))$ that only has a dependence on $1 / \epsilon$ instead of $1 / \epsilon^2$.
}

\paragraph{Shadow tomography:}

Shadow tomography is the task of simultaneously estimating the outcome probabilities associated with $M$ 2-outcome measurements up to accuracy $\epsilon$: $p_i (\rho) = \mathrm{tr}(E_i \rho)$, where each $E_i$ is a positive semi-definite matrix with operator norm at most one \cite{aaronson2018shadow,brandao2019sdp,aaronson2019gentle,badescu2020improved}.
The best existing result is given by \cite{badescu2020improved}. They showed that
$
N= \mathcal{O} \left(\log (M)^2 \log (d) /\epsilon^4 \right)
$ copies of the unknown state suffice to achieve this task.
Their protocol is based on an improved quantum threshold search: finding an observable $E_i$ with the expectation value $\Tr(E_i \rho)$ exceeding a certain threshold. Then, they combine this with online learning of quantum states following Aaronson's original protocol. \cite{aaronson2018shadow}.
A more experimentally friendly shadow tomography protocol has been proposed in \cite{huang2020predicting}. It only yields competitive scaling complexities for a restricted set of observables, but can be
implemented on state-of-the-art quantum platforms \cite{struchalin2020experimental,elben2020mixed}.

\revision{\paragraph{PAC learning quantum systems:}

A precursor to shadow tomography is PAC learning of quantum states \cite{aaronson2007learnability, rocchetto2019experimental}, where the goal is to accurately predict expectation values of different observables in an unknown quantum system $\rho$ up to a small average error.
This fits nicely into the scope of Theorem~\ref{thm:noadvquantum}: The optimal fully quantum ML model that can perform quantum data analysis on many copies stored in the quantum memory will not yield a large advantage in sample complexity over ML models that make predictions based solely on classical measurement data from randomized measurements on single copy of $\rho$.
This separation between learning from classical measurement data and learning from the quantum states coherently has not been discussed in \cite{aaronson2007learnability}.
The result \cite{aaronson2007learnability} could be seen as establishing an upper bound on the maximal packing net for the set of CPTP maps $\cF$. This then translates into a sample complexity upper bound needed to achieve good prediction performance.}

%In, they provide a shadow tomography protocol for a restricted set of observables.

\paragraph{Measuring expectation values of Pauli observables:}

Due to the importance of measuring Pauli observables in near-term applications of quantum computers,
a series of methods \cite{crawford2020efficient, izmaylov2019unitary, verteletskyi2020measurement, hamamura2020efficient, jiang2020optimal, huggins2019efficient, bonet2020nearly} have been proposed to reduce the number of measurements needed to estimate
Pauli expectation values.
All of them are based on one basic, yet powerful, observation: Commuting observables can be measured simultaneously.
For example, quantum chemistry applications are often contingent on measuring
 $\mathcal{O}(n^4)$ Pauli observables \cite{peruzzo2014variational}.
The aforementioned Pauli estimation protocols group these observables into $\mathcal{O}(n^3)$ or even only $\mathcal{O}(n)$ commuting groups.
In turn,  $\mathcal{O}(n^3)$ or $\mathcal{O}(n)$ copies of the underlying state suffice to  obtain expectation values for all  $\mathcal{O}(n^4)$ relevant Pauli observables by exploiting the ability to simultaneously measure commuting observables.
On the other hand, the novel technique proposed in Appendix~\ref{app:exp-sep-Pauli}
gets by with even fewer state preparations. A total of $\mathcal{O}(\log(n^4)) = \mathcal{O}(\log(n))$ copies suffice.
Restriction to few-body Pauli observables can yield additional improvements. Several protocols are known for this special case, see e.g.\
 \cite{paini2019approximate, cotler2019quantum, bonet2020nearly, jiang2020optimal, evans2019scalable, huang2020predicting}. %Some of them can  achieve logarithmic amount of data in the few-body case.

\revision{\paragraph{Incoherent versus coherent measurements:}

The exponential advantage of quantum ML over classical ML established by our work may be viewed as an exponential separation between coherent measurements (in which a measurement apparatus interacts coherently multiple times with a measured system, storing quantum data which is then processed by a quantum computer) and
incoherent measurements (in which a POVM measurement is performed and the outcome recorded after each interaction between system and apparatus, and the classical measurement outcomes are then processed by a classical computer).
Existing work has shown an
%There have been results on
advantage of coherent measurements over \emph{independent} incoherent measurements, a special case in which the POVM measurements do not depend on the results of previous measurements; see e.g., \cite{haah2017sample} on quantum state tomography and \cite{huang2020predicting} on shadow tomography.
However, few prior results limit the power of incoherent measurements in the \emph{adaptive} setting, in which each measurement performed may depend on the outcomes of all the previous measurements.
%the limit of incoherent measurements under the general setting due to the difficulty of dealing with adaptivity, because the next POVM measurement we perform can depend on all the previous measurement outcomes.

A prior result we were aware of before obtaining our result was \cite{bubeck2020entanglement}, showing a mild polynomial advantage of coherent over incoherent measurements for the task of distinguishing whether an unknown quantum state is close to the completely mixed state or not.
Our work established an exponential separation between incoherent and coherent measurements in sample complexity (the number of copies of a quantum state needed to perform the task) for shadow tomography.
After our work was complete, \cite{aharonov2021quantum} also presented a detailed analysis of the power of coherent and incoherent measurements, finding
%The results in \cite{aharonov2021quantum} gives
an exponential separation between incoherent and coherent measurements for the task of distinguishing between different types of quantum channels.}

\section{Details of numerical experiments} \label{sec:numerics}

\subsection{Mixed states}

For the case of mixed states given by $\rho = (I + P_{x^*}) / 2^n$, we have $f(x) = \Tr(P_x \rho)= \delta_{x,x^*}$ for all $x \in \{I, X, Y, Z\}^n$. That is, $f (x)$ is a point function, i.e.\ $f(x^*) =1$ for exactly one $x^*$, all other Pauli strings evaluate to zero.

\paragraph{Restricted classical ML:}

We consider a restricted classical ML model that implements an exhaustive search over all Pauli observables $P_x$ with $x \in \left\{I,X,Y,Z\right\}^n$.
For each $x$, we repeatedly measure the observable $P_x$ and check whether the outcome $-1$  ever occurs. If this is the case, we know $f(x)=0$ with certainty (recall, that $f(x) \in \left\{0,1\right\}$ is a point function). After looping through all observables, we will be left with a single input $x^*$ that obeys $f(x^*)=1$. There are a total of $4^n$ inputs and whenever $x \neq x^*$, the expected number of measurements required to obtain the outcome $-1$ is $1/(\mathrm{Pr}_x[-1])=2$. In contrast, for $x=x^*$, outcome $-1$ can never occur. This results in a sample complexity of $\mathcal{O}(4^n)$ -- a scaling that is confirmed by our numerical experiments and matches the lower bound $\Omega(4^n)$.

\paragraph{Classical ML:}

For classical ML, we implement property prediction with classical shadows based on random Clifford measurements~
%follow exactly the same procedure as classical shadows with random Clifford measurements
\cite{huang2020predicting}. The sample complexity to
predict $M$ Pauli observables up to small constant accuracy is known to be $\mathcal{O}(\max_{x} \Tr(P_x^2) \log(M))$. Using $\Tr (P_x^2)=\Tr (I^{\otimes n})=2^n$ and $M=4^n$, this upper bound simplifies to $\mathcal{O}(n 2^n)$. The numerical experiments confirm this theoretical prediction.

\paragraph{Quantum ML:}

For quantum ML, we use the procedure introduced in Appendix~\ref{sec:pauli-qml}.
We first perform several repetitions of two-copy Bell basis measurements to estimate absolute values $|\Tr(P_x \rho)|^2$ for all $4^n$ possible inputs $x$.
This allows us to immediately identify $x^*$.
One could solve for $x^*$ by performing Gaussian elimination in $GF(2)$. A similar strategy has also been used for learning quantum channels \cite{harper2020fast}.
The required sample complexity is $\mathcal{O}(\log(4^n)) = \mathcal{O}(n)$ and the numerical experiments confirm this linear scaling.

\subsection{Product states}

We consider the unknown $n$-qubit state to be a tensor product of $n$ single-qubit stabilizer states (there are only six choices $\ket{0}, \ket{1}, \ket{+}, \ket{-}, \ket{y, +}, \ket{y, -}$).
We only need to obtain measurement outcomes for single-qubit Pauli observables to completely determine such product states.

\paragraph{Restricted classical ML:}
Recall that the goal of this task is to predict $f(x) = \Tr(P_x \rho)$ for an unknown quantum state $\rho$.
The restricted classical ML can collect measurement data of the form $\{(x_i, o_i)\}$, where $x_i \in \{I, X, Y, Z\}^n$ and $o_i \in \pm 1$ is the measurement outcome when we measure $P_x$ on $\rho$.
We simply collect measurement outcomes for the $3n$ single-qubit Pauli observables by choosing the appropriate $x_i$. For each qubit, we
sample $\pm 1$-outcomes in the $X$-, $Y$- and $Z$-basis. Once we find that two of the three Pauli observables result in both the outcome $\pm 1$, we can determine the single-qubit state for the particular qubit. For two of the three bases, the underlying distribution is uniform, while deterministic outcomes are produced in the third basis.
In turn, we expect to require $6n$ Pauli measurements to unambiguously characterize the underlying stabilizer state. This scaling is confirmed by the numerical experiments.

\paragraph{Classical ML:}
We associate classical ML models with reconstruction procedures that can perform arbitrary POVM measurements on individual copies of the unknown state $\rho$.
Here, we consider a sequence of POVMs where we measure first in the all-$X$ basis, second in the all-$Y$ basis, and then in the all-$Z$ basis (and repeat).
Similar to the restricted classical ML, we also check if two of the three possible single-qubit observables $X, Y, Z$ have resulted in both outcomes $\pm 1$.
Once two of the three single-qubit observables have produce both outcomes $\pm 1$, we can perfectly identify the corresponding single-qubit stabilizer state.
But, to complete the assignment, we need to be able to do so for all $n$ qubits. This incurs an additional logarithmic factor in the expected number of measurements. We expect to require $\mathcal{O}(\log(n))$ measurement repetitions to unambiguously determine the underlying product state.
Numerical experiments confirm this scaling behavior.

\paragraph{Quantum ML:}
Quantum ML models can perform quantum data processing on multiple copies of quantum states. For product states, we only perform the following procedure.
For each repetition, we perform two-copy Bell basis measurements on $\rho \otimes \rho$ to simultaneously measure $X\otimes X, Y \otimes Y, Z \otimes Z$ on each of the $n$ qubits in $\rho$.
We can determine the eigenbasis (X-basis, Y-basis, or Z-basis) of each qubit when we found that two of the observables $X\otimes X, Y \otimes Y, Z \otimes Z$ result in both $\pm 1$ outcomes.
For two of the three observables, the underlying distribution is uniform in $\pm 1$, while deterministic outcomes are produced in the third observable.
Since we know the stabilizer bases for each qubit after a few two-copy measurements, we simply measure the $n$ qubits in the corresponding stabilizer basis on the final copy to recover the full state.
We refer to \cite{gross2017schur, montanaro2017learning,zhao2016fast,montanaro2016survey} for results on efficient procedures for testing and learning stabilizer states in general.

\section{Proof of Theorem~\ref{thm:noadvquantum}}\label{sec:proofnoadv}

This section contains a thorough treatment of \emph{average} prediction errors. We consider related setups for the classical and quantum learning settings.

The learning problem is defined by a set of CPTP maps $\cF$, an input distribution $\mathcal{D}$, and an observable $O$ with $\norm{O} \leq 1$.
Each CPTP map $\cE \in \cF$ maps a $n$-qubit quantum state to $m$-qubit state. This collection defines a function
\begin{equation}
    f_{\cE}(x) = \Tr(O \cE(\lvert x \rangle \! \langle x \rvert)): \{0, 1\}^n \rightarrow \mathbb{R}
\end{equation}
The goal is to learn a function $f: \{0, 1\}^n \rightarrow \mathbb{R}$
such that with high probability
\begin{equation}
    \mathbb{E}_{x \sim \mathcal{D}} |f(x) - f_{\cE}(x)|^2 \quad \text{is small}.
\end{equation}
A bit of additional context is appropriate here:
we are studying the existence of learning algorithms under a fixed learning problem defined by input distribution $\cD$, observable $O,$ and set of CPTP maps $\cF$.
In turn, the actual learning algorithms may and, in general, will depend on these %potentially intricate
mathematical objects.

One of our main technical contributions -- Theorem~\ref{thm:noadvquantum} -- highlights that a substantial quantum advantage is impossible for this setting (small \emph{average-case} prediction error). This is in stark contrast to the setting of achieving small \emph{worst-case} prediction error.
The proof consists of two parts.
Section~\ref{sec:lowerbdQU} establishes a lower bound for the query complexity of \emph{any} quantum ML model.
Subsequently, Section~\ref{sec:uppbdCL} provides an upper bound for the query complexity achieved by \emph{certain} classical ML models.
Finally, a combination of these two results establishes Theorem~\ref{thm:noadvquantum}, see  Section~\ref{sec:combineCLQU}.

\subsection{Information-theoretic lower bound for quantum machine learning models} \label{sec:lowerbdQU}

The quantum machine learning model consists of learning phase and prediction phase. In the learning phase, the quantum ML model accesses the quantum experiment characterized by the CPTP map $\cE$ for $N_{\mathrm{Q}}$ times to learn a model.
We consider the quantum ML model to be a mixed state quantum computation algorithm (a generalization of unitary quantum computation). Starting point is an initial state $\rho_0$ on any number of qubits. Subsequently, arbitrary quantum operations $\mathcal{C}_t$ (CPTP maps) are interleaved with in total $N_{\mathrm{Q}}$ invocations $\mathcal{E} \otimes \mathcal{I}$ of the unknown (black box) CPTP map and produce a final state
\revision{
\begin{equation} \label{eq:outputQSQML}
    \rho_{\cE} = \mathcal{C}_{N_{\mathrm{Q}}} (\cE \otimes \mathcal{I}) \mathcal{C}_{N_{\mathrm{Q}}-1} \ldots \mathcal{C}_1 (\cE \otimes \mathcal{I}) (\rho_0).
\end{equation}
}
In this model, we can assume without loss that $\mathcal{E}$ always acts on the first $n$ qubits, because the quantum operations $\mathcal{C}_t$ are unrestricted. In particular, they could contain certain SWAP operations that permute the qubits around.
The final state $\rho_{\cE}$ is the quantum memory that stores the prediction model learned from the CPTP map $\cE$ using the quantum ML algorithm. Obtaining $\rho_{\cE}$ concludes the quantum learning phase.

In the prediction phase,
we assume that new inputs are provided as part of a sequence
\begin{equation}
    x_1, x_2, x_3, \ldots \in \{0, 1\}^n.
\end{equation}
For each sequence member $x_i$, the quantum ML model accesses the input $x_i$, as well as the current quantum memory. %$\rho_{\cE,i}$.
It produces an outcome by performing a POVM measurement on the quantum memory $\rho_{\cE}$. We emphasize that this can, and in general will, affect the quantum memory nontrivially. The quantum ML outputs $h_{\mathrm{Q}}(x_i)$ depend on the entire sequence $x_1,\ldots,x_i$. And different sequence orderings will produce different predictions.
%(starts with $\rho_{\cE}$, but will likely change after the first input due to measurement disturbance) to produce a real number $h_{\mathrm{Q}}(x_i)$ through a POVM measurement.
%Depending on the order of the inputs, $h_{\mathrm{Q}}(x_i)$ will be different.
For example, when $n=2$, the following ordering may result in the prediction
\begin{equation}
    x_1 = 00, x_2 = 01, x_3 = 10, x_4 = 11 \,\, \rightarrow \,\, h_{\mathrm{Q}}(00) = -0.3, h_{\mathrm{Q}}(01) = 0.5, h_{\mathrm{Q}}(10) = 0.2, h_{\mathrm{Q}}(11) = -0.7,
\end{equation}
but a different ordering may result in a slightly different prediction, such as
\begin{equation}
    x_1 = 11, x_2 = 01, x_3 = 10, x_4 = 00 \,\, \rightarrow  \,\, h_{\mathrm{Q}}(00) = -0.2, h_{\mathrm{Q}}(01) = 0.5, h_{\mathrm{Q}}(10) = 0.2, h_{\mathrm{Q}}(11) = -0.6.
\end{equation}
Also, note that $h_{\mathrm{Q}}(x_i)$ can be randomized because a quantum measurement is performed to produce the prediction outcome.
The ordering does not affect the theorem we want to prove. In the following, we will fix the input ordering to be an arbitrary ordering. For example, we can use the input ordering such that the quantum ML model has the smallest prediction error.

After fixing an input ordering, we can treat the entire prediction phase (taking a sequence of inputs $x_1, x_2, \ldots$ and producing $h_{\mathrm{Q}}(x_1), h_{\mathrm{Q}}(x_2), \ldots$) as an enormous POVM measurement on the output state $\rho_{\cE}$ obtained from the learning phase.
Each outcome $a$ from the enormous POVM measurement on the output state $\rho_{\cE}$ corresponds to a function $h_{\mathrm{Q}, a}(x): \{0, 1\}^n \rightarrow \mathbb{R}$.
Using Naimark's dilation theorem, every POVM measurement is a projective measurements on a larger Hilbert space. Since the quantum memory that the quantum ML model can operate on contains an arbitrary amount of qubits, we can use Naimark's dilation theorem to restrict the enormous POVM measurement to a projective measurement $\{P_a\}_a$.
Hence, for any CPTP map $\cE \in \cF$, when we asks the quantum ML model to produce the prediction for an ordering of inputs $x_1, x_2, \ldots$, the output values $h_{\mathrm{Q}}(x_1), h_{\mathrm{Q}}(x_2), \ldots$ will be given by
\begin{equation} \label{eq:h-QaofQML}
    h_{\mathrm{Q}, a}(x) \quad \text{with probability} \quad \Tr(P_a \rho_{\cE}) = \Tr(P_a \mathcal{C}_{N_{\mathrm{Q}}} (\cE \otimes\, \mathcal{I}) \mathcal{C}_{N_{\mathrm{Q}}-1} \ldots \mathcal{C}_2 (\cE \otimes\, \mathcal{I}) \mathcal{C}_1 (\cE \otimes\, \mathcal{I}) \mathcal{C}_0(\rho_0)),
\end{equation}
for a projective measurement $\{P_a\}_a$ with $\sum_a P_a = I$.

Finally, we will assume that the produced function $h_{\mathrm{Q}}(x)$ achieves small prediction error
\begin{equation} \label{eq:QMLassumption}
    \E_{x \sim \mathcal{D}} \left| h_{\mathrm{Q}}(x) - \Tr(O \mathcal{E}(\lvert x \rangle \! \langle x \rvert)) \right|^2 \leq \epsilon \quad \text{with probability at least $2/3$.}
\end{equation}
%with probability greater than $2/3$,
for any CPTP map $\mathcal{E} \in \cF$.
This assumption asserts that
\begin{equation} \label{eq:prob2/3}
    \sum_{a=1}^A \Tr(P_a \rho_{\cE}) \mathbbm{1}\left[\E_{x \sim \mathcal{D}} \left| h_{\mathrm{Q}, a}(x) - \Tr(O \mathcal{E}(\lvert x \rangle \! \langle x \rvert)) \right|^2 \leq \epsilon\right] \geq 2/3,
\end{equation}
where $\mathbbm{1}[z]$ denotes the indicator function of event $z$. That is, $\mathbbm{1}[z]=1$ if $z$ is true and $\mathbbm{1}[z]=0$ otherwise.

\subsubsection{Maximal packing net}
\label{sec:packnetQ}

We emphasize that Rel.~\eqref{eq:prob2/3} must be valid for \emph{any} $\mathcal{E} \in \mathcal{F}$.
Because we only need to output a function $h_{\mathrm{Q}}(x)$ that approximates $f(x) = \Tr(O \cE(\ketbra{x}{x}))$ on average, the task will not be hard when there are only a few qualitatively different CPTP maps in $\mathcal{F}$.
However, the problem could become harder when $\mathcal{F}$ contains a large amount of very different CPTP maps.
The task is now to transform this requirement into a stringent lower bound on $N_{\mathrm{Q}}$ -- the number of black-box uses of the unknown CPTP map $\mathcal{E} \otimes \mathcal{I}$ within the quantum computation \eqref{eq:outputQSQML}. As a starting point, we equip the set of target functions $\cF_f = \{f_{\cE}(x) = \Tr(O \mathcal{E}(\lvert x \rangle \! \langle x \rvert)) | \cE \in \cF\}$ with a packing  net. Packing nets are discrete subsets whose elements are guaranteed to have a certain minimal pairwise distance (think of spheres that must not overlap with each other). We choose points (functions) $f_{\mathcal{E}_i} \in \mathcal{F}_f$ and demand
\begin{equation} \label{eq:epspackingnetdistance}
    \E_{x \sim \mathcal{D}} |f_{\cE_i}(x) - f_{\cE_j}(x)|^2 > 4\epsilon \quad \text{whenever} \quad i \neq j.
\end{equation}
We denote the resulting packing net of $\mathcal{F}_f$ by $M^p_{4\epsilon}(\cF_f)$ and note that every such set has finitely many elements ($\mathcal{F}_f$ is a compact set).
We also assume that $M^p_{4\epsilon}(\cF_f)$ is maximal in the sense that no other $4\epsilon$-packing net can contain more points (functions).

It is possible to utilize packing nets to derive a query complexity lower bound for the quantum machine learning model.
In fact, we will present two different proof strategies.
The first proof is inspired by \cite{flammia2012quantum,haah2017sample,huang2020predicting} and analyzes a communication protocol.
The
second proof uses a proof technique that depends on an analysis of
polynomials similar to \cite{beals2001quantum}. While it is somewhat weaker than the
information-theoretic bound in the first proof, we include the
derivation for completeness as we believe that it may be insightful
for the interested reader.

%We will now present two different proof strategies that utilize packing nets to derive a query complexity lower bound for the quantum machine learning model.
%The first proof analyzes mutual information in a communication protocol \cite{flammia2012quantum,haah2017sample,huang2020predicting}.
%The second proof analyzes polynomials \cite{beals2001quantum} and only works when $n \geq m$.

\subsubsection{Proof strategy I: mutual information analysis} \label{sec:MIanalysisML}

Let us define a communication protocol between two parties, say Alice and Bob.
They use the packing net $M^p_{4 \epsilon}(\cF_f)$ as a dictionary to communicate randomly selected classical messages. More precisely,
Alice samples an integer $X$ uniformly at random from  $1,2, \ldots, |M^p_{4\epsilon}(\cF_f)|$ and chooses the corresponding CPTP map $\cE_X \in  M^p_{4\epsilon}(\cF_f)$.
When Bob wants to access the unknown CPTP map $\cE_X$, he will ask Alice to apply the CPTP map $\cE_X$.
Bob will then execute the quantum machine learning model \eqref{eq:outputQSQML} to obtain a prediction model $h_{\mathrm{Q}, a}(x)$, where $a$ parameterizes the prediction model.
Subsequently, Bob solves the following optimization problem
\begin{equation}
    \tilde{X} = \argmin_{X' = 1, \ldots, |M^p_{4\epsilon}(\cF_f)|} \E_{x \sim \mathcal{D}} \left| h_{\mathrm{Q}, a}(x) - \Tr(O \mathcal{E}_{X'}(\lvert x \rangle \! \langle x \rvert)) \right|^2
   \label{eq:bob-decoding}
\end{equation}
to obtain an integer $\tilde{X}$.
This decoding procedure seems adequate, provided that the prediction model $h_{\mathrm{Q}}$ approximately reproduces the true underlying function.
%Intuitively, this decoding procedure makes sense, because the prediction model $h_{\mathrm{Q}}$ output from the quantum machine learning model should be approximately equal to the underlying function.
More precisely, assumption~\eqref{eq:QMLassumption} asserts
\begin{equation}
    \E_{x \sim \mathcal{D}} \left| h_{\mathrm{Q}, a}(x) - \Tr(O \mathcal{E}_X(\lvert x \rangle \! \langle x \rvert)) \right|^2 \leq \epsilon \quad \text{with probability at least $2/3$.}
\end{equation}
%Let us condition on this assertion being true. %the above event.
Here is where the choice of dictionary matters: $M^p_{4\epsilon}(\cF_f)$ is a packing net, see Equation~\eqref{eq:epspackingnetdistance}.
For $X' \neq X$ this necessarily implies
\begin{align}
    \E_{X \sim \mathcal{D}} \left| h_{\mathrm{Q}, a}(x) - f_{\cE_{X'}}(x) \right|^2 &\geq \left( \Big(\E_{X \sim \mathcal{D}} \left| f_{\cE_{X}}(x) - f_{\cE_{X'}}(x) \right|^2\Big)^{1/2} - \Big(\E_{X \sim \mathcal{D}} \left| h_{\mathrm{Q}, a}(x) - f_{\cE_{X}}(x) \right|^2\Big)^{1/2}\right)^2 \\
    &> \left( 2 \sqrt{\epsilon} - \sqrt{\epsilon} \right)^2 = \epsilon.
\end{align}
This allows us to conclude that Bob's decoding strategy \eqref{eq:bob-decoding}
succeeds perfectly if
\begin{equation}
    \E_{x \sim \mathcal{D}} \left| h_{\mathrm{Q}, a}(x) - \Tr(O \mathcal{E}_X(\lvert x \rangle \! \langle x \rvert)) \right|^2 \leq \epsilon.
\end{equation}
In turn, Assumption~\ref{eq:QMLassumption} ensures $\tilde{X}=X$ (perfect decoding) with probability at least $2/3$.

Now, we use the fact that Alice samples her message $X$ uniformly at random from a total of $|M^p_{4\epsilon}(\cF_f)|$ integers.
\revision{Because $\tilde{X}=X$ (perfect decoding) with probability at least $2/3$, Fano's inequality implies that
\begin{equation}
    H(X | \tilde{X}) \leq H(1/3) + \log(|M^p_{4\epsilon}(\cF_f)|) / 3,
\end{equation}
where $H(x) = -x \log x - (1-x) \log (1-x)$ is the binary entropy.
This gives a lower bound on the mutual information between sent and decoded message, namely
\begin{equation} \label{eq:MutualInfo-Mp}
    I(X : \tilde{X}) = H(X) - H(X | \tilde{X}) \geq \frac{2}{3} \log(|M^p_{4\epsilon}(\cF_f)|) - H(1/3) =  \Omega\left(\log(|M^p_{4\epsilon}(\cF_f)|)\right).
\end{equation}}
Next, note that $\tilde{X}$ is obtained by classically processing a measurement outcome $a$ of the quantum state $\rho_{\mathcal{E}_X}$ The data processing inequality and Holevo's theorem \cite{holevo1973bounds, horodecki2009quantum,bengtsson2017geometry,araki2002entropy} then imply
\begin{equation} \label{eq:MutualInfo-DP}
    I(X : \tilde{X}) \leq I(X : a) \leq \chi(X : \rho_{\cE_X}).
\end{equation}
The Holevo $\chi$ quantity between the classical random variable $X$ and the quantum state $\rho_{\cE_X}$ is
\begin{equation}
    \chi(X : \rho_{\cE_X}) = S\left( \E_X \rho_{\cE_X} \right) - \E_X S\left(\rho_{\cE_X}\right),
\end{equation}
where $S(\rho) = \Tr(-\rho \log \rho)$ is the von Neumann entropy. Throughout this work, we refer to $\log$ with base $\mathrm{e}$. Recall that Bob produces $\rho_{\cE_X}$ by utilizing a total of $N_{\mathrm{Q}}$ channel copies obtained from Alice.
We can use the specific layout~\eqref{eq:outputQSQML} of Bob's quantum computation to produce an upper bound on the Holevo-$\chi$:
\begin{equation}
\chi(X : \rho_{\cE_X}) \leq \mathcal{O}(m N_{\mathrm{Q}})
\label{eq:holevo-upper-bound}
\end{equation}
This bound follows from induction over a sample-resolved variant of Bob's quantum computation. For $t =0,1,\ldots,N_{\mathrm{Q}}$, we will show that
\begin{equation}
 \rho^t_{\cE} = \mathcal{C}_{t} (\cE \otimes\, \mathcal{I}) \mathcal{C}_{t-1} \ldots \mathcal{C}_1 (\cE \otimes\, \mathcal{I}) \mathcal{C}_0(\rho_0)
 \quad \text{obeys} \quad \chi(X : \rho^t_{\cE_X}) \leq (2\log 2) m t.
 \end{equation}
 Bound~\eqref{eq:holevo-upper-bound} then follows from recognizing that setting $t=N_{\mathrm{Q}}$ reproduces Bob's complete computation, see Equation~\eqref{eq:outputQSQML}.

 The base case ($t=0$) is simple, because $\rho^0_{\cE_X} = \mathcal{C}_0(\rho_0)$ does not depend on $X$ at all. This ensures
 \begin{equation}
 \chi \left(X: \rho_{\cE_X}^0 \right) = S\left( \E_X \rho_{\cE_X}^0 \right) - \E_X S\left(\rho_{\cE_X}^0\right)= S \left( \rho_{\cE_X}^0 \right) - S\left(\rho_{\cE_X}^0\right) = 0
 \leq (2 \log 2) m t \quad (t=0).
 \end{equation}
Now, let us move to the induction step ($t>0$). The induction hypothesis provides us with
\begin{equation} \label{eq:induchypoIX}
    \chi(X : \rho^{t-1}_{\cE_X}) \leq (2\log 2)m (t-1)
\end{equation}
and we must  relate $\chi(X : \rho^{t}_{\cE_X})$ to $\chi(X : \rho^{t-1}_{\cE_X})$.
To achieve this goal, we use the fact that the Holevo-$\chi$ is closely related to the quantum relative entropy $D(\rho || \sigma) = \Tr\left( \rho (\log \rho - \log \sigma) \right)$ \cite{horodecki2009quantum,bengtsson2017geometry,araki2002entropy}. Indeed,
\begin{equation}
\chi(X : \rho^t_{\cE_X})
%= S\left( \E_X \rho^t_{\cE_X} \right) - \E_X S\left(\rho^t_{\cE_X}\right)
= \E_X\left[ \Tr\left( \rho^t_{\cE_X} \log \rho^t_{cE_X} - \rho^t_{\cE_X} \log \left(\E_{X'}\rho^t_{cE_{X'}} \right) \right) \right]
=  \E_{X} D\left(\rho^t_{\cE_X} || \E_{X'}\rho^t_{cE_{X'}}\right),
\end{equation}
and monotonicity of the quantum relative entropy asserts
\begin{align*}
\E_{X} D\left(\rho^t_{\cE_X} || \E_{X'}\rho^t_{cE_{X'}}\right)
=&  \E_{X} D\left( \mathcal{C}_{t}\left( \left(\cE_{X} \otimes\, \mathcal{I}\right) \left( \rho^{t-1}_{\cE_X} \right) \right) \,\, || \,\, \mathcal{C}_{t}\left( \E_{X'} \left(\cE_{X'} \otimes\, \mathcal{I}\right)\left( \rho^{t-1}_{\cE_{X'}} \right)\right)\right)\\
    \leq & \E_{X} D\left( \left(\cE_{X} \otimes\, \mathcal{I}\right) \left( \rho^{t-1}_{\cE_X} \right) \,\, || \,\, \E_{X'} \left(\cE_{X'} \otimes\, \mathcal{I}\right)\left( \rho^{t-1}_{\cE_{X'}} \right)\right)\\
    =& S\left( \E_X \left(\cE_{X} \otimes\, \mathcal{I}\right)(\rho^{t-1}_{\cE_X}) \right) - \E_X S\left( \left(\cE_{X} \otimes\, \mathcal{I}\right)(\rho^{t-1}_{\cE_X})\right).
\end{align*}
This effectively allows us to ignore the $t$-th quantum operation $\mathcal{C}_t$ and instead exposes the $t$-th invocation of $\mathcal{E} \otimes \mathcal{I}$.

We analyze the two remaining terms separately.
Let use define the notation $\Tr_{\leq m}$ as the partial trace over the first $m$ qubits, and $\Tr_{> m}$ as the partial trace over the rest of the qubits.
Subadditivity of the von Neumann entropy $S(\rho)$ \cite{horodecki2009quantum,bengtsson2017geometry,araki2002entropy} implies
\begin{align}
    S\left( \E_X \left(\cE_{X} \otimes\, \mathcal{I}\right)(\rho^{t-1}_{\cE_X}) \right) &\leq S\left( \Tr_{\leq m} \E_X \left(\cE_{X} \otimes\, \mathcal{I}\right)(\rho^{t-1}_{\cE_X}) \right) + S\left( \Tr_{> m} \E_X \left(\cE_{X} \otimes\, \mathcal{I}\right)(\rho^{t-1}_{\cE_X}) \right),\\
    &\leq S\left( \Tr_{\leq m} \E_X \left(\cE_{X} \otimes\, \mathcal{I}\right)(\rho^{t-1}_{\cE_X}) \right) + m \log 2\\
    &= S\left( \Tr_{\leq n} \E_X \rho^{t-1}_{\cE_X} \right) + m \log 2.
\end{align}
The second inequality uses the fact that the maximum entropy for an $m$-qubit system is at most $m \log 2$. The last equality is due to the following technical observation (the action of a CPTP map can be traced out).

\begin{lemma}
Fix a CPTP map $\mathcal{E}$ from $n$ qubits to $m$ qubits and let $\mathcal{I}$ denote the identity map on $n' \geq 0$ qubits. Then, $\Tr_{\leq m}[(\cE \otimes \mathcal{I}) \rho] = \Tr_{\leq n}[\rho]$ for any $(n+n')$-qubit state $\rho$.
\end{lemma}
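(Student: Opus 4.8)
The plan is to reduce the claim to the defining property of a trace-preserving map: $\Tr[\cE(\sigma)] = \Tr[\sigma]$ for every $n$-qubit operator $\sigma$. (This extends from density operators to arbitrary operators by linearity, since every operator is a complex linear combination of states.) First I would exploit linearity of $(\cE \otimes \mathcal{I})$ and of the partial trace. Write the $(n+n')$-qubit state as $\rho = \sum_k P_k \otimes Q_k$, where $\{P_k\}$ is any operator basis of the first $n$-qubit factor and the $Q_k$ are operators on the remaining $n'$ qubits. Then $(\cE \otimes \mathcal{I})(\rho) = \sum_k \cE(P_k) \otimes Q_k$ is an operator on $m+n'$ qubits, and applying $\Tr_{\leq m}$ — the partial trace over the $m$ output qubits produced by $\cE$ — gives $\sum_k \Tr[\cE(P_k)]\, Q_k$.

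Next I would invoke trace preservation of $\cE$ to replace each scalar $\Tr[\cE(P_k)]$ by $\Tr[P_k]$, so that $\Tr_{\leq m}[(\cE \otimes \mathcal{I})(\rho)] = \sum_k \Tr[P_k]\, Q_k = \Tr_{\leq n}[\rho]$, which is exactly the claim. The only point requiring care is the bookkeeping about which qubits are traced out: before the map the left register has $n$ qubits; $\cE$ turns it into $m$ qubits; $\Tr_{\leq m}$ on the output is the partial trace over precisely that image register, while $\mathcal{I}$ leaves the other $n'$ qubits untouched, so $\Tr_{\leq m} \circ (\cE \otimes \mathcal{I}) = (\Tr \circ \cE) \otimes \mathcal{I}$ as a map, and $\Tr \circ \cE = \Tr$.

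If one prefers an argument phrased directly with Kraus operators, the same conclusion follows from the operator identity $\Tr_{\leq m}[(K \otimes \mathcal{I})\, \rho\, (K^\dagger \otimes \mathcal{I})] = \Tr_{\leq n}[(K^\dagger K \otimes \mathcal{I})\, \rho]$ for any $K$ mapping $n$ qubits to $m$ qubits (verified on product operators $P \otimes Q$ using cyclicity of the trace), summed over a Kraus family $\{K_j\}$ of $\cE$ and then using the completeness relation $\sum_j K_j^\dagger K_j = \mathcal{I}_n$. I do not expect any genuine obstacle: the statement is essentially the observation that the output of a CPTP map carries no information once it is discarded, and the entire content is making the tensor-factor bookkeeping explicit.
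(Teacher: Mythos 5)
Your proposal is correct. Your primary argument takes a genuinely different route from the paper's: the paper works directly with a Kraus representation $\cE(\sigma)=\sum_i K_i\sigma K_i^\dagger$, uses (partial) cyclicity of the partial trace to move $K_i^\dagger$ past $\rho$, and then invokes the completeness relation $\sum_i K_i^\dagger K_i = I$. You instead expand $\rho=\sum_k P_k\otimes Q_k$ in a product operator basis and observe that $\Tr_{\leq m}\circ(\cE\otimes\mathcal{I})=(\Tr\circ\cE)\otimes\mathcal{I}$ together with $\Tr\circ\cE=\Tr$ immediately gives the claim. Your route is marginally more general and more economical: it uses only linearity and trace preservation, so it applies verbatim to any trace-preserving linear map whether or not it is completely positive, and it avoids having to justify the ``partial cyclicity'' manipulation, which is the one step in the paper's chain that genuinely needs the product-operator verification you relegate to your closing remark. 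The paper's Kraus-based computation is more concrete and is the natural choice given that Kraus representations are used elsewhere in that appendix, but the two arguments are logically equivalent, and your second sketch is essentially the paper's proof. No gaps.
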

\begin{proof}
Let $\mathcal{E}(\rho) = \sum_i K_i \rho K_i^\dagger$ be a Kraus representation of the CP map $\mathcal{E}$. TP moreover implies $\sum_i K_i^\dagger K_i = I$. For any input state $\rho$,
Linearity and (partial) cyclicity of the partial trace then ensure
\begin{align*}
\Tr_{\leq m}\left((\cE \otimes \mathcal{I}) \rho\right) =&
\sum_i \Tr_{\leq m} \left( K_i \otimes I\rho K_i^\dagger \otimes I\right)
= \sum_i \Tr_{\leq m} \left( \rho (K_i^\dagger K_i) \otimes I \right)
= \Tr_{\leq m} \left(\rho I \otimes I \right)
= \Tr_{\leq m} \left( \rho \right).
\end{align*}
This concludes the proof of the lemma.
\end{proof}

Similarly, the second term can be lower bounded by
\begin{align}
    \E_X S\left( \left(\cE_{X} \otimes\, \mathcal{I}\right)(\rho^{t-1}_{\cE_X}) \right) &\geq \E_X S\left( \Tr_{\leq m} \left(\cE_{X} \otimes\, \mathcal{I}\right)(\rho^{t-1}_{\cE_X}) \right) - \E_X S\left( \Tr_{> m} \left(\cE_{X} \otimes\, \mathcal{I}\right)(\rho^{t-1}_{\cE_X}) \right),\\
    &\geq \E_X  S\left( \Tr_{\leq m} \left(\cE_{X} \otimes\, \mathcal{I}\right)(\rho^{t-1}_{\cE_X}) \right) - m \log 2,\\
    &= \E_X S\left( \Tr_{\leq n} \rho^{t-1}_{\cE_X} \right) - m \log 2.
\end{align}
We can combine these two bounds with the monotonicity of the quantum relative entropy to obtain
\begin{align}
    \chi(X : \rho^t_{\cE_X}) &\leq S\left( \E_X \left(\cE_{X} \otimes\, \mathcal{I}\right)(\rho^{t-1}_{\cE_X}) \right) - \E_X S\left( \left(\cE_{X} \otimes\, \mathcal{I}\right)(\rho^{t-1}_{\cE_X})\right)\\
    &\leq S\left( \Tr_{\leq n} \E_X \rho^{t-1}_{\cE_X} \right) - \E_X S\left( \Tr_{\leq n} \rho^{t-1}_{\cE_X} \right) + (2\log 2)m \\
    &= \E_X D\left( \Tr_{\leq n} \rho^{t-1}_{\cE_X} ||  \Tr_{\leq n} \E_{X'} \rho^{t-1}_{\cE_{X'}}\right) + (2\log 2)m\\
    &\leq \E_X D\left( \rho^{t-1}_{\cE_X} || \E_{X'} \rho^{t-1}_{\cE_{X'}} \right) + (2 \log 2) m\\
    & = \chi(X : \rho^{t-1}_{\cE_X}) + (2 \log 2) m.
\end{align}
Plug in the induction hypothesis \eqref{eq:induchypoIX} to complete the argument:
\begin{equation}
\chi(X : \rho^t_{\cE_X})
\leq \chi(X : \rho^{t-1}_{\cE_X}) + (2 \log 2) m \leq (2 \log 2 ) m(t-1) + (2 \log 2) m  = (2 \log 2) mt. \label{eq:holevo-upper-bound2}
\end{equation}

It is worthwhile to pause and recapitulate the main insights from this section: i.) a lower bound on the mutual information in terms of packing net cardinality, see Equation~\eqref{eq:MutualInfo-Mp};  ii.) Holevo's theorem, see Equation~\eqref{eq:MutualInfo-DP}; and, (iii) an upper bound on the Holevo-$\chi$ in terms of query complexity, see Equation~\eqref{eq:holevo-upper-bound2} for $t=N_{\mathrm{Q}}$.
Combining all of them yields
\begin{equation*}
\Omega\left(\log(|M^p_{4\epsilon}(\cF_f)|)\right) \leq I \left( X: \tilde{X}\right)
\leq \chi \left( X: \rho_{\cE_X} \right) \leq (2 \log 2) mN_{\mathrm{Q}}.
\end{equation*}
Rearranging this display yields a lower bound on the minimal query complexity in terms of packing net size:
\begin{equation} \label{eq:NQlowerbound}
    N_{\mathrm{Q}} \geq \Omega\left(\frac{\log(|M^p_{4\epsilon}(\cF_f)|)}{m}\right).
\end{equation}

\subsubsection{Proof strategy II: polynomial method}

%\textcolor{orange}{[RiK: is this chapter really necessary? Also, I haven't revised it yet]}

\revision{The second proof uses a proof technique that depends on analysis of polynomials~\cite{beals2001quantum}.
It leads to somewhat weaker results that only apply if $m \leq n$. We include this derivation for completeness as we believe that it may be insightful for the interested reader.}
%While it is somewhat weaker than the information-theoretic bound in the first proof, we include the derivation for completeness as we believe that it may be insightful for the interested reader.

Let us start by recalling that we may embed a $4\epsilon$-packing net $M^p_{4\epsilon}(\cF_f)$ within the set of target functions $\cF_f$. Geometrically, this means that each $\cE \in M^p_{4\epsilon}(\cF_f)$ describes the center of a $2\epsilon$-ball (this radius is defined with respect to average prediction error squared). And, according to the defining property Equation~\eqref{eq:epspackingnetdistance}, these balls do not overlap. We can use these disjoint balls to cluster different quantum machine learning solutions. Define
\begin{equation} \label{eq:smallballFQ}
    \mathcal{F}^{\mathrm{Q}}_{\cE} = \left\{ a \in A \,\, : \,\, \E_{x \sim \mathcal{D}} \left| h_{\mathrm{Q}, a}(x) - f_{\cE}(x) \right|^2 \leq \epsilon \right\},
\end{equation}
where $A$ is a placeholder for all possible answers the quantum machine learning model can provide.
See the definition given in Equation~\eqref{eq:h-QaofQML}.
The packing net condition~\eqref{eq:epspackingnetdistance} ensures that different clusters are completely disjoint. For distinct $\cE_1,\cE_2 \in \cF$ and $a_1 \in \cF^{\mathrm{Q}}_{\cE_1}$, $a_2 \in \cF^{\mathrm{Q}}_{\cE_2}$, two triangle inequalities and Equation~\eqref{eq:epspackingnetdistance} yield
\begin{align}
    \sqrt{\E_{x \sim \mathcal{D}} \left| h_{\mathrm{Q}, a_1}(x) - h_{\mathrm{Q}, a_2}(x) \right|^2}
    &\geq \sqrt{\E_{x \sim \mathcal{D}} \left| f_{\cE_1}(x) - h_{\mathrm{Q}, a_2}(x) \right|^2} - \sqrt{\E_{x \sim \mathcal{D}} \left| h_{\mathrm{Q}, a_1}(x) - f_{\cE_1}(x) \right|^2} \\
    &\geq \sqrt{\E_{x \sim \mathcal{D}} \left| f_{\cE_1}(x) - f_{\cE_2}(x) \right|^2} - \sqrt{\E_{x \sim \mathcal{D}} \left| h_{\mathrm{Q}, a_2}(x) - f_{\cE_2}(x) \right|^2} \\
    &- \sqrt{\E_{x \sim \mathcal{D}} \left| h_{\mathrm{Q}, {a_1}}(x) - f_{\cE_1}(x) \right|^2} \\
    &> 2 \sqrt{\epsilon} - \sqrt{\epsilon} - \sqrt{\epsilon} = 0.
\end{align}
This implies $f_{a_1}^Q \neq f_{a_2}^Q$ and, more importantly, $\mathcal{F}^Q_{\cE_1} \cap \mathcal{F}^Q_{\cE_2} = \varnothing$ whenever $ f_{\cE_1} \neq f_{\cE_2}$.
%\in M^p_{4\epsilon}(\cF_f)$

We will use this insight to reason about an auxiliar matrix $P$ of size
$|M^p_{4\epsilon}(\cF_f)| \times |M^p_{4\epsilon}(\cF_f)|$.
We label rows and columns by packing net elements $f_{\cE_i}$ with $i=1$ (rows) or $i=2$ (columns). For each pair $f_{\cE_1},f_{\cE_2}$, let $P_{\cE_1,\cE_2}$ denote the probability of a mix-up between $\cE_1$ and $\cE_2$. Such mix-ups occur if the underlying CPTP map is $\cE_2$, but the quantum ML model outputs an answer $a \in \cF^{\mathrm{Q}}_{\cE_1}$ that belongs to the cluster associated with $\cE_1$:
\begin{align}
    P_{\cE_1, \cE_2} &= \sum_{a=1}^A \Tr(P_a \rho_{\cE_2}) \mathbbm{1}\left[\E_{x \sim \mathcal{D}} \left| h_{\mathrm{Q}, a}(x) - f_{\cE_1}(x) \right|^2 \leq \epsilon\right]
    = \sum_{a: h_{\mathrm{Q}, a} \in \mathcal{F}^Q_{\cE_1}} \Tr(P_a \rho_{\cE_2}).% & \forall f_{\cE_1}, f_{\cE_2} \in M^p_{4\epsilon}(\cF_f).
    \label{eq:polynomial-method-P}
\end{align}
Here, $\rho_{\epsilon_2}$ is the outcome state of the quantum ML model (trained on CPTP map $\cE_2$) and $P_a$ is the POVM element associated with predicting $a_1$.
Recall that the main assumption on the quantum ML model is that it predicts accurately with probability at least $2/3$.
This implies
\begin{equation}
    P_{\cE_1, \cE_1} \geq 2 / 3
    \quad \text{for each}  \quad \cE_1 \in M^p_{4\epsilon}(\cF_f),
\end{equation}
while each row sum over off-diagonal matrix elements is strictly smaller. For $f_{\cE_2} \in M^p_{4\epsilon}(\cF_f)$,
\begin{align}
    \sum_{\substack{f_{\cE_1} \in M^p_{4\epsilon}(\cF_f) \\ f_{\cE_1} \neq f_{\cE_2}}} P_{\cE_1, \cE_2}
    &= \sum_{\substack{f_{\cE_1} \in M^p_{4\epsilon}(\cF_f) \\ f_{\cE_1} \neq f_{\cE_2}}} \sum_{a: h_{\mathrm{Q}, a} \in \mathcal{F}^Q_{\cE_1}} \Tr(P_a \rho_{\cE_2}) \\
    &= \sum_{\substack{f_{\cE_1} \in M^p_{4\epsilon}(\cF_f)}} \sum_{a: h_{\mathrm{Q}, a} \in \mathcal{F}^Q_{\cE_1}} \Tr(P_a \rho_{\cE_2}) - \sum_{a: h_{\mathrm{Q}, a} \in \mathcal{F}^Q_{\cE_2}} \Tr(P_a \rho_{\cE_2}) \\
    &= \sum_{\substack{a: \exists f_{\cE_1} \in M^p_{4\epsilon}(\cF_f) \\ h_{\mathrm{Q}, a} \in \mathcal{F}^Q_{\cE_1}}} \Tr(P_a \rho_{\cE_2}) - \sum_{a: h_{\mathrm{Q}, a} \in \mathcal{F}^Q_{\cE_2}} \Tr(P_a \rho_{\cE_2}) \\
    &\leq \sum_{a=1}^A \Tr(P_a \rho_{\cE_2}) - \sum_{a: h_{\mathrm{Q}, a} \in \mathcal{F}^Q_{\cE_2}} \Tr(P_a \rho_{\cE_2}) \\
    &= 1 - P_{\cE_2, \cE_2} \leq 1/3.
\end{align}
The first equality uses the definition of matrix $P$. The third equality follows from the observation that distinct clusters are also disjoint ($\mathcal{F}^Q_{\cE_1} \cap \mathcal{F}^Q_{\cE_2} = \varnothing$).
We conclude that the $|M^p_{4\epsilon}(\cF_f)| \times |M^p_{4\epsilon}(\cF_f)|$-matrix $P$
is diagonally dominant. Such matrices are guaranteed to be non-singular, i.e.\ they have full rank.

This is a suitable starting point for analyzing the probability $\Tr(P_a \rho_{\cE})$ via a polynomial method \cite{beals2001quantum}.
Let
 $\{K^{\cE}_i\}_{i=1}^{2^n 2^m}$,
\begin{equation}
    {\cE}(\rho) = \sum_{i} K^{\cE}_i \rho (K^{\cE}_i)^\dagger,
\end{equation}
with $K^{\cE}_i \in \mathbb{C}^{2^m \times 2^n}$ be the Kraus representation of a fixed CPTP map $\cE \in \cF$.
This representation is parametrized by (at most) $2^n 2^m \times 2^m  2^n=2^{2(n+m)}$ complex parameters:
\begin{equation*}
(K^{\cE}_{i})_{jk} = z^{\cE}_{i \times 2^{n+m} + j \times 2^{n} + k}
\quad \text{defines} \quad z^{\cE} \in \mathbb{C}^{2^{2(n+m)}}.
\end{equation*}
On a high level, we parametrize inputs to the quantum ML model by vectors.
After training, the probability of obtaining answer $a \in A$ corresponds to a homogeneous polynomial of degree $N_{\mathrm{Q}}$ in $z^{\cE}$ and of degree $N_{\mathrm{Q}}$ in $\bar{z}^{\cE}$:
\begin{align}
    \Tr(P_a \rho_{\cE}) &= \Tr(P_a \mathcal{C}_{N_{\mathrm{Q}}} (\cE \otimes\, \mathcal{I}) \mathcal{C}_{N_{\mathrm{Q}}-1} \ldots \mathcal{C}_2 (\cE \otimes\, \mathcal{I}) \mathcal{C}_1 (\cE \otimes\, \mathcal{I}) \mathcal{C}_0(\rho_0))\\
    &= w_a^\dagger  \underset{N_{\mathrm{Q}}\text{ times}}{\underbrace{(z^{\cE} \otimes \bar{z}^{\cE}) \otimes \cdots \otimes (z^{\cE} \otimes \bar{z}^{\cE}})},
\end{align}
where $w_a^\dagger$ is a dual tensor product vector with compatible dimension $N=2^{2(n+m)2N_{\mathrm{Q}}}=2^{4(n+m)N_{\mathrm{Q}}}$. Every matrix element $P_{\mathcal{E}_1,\mathcal{E}_2}$ of $P$ defined in Equation~\eqref{eq:polynomial-method-P} can be expressed as a sum of homogeneous polynoials in $(z^{\mathcal{E}_2} \otimes \bar{z}^{\mathcal{E}_2}) \otimes \cdots \otimes (z^{\mathcal{E}_2} \otimes \bar{z}^{\mathcal{E}_2})$.
Collecting all $M=|M^p_{4\epsilon}(\cF_f)|$ possible tensor products as rows of the matrix
\begin{equation}
Z = \begin{bmatrix}
(z^{\mathcal{E}_1} \otimes \bar{z}^{\mathcal{E}_1}) \otimes \cdots \otimes (z^{\mathcal{E}_1} \otimes \bar{z}^{\mathcal{E}_1})
& \cdots &
(z^{\mathcal{E}_M} \otimes \bar{z}^{\mathcal{E}_M}) \otimes \cdots \otimes (z^{\mathcal{E}_M} \otimes \bar{z}^{\mathcal{E}_M})
\end{bmatrix}
\in \mathbb{C}^{N \times M}
\end{equation}
allows us to present the multilinear characterization of all entries of $P$ in a single display:
\begin{align*}
P= W Z \quad \text{with} \quad  W
= \begin{bmatrix}
\sum_{h_{\mathrm{Q}, a} \in \mathcal{F}^Q_{\cE_1}} w_a^\dagger \\
\vdots \\
    \sum_{h_{\mathrm{Q}, a} \in \mathcal{F}^Q_{\cE_M}} w_a^\dagger \\
    \end{bmatrix}
    \in \mathbb{C}^{M \times 2^N}
\end{align*}
Above, we have shown that the $M \times M$-matrix $P$ must have full column rank. This is only possible if
\begin{equation}
|M^p_{4\epsilon}(\cF_f)| = M \leq N = 2^{2(n+m)2N_{\mathrm{Q}}}=2^{4(n+m)N_{\mathrm{Q}}}.
\end{equation}
Rearranging these terms and assuming $n \leq m$ implies the following lower bound on quantum query complexity $N_{\mathrm{Q}}$:
\begin{equation}
    N_{\mathrm{Q}} \geq \frac{\log(|M^p_{4\epsilon}(\cF_f)|)}{4(n + m)} \geq \frac{\log(|M^p_{4\epsilon}(\cF_f)|)}{8m}.
\end{equation}

\subsection{Information-theoretic upper bound for restricted classical machine learning models} \label{sec:uppbdCL}

We will focus on restricted classical ML models that can select inputs $x_i \in \{0, 1\}^n$ and obtain the corresponding outcome $o_i \in \mathbb{R}$. This outcome is obtained by performing a single-shot measurement (the projective measurement given by the eigenbasis of $O$)
of observable $O$ on the output quantum state
 $\mathcal{E}(\lvert x_i \rangle \! \langle x_i \rvert)$.
This ensures
\begin{equation}
\E[o_i] = \Tr(O \mathcal{E}(\lvert x_i \rangle \! \langle x_i \rvert) =f_{\cE}(x_i) \quad \text{and, moreover,} \quad \left|o_i \right| \leq 1 \text{ with probability one}, \label{eq:target-fct-thm1}
\end{equation}
because observables are bounded in spectral norm ($\norm{O} \leq 1$).
By using the obtained training data $\{(x_i, o_i)\}_{i}$, the restricted classical ML model will produce a prediction model $h_{\mathrm{C}}(x)$ that allows accurate prediction of $f_{\cE}(x) = \Tr(O \cE(\ketbra{x}{x}))$.
The restricted classical ML model should provide accurate prediction model for any CPTP map $\cE \in \cF$.

\subsubsection{Classical machine learning model for a given learning problem}

We consider the following classical machine learning model. First, we sample $N$ classical inputs $x_1, \ldots, x_{N}$ according to the distribution $\mathcal{D}$.
Then, we obtain an associated quantum measurement outcome $o_i$ for each input $x_i$.That is, $o_i$ is a random variable that reproduces the target function in expectation only, see Equation~\eqref{eq:target-fct-thm1}. We denote the underlying distribution by $\mathcal{D}_o (O,\cE(\ketbra{x_i}{x_i}))$ to delineate dependence on input $x_i$ and CPTP map $\cE$.
After obtaining the training data $\{(x_i, o_i)\}_{i=1}^N$, the model performs the following optimization to minimize the \emph{empirical} training error
\begin{equation}
    f_* = \argmin_{f \in M^p_{4 \epsilon}(\cF_f)} \frac{1}{N} \sum_{i=1}^N |f(x_i) - o_i|^2. \label{eq:training-step}
\end{equation}
Here, $M^p_{4 \epsilon}(\cF_f)$ is the maximal packing net defined in Section~\ref{sec:packnetQ}.
The packing net is a subset of the set $\cF_f$ that contains functions that are sufficiently different from one another.
By ``empirical training error'' we mean the deviation of the function $f(x_i)$ from the actual measurement outcome $o_i$, averaged over $N$ data points:
$\frac{1}{N} \sum_{i=1}^N |f(x_i) - o_i|^2.$
In the later discussion, we will also refer to the \emph{ideal} training error, meaning the average deviation of the function $f(x_i)$ from the expectation value $f_{\cE}(x_i) = \Tr(O \cE(\ketbra{x_i}{x_i}))$:
$\frac{1}{N} \sum_{i=1}^N |f(x_i) - f_{\cE}(x_i)|^2.$
This distinction is important; the ideal training error could be close to zero as long as the maximal packing net $M^p_{4 \epsilon}(\cF_f)$ is closely packed, but because of the statistical fluctuation in the quantum measurements, the outcomes $\{o_i\}$ can deviate substantially from the expectation value $f_{\cE}(x_i)$. Therefore we might not be able to achieve small empirical training error even if $f = f_{\cE}$.
%Here, we consider restricted classical ML strategies and each label $o_i$ is a measurement outcome.
%

In the following, we will provide a tight statistical analysis for bounding the prediction error
\begin{equation}
    \E_{x \sim \mathcal{D}} |f_*(x) - f_{\cE}(x)|^2.
\end{equation}
The statistical analysis relies crucially on the distance measure used to define the packing net $M^p_{4 \epsilon}(\cF_f)$.
Recall that this is the average squared distance over the input distribution $\mathcal{D}$, and the statistical fluctuation in performing quantum measurements to obtain $o_i$, see Equation~\eqref{eq:epspackingnetdistance}.
In particular, we will show that a data size of $N = \Theta(\log(|M^p_{4 \epsilon}(\cF_f)|) / \epsilon)$ suffices to achieve prediction errors of order $\mathcal{O}(\epsilon)$ only.

We find it worthwhile to point out that this scaling is better than one might expect.
Standard results in statistical learning theory \cite{bartlett2002rademacher, mohri2018foundations} usually yield a data size of order $\log(|M^p_{4 \epsilon}(\cF_f)|) / \epsilon^2$, which is worse than our result by an additional $1/\epsilon$ factor.

\subsubsection{Concentration results I: Ideal training error}

We begin by considering the concentration of the \emph{ideal} training error for an arbitrary function $f$ from the maximal packing net $M^p_{4 \epsilon}(\cF_f)$:
\begin{equation}
\frac{1}{N} \sum_{i=1}^N |f(x_i) - f_{\cE}(x_i)|^2,
\end{equation}
which only depends on the inputs $x_1, \ldots, x_N$ and is independent of the observable measurement outcome $o_i$.
We use the quantifier \emph{ideal} because we compare directly with the expectation value $f_{\cE}(x_i)$ rather than the measurement outcome $o_i$.

As a first step, view $\left|f(x) - f_{\cE}(x) \right|^2$ with $x \overset{\mathcal{D}}{\sim} \left\{0,1\right\}^n$ as a random variable and check that it is bounded: $\left|f(x)-f_{\cE} (x)\right|^2 \leq (|f(x)| + |f_{\cE} (x)|)^2 \leq 4$ for all $x \in \left\{0,1\right\}^n$. This implies the following bound on the variance:
\begin{align}
\Var[|f(x) - f_{\cE}(x)|^2]
\leq \E_{x \sim \mathcal{D}}|f(x) - f_{\cE}(x)|^4
%\leq \max_{x \in \left\{0,1\right\}^n}|f(x) - f_{\cE}(x)|^2 \E_{x \sim \mathcal{D}}|f(x) - f_{\cE}(x)|^2
\leq 4 \E_{x \sim \mathcal{D}}|f(x) - f_{\cE}(x)|^2.
\end{align}
We see that the ideal training error as a sum of independent random variables with bounded variance.
Bernstein's inequality implies for $t>0$
%Bernstein's inequality is a powerful concentration bound that controls sums of independent random variables that have small variance and are also bounded. According to the above argument, $\left|f(x)-f_{\cE}(x)\right|^2$ has both properties. For $t>0$, Bernstein's inequality implies
\begin{align}
\mathrm{Pr} \left[
\left| \frac{1}{N} \sum_{i=1}^N |f(x_i) - f_{\cE}(x_i)|^2
 - \E_{x \sim \mathcal{D}} |f(x) - f_{\cE}(x)|^2 \right| \geq t \right]
\leq 2 \exp\left( - \frac{1}{2} \frac{N t^2}{ 4 \E_{x \sim \mathcal{D}} |f(x) - f_{\cE}(x)|^2 + \frac{8}{3} t} \right). %\quad \text{for $t >0$}.
\end{align}
Assigning $t = \frac{1}{4} \E_{x \sim \mathcal{D}} |f(x) - f_{\cE}(x)|^2$ allows us to conclude
\begin{align}
&\mathrm{Pr} \left[
\left| \frac{1}{N} \sum_{i=1}^N |f(x_i) - f_{\cE}(x_i)|^2
 - \E_{x \sim \mathcal{D}} |f(x) - f_{\cE}(x)|^2 \right| \geq \frac{1}{4} \E_{x \sim \mathcal{D}} |f(x) - f_{\cE}(x)|^2 \right]\\
\leq & 2 \exp\left( - \frac{3}{448} N \E_{x \sim \mathcal{D}} |f(x) - f_{\cE}(x)|^2 \right)
\end{align}
On the other hand, if
 $\E_{x \sim \mathcal{D}} |f(x) - f_{\cE}(x)|^2 \leq 4 \epsilon$, we instead assign $t = \epsilon$ to obtain
\begin{align}
&\mathrm{Pr} \left[
\left| \frac{1}{N} \sum_{i=1}^N |f(x_i) - f_{\cE}(x_i)|^2
 - \E_{x \sim \mathcal{D}} |f(x) - f_{\cE}(x)|^2 \right| \geq \epsilon \right]
\leq 2 \exp\left( - \frac{3}{112} N \epsilon \right).
\end{align}
These two tail bounds (that cover different regimes) and a union bound then imply
\begin{align}
&\mathrm{Pr} \left[ \forall f \in M^p_{4 \epsilon}(\cF_f),
\left| \frac{1}{N} \sum_{i=1}^N |f(x_i) - f_{\cE}(x_i)|^2
 - \E_{x \sim \mathcal{D}} |f(x) - f_{\cE}(x)|^2 \right| \geq \frac{1}{4} \max\left(4 \epsilon, \E_{x \sim \mathcal{D}} |f(x) - f_{\cE}(x)|^2 \right) \right] \nonumber\\
\leq & 2 \sum_{f \in M^p_{4 \epsilon}(\cF_f)} \exp\left( - \frac{3}{448} N \max\left(4 \epsilon, \E_{x \sim \mathcal{D}} |f(x) - f_{\cE}(x)|^2 \right) \right)
\leq  2 \left|M^p_{4 \epsilon}(\cF_f)\right| \exp\left( - \frac{3}{112} N \epsilon \right).
\end{align}
Intuitively, this can be understood as follows.
The functions $f$ close to $f_{\cE}$ will be distorted by at most $\epsilon$, while the functions $f$ that are further away from $f_{\cE}$ will be distorted by a value proportional to the distance $\E_{x \sim \mathcal{D}} |f(x) - f_{\cE}(x)|^2$.
For $\delta \in (0,1)$ (confidence), we set
\begin{equation} \label{eq:cmlNbound}
    N \geq \frac{38 \log(2 \left|M^p_{4 \epsilon}(\cF_f)\right|/ \delta)}{\epsilon}.
\end{equation}
(Throughout this paper, $\log$ has base $\mathrm{e}$ unless otherwise indicated.)
Then, with probability at least $1 - \delta$, we have
\begin{equation} \label{eq:fluc-f-fcE}
    \forall f \in M^p_{4 \epsilon}(\cF_f),
\left| \frac{1}{N} \sum_{i=1}^N |f(x_i) - f_{\cE}(x_i)|^2
 - \E_{x \sim \mathcal{D}} |f(x) - f_{\cE}(x)|^2 \right| < \frac{1}{4} \max\left(4 \epsilon, \E_{x \sim \mathcal{D}} |f(x) - f_{\cE}(x)|^2 \right).
\end{equation}
We note that the training data size $N$ in Equation~(\ref{eq:cmlNbound}) scales as $1 / \epsilon$ rather than $1 / \epsilon^2$, an improvement over the standard scaling typically encountered in
statistical learning theory \cite{bartlett2002rademacher, mohri2018foundations}.
The $1 / \epsilon^2$ comes naturally when we sample over the different inputs $x_i$ and apply a concentration inequality on the ideal training error $\frac{1}{N} \sum_{i=1}^N |f(x_i) - f_{\cE}(x_i)|^2$ to guarantee an  $\mathcal{O}(\epsilon)$ statistical fluctuation around the prediction error $\E_{x \sim \cD} |f(x) - f_{\cE}(x)|^2$.
The main reason for the improved scaling is that any function $f$ with a small prediction error $\E_{x \sim \mathcal{D}} |f(x) - f_{\cE}(x)|^2 \leq 4 \epsilon$ also has a small variance (e.g., a highly biased coin that almost always come out heads has a variance close to zero, which is much smaller than for an unbiased coin), so we need only $N=\mathcal{O}(1/\epsilon)$ to achieve $\mathcal{O}(\epsilon)$ statistical fluctuations. Furthermore, by examining Equation~\eqref{eq:fluc-f-fcE}, we see that if function $f$  has a large prediction error then the statistical fluctuations in the training data may also be large.
This is a price we pay to avoid a training set with size $N$ scaling as $1/\epsilon^2$.
The increased statistical fluctuations for a function $f$ with a large prediction error are not problematic, because the statistical fluctuation are still smaller than the prediction error in that case.
We find that functions with small prediction error have small ideal training error, while functions with large prediction error have large ideal training error, which is adequate for our purposes.

We condition on the event that the display Equation~(\ref{eq:fluc-f-fcE}) holds true and proceed to the second step.

\subsubsection{Concentration results II: Shifted empirical training error}

In the second step, we will condition on a set of inputs $x_1, \ldots, x_N$ and study the concentration of statistical fluctuations in the observable measurement outcome $o_i$.
Let us define a new quantity, which we call the \emph{shifted} empirical training error:
\begin{equation}
\frac{1}{N} \sum_{i=1}^N \left[ |f(x_i) - o_i|^2 - |f_{\cE}(x_i) - o_i|^2 \right],
\end{equation}
where $f$ can be any function in the packing net $M^p_{4 \epsilon}(\cF_f)$.
The expectation value of the shifted empirical training error can be computed by means of direct expansion.
Use $f_{\cE}(x_i)= \E_{o \sim \mathcal{D}_o(O, \mathcal{E}(\lvert x_i \rangle \! \langle x_i \rvert))} [o] $
to rewrite
%Recall that $\E_{o \sim \mathcal{D}_o(O, \mathcal{E}(\lvert x_i \rangle \! \langle x_i \rvert))} [o] = f_{\cE}(x_i)$.
%This yields
\begin{equation}
    |f(x_i) - f_{\cE}(x_i)|^2=\E_{\substack{o \sim \mathcal{D}_o(O, \mathcal{E}(\lvert x_i \rangle \! \langle x_i \rvert))}} |f(x_i) - o|^2 - |f_{\cE}(x_i) - o|^2,
\end{equation}
and for fixed input $x_i$, we can also bound the variance:
\begin{align}
    \Var_{o \sim \mathcal{D}_o(O, \mathcal{E}(\lvert x_i \rangle \! \langle x_i \rvert))} |f(x_i) - o|^2 - |f_{\cE}(x_i) - o|^2 &= \E_{\substack{o \sim \mathcal{D}_o(O, \mathcal{E}(\lvert x_i \rangle \! \langle x_i \rvert))}} 4 (f(x_i) - f_{\cE}(x_i))^2 (o - f_{\cE}(x_i))^2\\
    &=  4 (f(x_i) - f_{\cE}(x_i))^2 \Var_{o \sim \mathcal{D}_o(O, \mathcal{E}(\lvert x_i \rangle \! \langle x_i \rvert))}[o] \\
    &\leq  4 |f(x_i) - f_{\cE}(x_i)|^2.
\end{align}
The last inequality is contingent on $\|O \| \leq 1$ which implies $o \in \left[-1,1\right]$ with probability one.
%The last inequality uses $o\sim \mathcal{D}_o(O, \mathcal{E}(\lvert x_i \rangle \! \langle x_i \rvert))$ which lies within $[-1, 1]$ with probability one. This is due to the assumption that $\norm{O} \leq 1$.
Now, we apply Bernstein's inequality again. The $o_i$'s are independent, bounded random variables with small variance. So, we obtain
%Now, we can capitalize on the fact that $o_1, \ldots, o_N$ are independent and bounded random variables whose variance is also small. Bernstein's inequality implies
\begin{align}
&\mathrm{Pr} \left[
\left| \frac{1}{N} \sum_{i=1}^N \left[ |f(x_i) - o_i|^2 - |f_{\cE}(x_i) - o_i|^2 \right]
 - \frac{1}{N} \sum_{i=1}^N |f(x_i) - f_{\cE}(x_i)|^2 \right| \geq t \right]\\
\leq &2 \exp\left( - \frac{1}{2} \frac{N t^2}{\frac{4}{N} \sum_{i=1}^N |f(x_i) - f_{\cE}(x_i)|^2 + \frac{8}{3} t} \right) \quad \text{for $t >0$}.
\end{align}
Assigning $t = \frac{1}{4N} \sum_{i=1}^N |f(x_i) - f_{\cE}(x_i)|^2$ ensures
\begin{align}
&\mathrm{Pr} \left[
\left| \frac{1}{N} \sum_{i=1}^N \left[ |f(x_i) - o_i|^2 - |f_{\cE}(x_i) - o_i|^2 \right]
 - \frac{1}{N} \sum_{i=1}^N |f(x_i) - f_{\cE}(x_i)|^2 ] \right| \geq \frac{1}{4N} \sum_{i=1}^N |f(x_i) - f_{\cE}(x_i)|^2 \right]\\
\leq & 2 \exp\left( - \frac{3}{448} \sum_{i=1}^N |f(x_i) - f_{\cE}(x_i)|^2 \right). \label{eq:foconc-large}
\end{align}
If $\frac{1}{N} \sum_{i=1}^N |f(x_i) - f_{\cE}(x_i)|^2 \leq 4 \epsilon$, we instead assign $t = \epsilon$ to obtain
\begin{align}
\mathrm{Pr} \left[
\left| \frac{1}{N} \sum_{i=1}^N \left[ |f(x_i) - o_i|^2 - |f_{\cE}(x_i) - o_i|^2 \right]
 - \frac{1}{N} \sum_{i=1}^N |f(x_i) - f_{\cE}(x_i)|^2 ] \right| \geq \epsilon \right]
\leq 2 \exp\left( - \frac{3}{112} N \epsilon \right). \label{eq:foconc-small}
\end{align}
These results are conditioned on $x_1, \ldots, x_N$ already being sampled (the result of first step) where the event given in Equation~\eqref{eq:fluc-f-fcE} holds.

\subsubsection{Prediction error for functions in the maximal packing net}

Before bounding the prediction error of $f_\star$ obtained by the restricted classical ML model, we need to show that the following events happen simultaneously with high probability.
\begin{itemize}
    \item \emph{Event 1:} There exists a function $\tilde{f} \in M^p_{4 \epsilon}(\cF_f)$ with small prediction error that results in an empirical training error that is upper bounded by a certain threshold.
    In particular, we will set out to show that
    \begin{equation}
    \frac{1}{N} \sum_{i=1}^N |\tilde{f}(x_i) - o_i|^2 \leq \frac{1}{N} \sum_{i=1}^N |f_{\cE}(x_i) - o_i|^2 + \frac{25}{4} \epsilon.
    \label{eq:classical-ML-aux2}
    \end{equation}
    %We refer to this as Event $1$.
    \item \emph{Event 2:} All functions $f \in M^p_{4 \epsilon}(\cF_f)$ that have a large prediction error will result in an empirical training error lower bounded by a certain threshold.
    In particular, we define the event to be: for all models $f \in M^p_{4 \epsilon}(\cF_f)$ such that $\E_{x \sim D} |f(x) - f_{\cE}(x)|^2 \geq 12 \epsilon$ will have:
    \begin{equation}
    \frac{1}{N} \sum_{i=1}^N |f(x_i) - o_i|^2  \geq \frac{1}{N} \sum_{i=1}^N |f_{\cE}(x_i) - o_i|^2 + \frac{27}{4} \epsilon.
    \label{eq:classical-ML-aux1}
    \end{equation}
    %We refer to this as Event $2$.
\end{itemize}
Then, we can combine these statements to obtain a bound on the prediction error of $f_\star$. Let us first relate the packing net to another useful concept.

\begin{lemma}[Maximal packing nets are covering nets] \label{lem:maxpackingnet}
For all $f_{\cE} \in \cF_f$, there exists $\tilde{f} \in M^p_{4 \epsilon}(\cF_f)$, such that
\begin{equation}
    \E_{x \sim D} |f(x) - f_{\cE}(x)|^2 \leq 4 \epsilon.
\end{equation}
\end{lemma}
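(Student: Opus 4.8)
The plan is to derive this covering property as an immediate corollary of the maximality of the packing net, via a one-step contradiction argument; this is the textbook observation that a maximal packing is automatically a covering, so no new machinery is needed beyond the defining property~\eqref{eq:epspackingnetdistance} of $M^p_{4\epsilon}(\cF_f)$.

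First I would assume, towards a contradiction, that the conclusion fails for some $f_{\cE} \in \cF_f$, i.e.\ that $\E_{x \sim \mathcal{D}}|f_{\cE}(x) - g(x)|^2 > 4\epsilon$ for every $g \in M^p_{4\epsilon}(\cF_f)$. I would immediately note that this forces $f_{\cE} \notin M^p_{4\epsilon}(\cF_f)$, since otherwise taking $g = f_{\cE}$ would give distance $0$, which is not $>4\epsilon$. (In the complementary case $f_{\cE}\in M^p_{4\epsilon}(\cF_f)$ the lemma holds trivially by taking $\tilde f = f_{\cE}$.)

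Next I would form the enlarged set $M^p_{4\epsilon}(\cF_f) \cup \{f_{\cE}\}$, which is still a subset of $\cF_f$ because $f_{\cE} \in \cF_f$. I would then check that it still satisfies the defining packing inequality~\eqref{eq:epspackingnetdistance}: pairs drawn from the original net satisfy it by construction, and every pair involving the new point $f_{\cE}$ satisfies it by the contradiction hypothesis. Hence the enlarged set is a valid $4\epsilon$-packing net of $\cF_f$ of strictly larger cardinality than $M^p_{4\epsilon}(\cF_f)$, contradicting the assumed maximality. This establishes the claim, with $\tilde f$ taken to be a net element within squared average distance $4\epsilon$ of $f_{\cE}$.

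I do not expect a genuine obstacle here. The argument never invokes the triangle inequality, so the fact that $\E_{x \sim \mathcal{D}}|f(x)-g(x)|^2$ is the square of the $L^2(\mathcal{D})$ distance rather than a bona fide metric is irrelevant. The only points that deserve a line of care are the trivial edge case $f_{\cE}\in M^p_{4\epsilon}(\cF_f)$ and the remark that adjoining $f_{\cE}$ keeps us inside $\cF_f$, both of which are immediate; everything else is a direct appeal to maximality.
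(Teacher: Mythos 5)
Your proposal is correct and is essentially identical to the paper's proof: both argue by contradiction that if some $f_{\cE}$ were farther than $4\epsilon$ (in average squared distance) from every net element, it could be adjoined to $M^p_{4\epsilon}(\cF_f)$ to form a strictly larger packing net, contradicting maximality. The extra care you take with the edge case $f_{\cE}\in M^p_{4\epsilon}(\cF_f)$ is harmless but not needed, since that case is already excluded by the contradiction hypothesis.
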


The proof is standard, see e.g.\ \cite{vershynin2018high}, and based on contradicting the assumption that $M^p_{4\epsilon}(\cF_f)$ is a maximal packing net. Since it is short and insightful, we include the full proof for completeness. %Suppose that there is a point $\tilde{f}$

\begin{proof}[Proof of Lemma~\ref{lem:maxpackingnet}]
If there exists $f_{\cE} \in \cF$, such that for all $\tilde{f} \in M^p_{4 \epsilon}(\cF_f)$, we have
\begin{equation}
    \E_{x \sim \mathcal{D}} |f(x) - f_{\cE}(x)|^2 > 4 \epsilon,
\end{equation}
then we can add $f_{\cE}$ into the packing net $M^p_{4 \epsilon}(\cF_f)$. Hence $M^p_{4 \epsilon}(\cF_f)$ is not the maximal packing net.
\end{proof}

For Event 1 in Equation~\eqref{eq:classical-ML-aux2}, we want to show the existence of a function $\tilde{f}$ that has a small prediction error as well as an empirical training error upper bounded by a threshold.
Because $M^p_{4 \epsilon}(\cF_f)$ is a maximal packing net, using Lemma~\ref{lem:maxpackingnet}, there exists a function $\tilde{f}$ such that the prediction error
\begin{equation}
    \E_{x \sim D} |\tilde{f}(x) - f_{\cE}(x)|^2 \leq 4 \epsilon.
\end{equation}
We now condition on Equation~\eqref{eq:fluc-f-fcE} being true, which happens with probability at least $1-\delta$.
Therefore, we have the following bound on the ideal training error
\begin{equation} \label{eq:emppredicerr-5/4}
    \frac{1}{N} \sum_{i=1}^N |\tilde{f}(x_i) - f_{\cE}(x_i)|^2 \leq 5 \epsilon.
\end{equation}
We can now use this insight to control the shifted empirical training error. Use a combination of Eqs.~\eqref{eq:foconc-large}~and~\eqref{eq:foconc-small} to conclude
\begin{align}
&\mathrm{Pr} \left[
\left| \frac{1}{N} \sum_{i=1}^N \left[ |\tilde{f}(x_i) - o_i|^2 - |f_{\cE}(x_i) - o_i|^2 \right]
 - \frac{1}{N} \sum_{i=1}^N |\tilde{f}(x_i) - f_{\cE}(x_i)|^2 ] \right| \geq \frac{5}{4} \epsilon \right] \\
\leq & 2 \exp\left( - \frac{3}{112} N \epsilon \right) \leq \frac{\delta}{|M^p_{4 \epsilon}(\cF_f)|}.
\end{align}
The first inequality comes from separately analyzing the two cases: $\frac{1}{N} \sum_{i=1}^N |\tilde{f}(x_i) - f_{\cE}(x_i)|^2 \leq 4 \epsilon$ or $4 \epsilon < \frac{1}{N} \sum_{i=1}^N |\tilde{f}(x_i) - f_{\cE}(x_i)|^2 \leq 5 \epsilon$, then take the looser statement.
The second inequality arises from inserting the (lower bound) on training data size $N$ from Equation~\eqref{eq:cmlNbound}.
Therefore, if the display from Equation~\eqref{eq:fluc-f-fcE} is true (which happens with probability at least $1-\delta$), then
\begin{align}
    \frac{1}{N} \sum_{i=1}^N |\tilde{f}(x_i) - o_i|^2 &\leq \frac{1}{N} \sum_{i=1}^N |f_{\cE}(x_i) - o_i|^2 + \frac{1}{N} \sum_{i=1}^N |\tilde{f}(x_i) - f_{\cE}(x_i)|^2 + \frac{5}{4} \epsilon\\
    &\leq \frac{1}{N} \sum_{i=1}^N |f_{\cE}(x_i) - o_i|^2 + \frac{25}{4} \epsilon \quad \text{with probability at least $1 - \delta / |M^p_{4 \epsilon}(\cF_f)|$}.
\end{align}
The second inequality is contingent on using Equation~\eqref{eq:emppredicerr-5/4}.
This is Event $1$ that we have set out to establish. And it is guaranteed to happen with high probability.

We now move on to Event $2$ given in Equation~\eqref{eq:classical-ML-aux1}. For any $f \in M^p_{4 \epsilon}(\cF_f)$ with a large prediction error
\begin{equation}
    \E_{x \sim D} |f(x) - f_{\cE}(x)|^2 \geq 12 \epsilon,
\end{equation}
we want to show that the training error $\frac{1}{N} \sum_{i=1}^N |f(x_i) - o_i|^2$ will also be large.
We again condition on the event displayed by Equation~\eqref{eq:fluc-f-fcE} (which happens with probability at least $1-\delta$).
This relation implies the following bound on the ideal training error
\begin{equation}
    \frac{1}{N} \sum_{i=1}^N |f(x_i) - f_{\cE}(x_i)|^2 \geq \frac{3}{4} \E_{x \sim D} |f(x) - f_{\cE}(x)|^2 \geq 9 \epsilon.
\end{equation}
Using the concentration result from Equation~\eqref{eq:foconc-large}, we have
\begin{align}
&\mathrm{Pr} \left[
\left| \frac{1}{N} \sum_{i=1}^N \left[ |f(x_i) - o_i|^2 - |f_{\cE}(x_i) - o_i|^2 \right]
 - \frac{1}{N} \sum_{i=1}^N |f(x_i) - f_{\cE}(x_i)|^2 ] \right| \geq \frac{1}{4N} \sum_{i=1}^N |f(x_i) - f_{\cE}(x_i)|^2 \right]\\
&\leq 2 \exp\left( - \frac{3}{448} \sum_{i=1}^N |f(x_i) - f_{\cE}(x_i)|^2 \right) \leq 2 \exp\left( - \frac{3}{448} N \left(9 \epsilon\right) \right) \leq \frac{\delta}{|M^p_{4 \epsilon}(\cF_f)|}.
\end{align}
The last inequality uses the training data size bound from Equation~\eqref{eq:cmlNbound}.
This ensures
\begin{align}
    \frac{1}{N} \sum_{i=1}^N |f(x_i) - o_i|^2 &\geq \frac{1}{N} \sum_{i=1}^N |f_{\cE}(x_i) - o_i|^2 + \frac{3}{4} \frac{1}{N} \sum_{i=1}^N |f(x_i) - f_{\cE}(x_i)|^2\\
    &\geq \frac{1}{N} \sum_{i=1}^N |f_{\cE}(x_i) - o_i|^2 + \frac{9}{16} \E_{x \sim D} |f(x) - f_{\cE}(x)|^2\\
    &\geq \frac{1}{N} \sum_{i=1}^N |f_{\cE}(x_i) - o_i|^2 + \frac{27}{4} \epsilon \quad \text{with probability at least $1 - \delta / |M^p_{4 \epsilon}(\cF_f)|$}.
\end{align}
We can combine these insights by applying union bound to obtain that all the desired events given in Equation~\eqref{eq:classical-ML-aux2}~and~\eqref{eq:classical-ML-aux1} happen simultaneously with probability at least $1 - \delta$ if we condition on Equation~\eqref{eq:fluc-f-fcE} to be true.
Furthermore, because the event in display~\eqref{eq:fluc-f-fcE} happens with probability $1 - \delta$, we can guarantee that the the desired events given in Equation~\eqref{eq:classical-ML-aux2}~and~\eqref{eq:classical-ML-aux1} happen simultaneously with a probability at least $(1 - \delta)^2 \geq 1 - 2 \delta$. This statement uses the following basic fact from elementary probability theory:
Let $p({A|B})$ be the probability of event $A$ when we condition on event $B$. And let $p(B)$ be the probability of event $B$. Then the probability of event $A$, $p(A)$, is larger or equal to the probability that $A, B$ both happens, $p(A, B) = p(A|B) p(B)$.

To conclude, we have shown that using a training data of size $N\geq 38 \log(2 \left|M^p_{4 \epsilon}(\cF_f)\right|/ \delta) / \epsilon$ guarantees that the relations given in Equation~\eqref{eq:classical-ML-aux2}~and~\eqref{eq:classical-ML-aux1} happen with probability at least $1 - 2 \delta$.

\subsubsection{Prediction error for functions produced by restricted classical ML}

Let us choose a data of size $N \geq 38 \log(4 \left|M^p_{4 \epsilon}(\cF_f)\right|/ \delta) / \epsilon$ such that the two relations in Equation~\eqref{eq:classical-ML-aux2}~and~\eqref{eq:classical-ML-aux1} are both true with probability $1-\delta$. We now combine these with two other concepts from the previous subsections.
Let $f_{\cE}(x)$ be the actual target function and recall that (at least) one packing net element $\tilde{f} \in M^p_{4 \epsilon}(\cF_f)$ is guaranteed to be close ($\E_{x \sim \mathcal{D}} | \tilde{f}(x) - f_{\cE} (x) |^2 \leq 4 \epsilon$ according to Lemma~\ref{lem:maxpackingnet}).
The restricted classical ML model tries to identify such a packing net element by minimizing the empirical training error: $f_* = \argmin_{f \in M^p_{4 \epsilon}(\cF_f)} \frac{1}{N} \sum_{i=1}^N |f(x_i) - o_i|^2$ according to Equation~\eqref{eq:training-step}.
This setup ensures
\begin{align}
\frac{1}{N} \sum_{i=1}^N |f_*(x_i) - o_i|^2
= \min_{f \in M^{p}_{4\epsilon} (\cF_f)} \frac{1}{N} \sum_{i=1}^N |f(x_i) - o_i|^2
\leq \frac{1}{N} \sum_{i=1}^n \left|\tilde{f} (x_i) - o_i \right|^2.
\end{align}
The first relation in Equation~\eqref{eq:classical-ML-aux2} allows us to take it from there. Indeed,
\begin{align*}
\frac{1}{N} \sum_{i=1}^n \left|\tilde{f} (x_i) - o_i \right|^2
\leq \frac{1}{N} \sum_{i=1}^N |f_{\cE}(x_i) - o_i|^2 + \frac{25}{4} \epsilon
< \frac{1}{N} \sum_{i=1}^N |f_{\cE}(x_i) - o_i|^2 + \frac{27}{4} \epsilon,
\end{align*}
where the strict inequality is completely trivial. Apply the second relation in Equation~\eqref{eq:classical-ML-aux1} to complete the chain of arguments:
\begin{align}
 \frac{1}{N} \sum_{i=1}^N |f_*(x_i) - o_i|^2 <
 \min_{f \in M_{4\epsilon}^p (\cF_f):\; \E_{x \sim \mathcal{D}}|f(x)-f_{\cE}(x)|^2 \geq 12 \epsilon}
 \frac{1}{N} \sum_{i=1}^N \left| f(x_i) - o_i \right|^2.
\end{align}
In words, this display implies that the empirical training error achieved by $f_*$ -- the output of the restricted classical ML model -- is strictly smaller than any empirical training error that could be achieved by any packing net function that has a comparatively large prediction error (at least $12 \epsilon$).
Therefore if $f_*$ has a prediction error of at least $12 \epsilon$, then this leads to a contradiction that $\frac{1}{N} \sum_{i=1}^N |f_*(x_i) - o_i|^2 < \frac{1}{N} \sum_{i=1}^N |f_*(x_i) - o_i|^2$.
By contradiction, this claim implies that the prediction error achieved by $f_*$ cannot be too bad. %We obtain:

\begin{proposition} \label{prop:uppbdcml}
Let $f_*: \left\{0,1\right\}^n \to \mathbb{R}$ be the packing net element that minimizes the empirical training error in Equation~\eqref{eq:training-step}. Then, for $\delta \in (0,1)$, training data of size $N \geq 38 \log(4 \left|M^p_{4 \epsilon}(\cF_f)\right|/ \delta) / \epsilon$ implies:
\begin{equation}
\E_{x \sim \mathcal{D}} |f^*(x) - f_{\cE}(x)|^2
< 12 \epsilon \quad \text{with probability at least $1-\delta$.}
\end{equation}
\end{proposition}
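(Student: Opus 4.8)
The plan is to prove Proposition~\ref{prop:uppbdcml} by contradiction, stacking the two ``good events'' that were just established on top of the defining optimality of $f_*$. First I would invoke Lemma~\ref{lem:maxpackingnet} to fix a packing-net element $\tilde f \in M^p_{4\epsilon}(\cF_f)$ with $\E_{x\sim\mathcal D}|\tilde f(x)-f_{\cE}(x)|^2 \le 4\epsilon$; this is the ``reference'' function the learner ought to be able to match. The crucial structural point is that $f_*$ minimizes the empirical training error over the \emph{entire} net, so in particular it cannot do worse than $\tilde f$: $\frac1N\sum_{i=1}^N|f_*(x_i)-o_i|^2 \le \frac1N\sum_{i=1}^N|\tilde f(x_i)-o_i|^2$.

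Next I would chain in Event~1, Equation~\eqref{eq:classical-ML-aux2}, which bounds $\frac1N\sum_i|\tilde f(x_i)-o_i|^2 \le \frac1N\sum_i|f_{\cE}(x_i)-o_i|^2 + \tfrac{25}{4}\epsilon$, hence the same upper bound holds for $\frac1N\sum_i|f_*(x_i)-o_i|^2$. Now suppose, for contradiction, that $\E_{x\sim\mathcal D}|f_*(x)-f_{\cE}(x)|^2 \ge 12\epsilon$. Then $f_*$ is one of the ``bad'' net functions covered by Event~2, Equation~\eqref{eq:classical-ML-aux1}, so $\frac1N\sum_i|f_*(x_i)-o_i|^2 \ge \frac1N\sum_i|f_{\cE}(x_i)-o_i|^2 + \tfrac{27}{4}\epsilon$. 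Comparing the two estimates for $\frac1N\sum_i|f_*(x_i)-o_i|^2$ forces $\tfrac{27}{4}\epsilon \le \tfrac{25}{4}\epsilon$, which is absurd. Therefore $\E_{x\sim\mathcal D}|f_*(x)-f_{\cE}(x)|^2 < 12\epsilon$, as claimed.

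For the probability bookkeeping I would note that Events~1 and~2 both hold, conditioned on the ideal-training-error concentration Equation~\eqref{eq:fluc-f-fcE}, after a union bound over the at most $|M^p_{4\epsilon}(\cF_f)|$ net elements, and Equation~\eqref{eq:fluc-f-fcE} itself holds with probability at least $1-\delta$; the earlier analysis gives the pair of events with probability at least $1-2\delta$ when $N \ge 38\log(2|M^p_{4\epsilon}(\cF_f)|/\delta)/\epsilon$, so rescaling $\delta \mapsto \delta/2$ yields the stated sample size $N \ge 38\log(4|M^p_{4\epsilon}(\cF_f)|/\delta)/\epsilon$ and overall failure probability at most $\delta$.

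The combination step above is short; the real work — which I expect to be the main obstacle if one attacks this from scratch rather than quoting the earlier subsections — is establishing Events~1 and~2 with sample complexity scaling like $1/\epsilon$ instead of the naive $1/\epsilon^2$. This hinges on the variance bounds $\Var[|f(x)-f_{\cE}(x)|^2]\le 4\,\E_{x\sim\mathcal D}|f(x)-f_{\cE}(x)|^2$ and its shifted-empirical-error analogue, so that Bernstein's inequality applied with a threshold $t$ proportional to the prediction error (rather than a fixed $t=\epsilon$) decays exponentially in $N\epsilon$ uniformly over the net. Tuning the numerical constants so that the ``good'' slack $\tfrac{25}{4}\epsilon$ stays strictly below the ``bad'' slack $\tfrac{27}{4}\epsilon$, while keeping all the concentration failure probabilities summable to $O(\delta)$, is the last delicate bit.
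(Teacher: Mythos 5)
Your proposal is correct and follows essentially the same route as the paper: fix a covering element $\tilde f$ via Lemma~\ref{lem:maxpackingnet}, use the optimality of $f_*$ to transfer Event~1's upper bound to $f_*$'s empirical error, and derive a contradiction with Event~2's lower bound for any net element of prediction error at least $12\epsilon$, with the same $\delta\mapsto\delta/2$ rescaling to account for the two conditioning layers. Nothing further is needed.
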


\subsection{Combining the upper and lower bound} \label{sec:combineCLQU}

If a quantum ML model produces a prediction $h_{\mathrm{Q}}$ achieving average prediction error
\begin{equation}
    \E_{x \sim \mathcal{D}} \left| h_{\mathrm{Q}}(x) - \Tr(O \mathcal{E}(\lvert x \rangle \! \langle x \rvert)) \right|^2 \leq \epsilon,
\end{equation}
with probability at least $2/3$ for any CPTP map $\mathcal{E} \in \cF$, then,
as proven in Equation~\eqref{eq:NQlowerbound}, the quantum ML must access the map $\cE$ at least $N_{\mathrm{Q}}$ times, where
\begin{equation}
     N_{\mathrm{Q}} =\Omega\left(\frac{\log(|M^p_{4\epsilon}(\cF_f)|)}{m}\right).
\end{equation}
On the other hand, from Proposition~\ref{prop:uppbdcml}, we know there is a restricted classical ML model producing prediction $h_{\mathrm{C}}$ achieving average prediction error
\begin{equation}
    \E_{x \sim \mathcal{D}} \left| h_{\mathrm{C}}(x) - \Tr(O \mathcal{E}(\lvert x \rangle \! \langle x \rvert)) \right|^2 \leq 12 \epsilon = \mathcal{O}(\epsilon),
\end{equation}
with high probability for any CPTP map $\cE \in \cF$, such that the restricted classical ML accesses the map $N_{\mathrm{C}}$ times, where
\begin{equation}
% N_{\mathrm{C}} = \mathcal{O}\left( \frac{\log(|M_{4\epsilon}(\cF_f)|)}{\epsilon}\right) \leq \mathcal{O}\left( \frac{\log(|M^p_{4\epsilon}(\cF_f)|)}{\epsilon}\right) \leq \mathcal{O}\left( \frac{m N_{\mathrm{Q}}}{\epsilon} \right)
 N_{\mathrm{C}} = \mathcal{O}\left( \frac{\log(|M^p_{4\epsilon}(\cF_f)|)}{\epsilon}\right) = \mathcal{O}\left( \frac{m N_{\mathrm{Q}}}{\epsilon} \right).
\end{equation}
This concludes the proof of Theorem~\ref{thm:noadvquantum}.

\section{Examples saturating the maximum information-theoretic advantage} \label{sec:satnoadv}

\begin{proposition} \label{prop:noadvquantum-appendix}
For any $\epsilon \in (0,1/3)$, and positive integer $m$, there exists a learning problem \eqref{eq:target-function} %$f(x) = \Tr \left( O \mathcal{E} (\ketbra{x}{x}) \right)$
-- specified by an $m$-qubit observable $O$, a set $\mathcal{F}$ of CPTP maps and a distribution $\mathcal{D}$ on $n$-bit inputs, where $n=m{-}1$, -- with the following property:
%an $n$-bit input distribution $\mathcal{D}$, an $m$-qubit observable $O$ $(\norm{O}\leq 1)$ and a set $\cF$ of CPTP maps with compatible dimensions ($n$ input qubits and $m$ output qubits) such that
Any restricted classical ML model that can learn a function $h_{\mathrm{C}}(x)$ that achieves
\begin{equation}
\E_{x \sim \mathcal{D}} \left| h_{\mathrm{C}}(x) - \Tr(O \mathcal{E}(\ketbra{x}{x})) \right|^2 \leq \epsilon,
\end{equation}
must use classical training data of size $N_{\mathrm{C}} = \Omega(m N_{\mathrm{Q}} / \epsilon)$, where $N_{\mathrm{Q}}$ is the number of queries in the best quantum ML model.
\end{proposition}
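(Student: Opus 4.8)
The plan is to exhibit a learning problem in which one coherent query lets a quantum model read off a hidden $(m{-}1)$-bit string, while a \emph{restricted} classical model — forced to measure the fixed observable $O$ — only ever sees a $\pm1$ outcome biased by $\Theta(\sqrt\epsilon)$, so that it must spend $\Omega(1/\epsilon)$ shots per bit. Concretely, I would take $\mathcal{D}$ to be uniform on $\{0,1\}^{m-1}$, set $O=Z_m$ (a Pauli $Z$ on the last of the $m$ output qubits, so $\|O\|=1$), fix $\gamma$ slightly above $\sqrt{2\epsilon}$ (possible since $\epsilon<1/3$ gives $\sqrt{2\epsilon}<1$), and for each $a\in\{0,1\}^{m-1}$ define the measure-and-prepare CPTP map
\[
\cE_a(\rho)=\sum_{x\in\{0,1\}^{m-1}}\langle x|\rho|x\rangle\;\ketbra{a}{a}\otimes\tfrac12\bigl(I+\gamma(-1)^{a\cdot x}Z\bigr),\qquad \cF=\{\cE_a\}_{a\in\{0,1\}^{m-1}}.
\]
Then $f_{\cE_a}(x)=\Tr\bigl(O\,\cE_a(\ketbra{x}{x})\bigr)=\gamma(-1)^{a\cdot x}$, and for $a\neq a'$ one computes $\E_{x\sim\mathcal{D}}|f_{\cE_a}(x)-f_{\cE_{a'}}(x)|^2=2\gamma^2>4\epsilon$, so distinct maps yield $4\epsilon$-separated target functions.

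Next I would establish the quantum upper bound. A quantum ML model issues a \emph{single} query with the fixed classical input $x_0=0^{m-1}$: the output state is $\ketbra{a}{a}\otimes\tfrac12(I+\gamma Z)$, whose first $m-1$ qubits are exactly $\ket{a}$. Measuring them in the computational basis recovers $a$ with certainty, after which the model outputs $h_{\mathrm{Q}}(x)=\gamma(-1)^{a\cdot x}=f_{\cE_a}(x)$, with zero prediction error for every $a$. Since trivially $N_{\mathrm{Q}}\geq 1$, the optimal quantum query count is $N_{\mathrm{Q}}=1$, and the target bound $N_{\mathrm{C}}=\Omega(mN_{\mathrm{Q}}/\epsilon)$ becomes $N_{\mathrm{C}}=\Omega(m/\epsilon)$.

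The bulk of the work is the restricted-classical lower bound, which I would argue in two steps. First, a rounding step: because the functions $\{f_{\cE_{a'}}\}$ are pairwise more than $4\epsilon$ apart, a triangle inequality shows that any $h_{\mathrm{C}}$ with $\E_{x\sim\mathcal{D}}|h_{\mathrm{C}}(x)-f_{\cE_a}(x)|^2\leq\epsilon$ is within $\epsilon$ of $f_{\cE_{a'}}$ for exactly one $a'$, namely $a'=a$; hence a restricted classical ML that succeeds with probability $\geq 2/3$ yields an estimator $\hat a$ of $a$ that is correct with probability $\geq 2/3$. Second, an information bottleneck: $a$ is uniform on $\{0,1\}^{m-1}$, and the model's whole transcript is $\{(x_i,o_i)\}_{i=1}^{N_{\mathrm{C}}}$ with $o_i\in\{\pm1\}$ and $\Pr[o_i=+1\mid a,x_i,\text{past}]=\tfrac12(1+\gamma(-1)^{a\cdot x_i})$. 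Using the chain rule $I(a:\text{transcript})=\sum_i\bigl[I(a:x_i\mid\text{past})+I(a:o_i\mid x_i,\text{past})\bigr]$, the facts that each $x_i$ is a function of the past outcomes and independent randomness (so $I(a:x_i\mid\text{past})=0$) and that, given $x_i$, the bit $o_i$ depends on $a$ only through the single parity $a\cdot x_i$ transmitted through a binary symmetric channel of capacity $\log 2-H\bigl(\tfrac12-\tfrac\gamma2\bigr)=\Theta(\gamma^2)=\Theta(\epsilon)$, I obtain $I(a:\text{transcript})\leq N_{\mathrm{C}}\cdot O(\epsilon)$. Fano's inequality, combined with correct recovery of $a$ with probability $\geq 2/3$, forces $I(a:\text{transcript})=\Omega(m)$, and comparing the two bounds gives $N_{\mathrm{C}}=\Omega(m/\epsilon)=\Omega(mN_{\mathrm{Q}}/\epsilon)$.

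The step I expect to require the most care is engineering the two-sided tension into a single map: $\cE_a$ must leak $a$ cleanly to \emph{some} measurement (which is why $a$ is written into a separate block of $m-1$ output qubits, and why the role of $n=m-1<m$ is essential), yet the specific observable $O$ the restricted classical model is allowed to measure must reveal only a weakly biased coin — forcing the bias amplitude to be $\gamma=\Theta(\sqrt\epsilon)$, which is exactly what produces the extra $1/\epsilon$ beyond the factor $m$ coming from the $m$ output qubits. On the technical side the only subtleties are handling the adaptive choice of the $x_i$ (cleanly dispatched by the chain-rule decomposition of $I(a:\text{transcript})$) and the Taylor estimate $H\bigl(\tfrac12-\tfrac\gamma2\bigr)=\log 2-\Theta(\gamma^2)$ that yields the $O(\epsilon)$ per-sample information.
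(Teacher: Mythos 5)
Your proposal is correct and follows essentially the same route as the paper's own proof: a family $\{\cE_a\}$ whose output writes $a$ into $m-1$ ancilla qubits (recoverable by a single quantum query, so $N_{\mathrm{Q}}=1$) while the designated observable sees only a $\pm1$ coin with bias $\Theta(\sqrt{\epsilon})$ times $(-1)^{a\cdot x}$, followed by the same Fano-plus-chain-rule argument bounding the per-sample information by $O(\epsilon)$. The only differences are cosmetic — you place the signal qubit last and take bias $\gamma$ just above $\sqrt{2\epsilon}$ where the paper uses $\sqrt{3\epsilon}$ (the paper itself notes any $\alpha>2$ in $\sqrt{\alpha\epsilon}$ works).
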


This statement follows from constructing a stylized learning problem that admits the largest possible separation (albeit only a small polynomial factor). We first introduce the problem and discuss quantum and classical strategies (and their limitations) afterwards. We will focus on restricted classical ML models, because Theorem~\ref{thm:noadvquantum} also considers restricted classical ML models.
We leave open the question of whether the separation between unrestricted classical ML and quantum ML is tight or not.

\paragraph{Learning problem formulation}
Fix $\epsilon \in (0,1/3)$, let $m$ be the integer in the statement of Proposition \ref{prop:noadvquantum-appendix} and set $n=m-1$.
%We consider CPTP maps $\mathcal{E}$ that preserve the number of qubits, i.e.\ $n=m-1$. More precisely, the set $\mathcal{F}=\left\{ \mathcal{E}_{a}:\; a \in \left\{0,1\right\}^{n}\right\}$ contains $2^{n}$ elements.
We consider a set of CPTP maps $\mathcal{F}=\left\{ \mathcal{E}_{a}:\; a \in \left\{0,1\right\}^{n}\right\}$ containing $2^{n}$ elements, where each map in the set takes an $n$-qubit input to an $(n{+}1)$-qubit output.
The map $\mathcal{E}_a$, labeled by bit string $a \in \left\{0,1\right\}^{n}$, is comprised of $2 \times 2^{n}$ Kraus operators:
\begin{equation}
\mathcal{E}_{a}(\rho)
= \sum_{z \in \left\{0,1\right\}^{n}}
\sum_{i=1}^2 K^{z,i}_{a}\rho (K^{z,i}_{a})^\dagger
\quad \text{with} \quad
\begin{cases}
K_{a}^{z,1} =& \sqrt{\tfrac{1+\sqrt{3 \epsilon}}{2}} (I \otimes I^{\otimes n})|a \odot z,a \rangle \! \langle z|, \\
K_{a}^{z,2} =& \sqrt{\tfrac{1-\sqrt{3 \epsilon}}{2}} (X \otimes I^{\otimes n})|a \odot z,a \rangle \! \langle z|.
\end{cases}
\label{eq:function-class}
\end{equation}
Here, $X$ is a single-qubit bit flip and
$a \odot z \in \left\{0,1\right\}$ denotes the inner product of bit-strings in $\mathbb{Z}_2$.
We also choose the $(n{+}1)$-qubit observable
$O= Z \otimes I^{\otimes n}$, i.e.\ we measure the first qubit in the $Z$-basis and trace out the rest of the system. By construction, the resulting function %~\eqref{eq:target-function}
admits a closed-form expression:
\begin{equation}
f_a (x) = \Tr \left( O \cE_{a} (\ketbra{x}{x}) \right)
=
\sqrt{3 \epsilon}\left( 1- 2 a \odot x \right).
\label{eq:closed-form-expression}
\end{equation}
We consider $\cD$ to be the uniform distribution over the $n$-bit inputs.

\paragraph{Upper bound on the quantum query complexity}

The above learning problem is %exceedingly
easy to solve in the quantum realm. Since the set of CPTP maps $\mathcal{F} = \left\{ \mathcal{E}_a:\; a \in \left\{0,1\right\}^{n}\right\}$
is known, it suffices to extract the label $a \in \left\{0,1\right\}^{n}$ of the underlying CPTP map. Once $a$ is known, the closed-form expression~\eqref{eq:closed-form-expression} allows to predict future function values $f_a (x)$ efficiently with perfect accuracy -- regardless of the input $x \in \left\{0,1\right\}^n$.

Quantum computers are well equipped to extract the label $a$. In fact, a single query of the unknown CPTP map $\mathcal{E}_a$ suffices to extract the label by executing the following simple procedure:
\begin{enumerate}
\item prepare the all-zero state on $n$ qubits: $\rho_0 = |0,\ldots,0 \rangle \! \langle 0,\ldots,0|$;
\item query $\mathcal{E}$ and apply it to $\rho_0$: $\rho_1 = \tfrac{1}{2} (I + \sqrt{3 \epsilon} Z ) \otimes |a \rangle \! \langle a|$, according to Equation~\eqref{eq:function-class};
\item throw away (trace out) the first qubit to obtain the $n$ remaining ones: $\rho_2 = |a \rangle \! \langle a|$;
\item perform a computational basis measurement to extact $a \in \left\{0,1\right\}^{n}$ with probability one.
\end{enumerate}
We see that a single quantum %sample
query ($N_{\mathrm{Q}}=1$) suffices to extract the label $a$ with certainty. Subsequently, we can make efficient and perfect predictions via the closed-form expression~\eqref{eq:closed-form-expression}: %$N_{\mathrm{Q}}=1$ allows for training a ML model $h_{\mathrm{Q}}(x)$ that achieves
\begin{equation}
\mathbb{E}_{x \sim \mathcal{D}} \left| h_{\mathrm{Q}}(x)-\Tr \left( O \cE_a (\ketbra{x}{x}) \right) \right|^2 =0 \leq \epsilon \quad \Leftarrow \quad N_{\mathrm{Q}} =1. \label{eq:explicit-quantum-bound}
\end{equation}
%for any input distribution $\mathcal{D}$.
In words, $N_{\mathrm{Q}}=1$ allows for training a quantum ML model $h_{\mathrm{Q}}(x)$ that achieves zero prediction error
for all input distributions (perfect prediction).
This concrete ML model is also optimal, because $N_{\mathrm{Q}}=1$ is the smallest number of queries conceivable ($N_{\mathrm{Q}}=0$ would not reveal any information about the underlying CPTP map).

\paragraph{Lower bound on the classical query complexity}

Let us now turn to potential classical strategies for solving the above learning problem. In contrast to the previous paragraph, we will not construct an explicit strategy. Instead, we will use ideas similar to Appendix~\ref{sec:MIanalysisML} to establish a fundamental lower bound.

Recall that the input distribution $\cD$ is taken to be the uniform distribution.
Also, for each $\cE_a \in \cF$, the underlying function $f_a (x) = \Tr \left(O\mathcal{E}_a (\ketbra{x}{x})\right)$ admits a closed form expression, see Equation~\eqref{eq:closed-form-expression}.
For $a,b \in \left\{0,1\right\}^{n}$,
\begin{align}
\E_{x \sim \mathcal{D}} \left|f_a (x) - f_b (x) \right|^2
=&
\frac{1}{2^{n}}\sum_{x \in \left\{0,1\right\}^{n}}  \left| \sqrt{3 \epsilon} 2 (a - b) \odot x \right|^2 %\\
= \begin{cases}
0 & \text{if $a=b$}, \\
6 \epsilon& \text{else},
\end{cases}
\end{align}
because $2^{-n}\sum_{x \in \left\{0,1\right\}^{n}} \left|c \odot x \right|^2 = 1/2$ for all $n$-bit strings $c \neq (0,\ldots,0)$.
Now, suppose that a restricted classical ML model can utilize training data $\mathcal{T}=\left\{(x_i,o_i)\right\}_{i=1}^{N_{\mathrm{C}}}$ to learn a function $h_{\mathrm{C}}(x)$ that obeys $\E_{x \sim \mathcal{D}} \left|h_{\mathrm{C}} (x) - \Tr \left(O \cE_a (\ketbra{x}{x})\right) \right|^2 \leq \epsilon$ with high probability for any label $a \in \left\{0,1\right\}^{n}$.
Then, this model would also allow us to identify the underlying label.
Indeed $\E_{x \sim \mathcal{D}} \left|h_{\mathrm{C}} (x) - f_b  \right|^2 \leq \epsilon$ if $b=a$, while for $b \neq a$, by the triangle inequality,
\begin{equation}\label{eq:triangle-distinguish}
\E_{x \sim \mathcal{D}} \left|h_{\mathrm{C}} (x) - f_b(x)  \right|^2
\geq \Big(\sqrt{\E_{x \sim \mathcal{D}} \left|f_a (x) - f_b (x) \right|^2}
- \sqrt{\E_{x \sim \mathcal{D}} \left|h_{\mathrm{C}} (x) - f_a (x) \right|^2} \Big)^2
\geq \left(\sqrt{6 \epsilon} - \sqrt{\epsilon}\right)^2 %= (\sqrt{6}-1)^2 \epsilon
> \epsilon.
\end{equation}
By checking $\E_{x \sim D} \left|h_{\mathrm{C}}(x)-f_b (x)\right|^2 \leq \epsilon$ for every possible value $b \in \left\{0,1\right\}^n$, the restricted classical ML model allows us to recover the underlying bit-string label $a \in \left\{0,1\right\}^{n}$
with high probability. For this part of the argument, what's essential is that the right-hand-side of the inequality Equation~(\ref{eq:triangle-distinguish}) is greater than $\epsilon$. If we replace $3\epsilon$ in Equation~(\ref{eq:function-class}) by $\alpha \epsilon$, where $\alpha$ is a constant, we require $\sqrt{2\alpha} - 1> 1$, or $\alpha > 2$. We chose $\alpha = 3$ merely for convenience.

For any random hidden bitstring $a \in \{0, 1\}^n$, we can use the restricted classical ML to obtain the training data $\{(x_i, o_i)\}_{i=1}^{N_{\mathrm{C}}}$ and determine the underlying bitstring $a$.
We assume that the restricted classical ML first query $x_1$ obtains $o_1$, then query $x_2$ obtains $o_2$, and so on.
We also have
\begin{equation}
o_i =
\begin{cases}
+ 1& \text{with probability } p_+=\frac{1}{2} \left( 1 + \sqrt{3 \epsilon} \left(1- 2 a \odot x_i \right) \right), \\
-1 &  \text{with probability } p_-=\frac{1}{2} \left( 1- \sqrt{3 \epsilon} \left(1- 2 a \odot x_i \right) \right),
\end{cases}
\label{eq:measurement-outcome}
\end{equation}
which is a single-shot outcome for measuring the observable $O$ on the state $\cE_a (\ketbra{x_i}{x_i})$ in the eigenbasis of $O=Z \otimes I^{\otimes n}$.
Because we can use the training data $\{(x_i, o_i)\}_{i=1}^{N_{\mathrm{C}}}$ to determine $a$ with high probability (by the assumption of the restricted classical ML model), Fano's inequality and the data processing inequality then imply a bound on the mutual information between the training data and the CPTP map label $a$:
\begin{equation}
I \big( a: \left\{ (x_i,o_i) \right\}_{i=1}^{N_{\mathrm{C}}} \big) = \Omega(n).
\label{eq:classical-aux1}
\end{equation}
Next, using chain rule of mutual information based on conditional mutual information, we have
\begin{equation}
I \big( a: \left\{ (x_i,o_i) \right\}_{i=1}^{N_{\mathrm{C}}} \big)
= \sum_{i=1}^{N_{\mathrm{C}}} I (a: (x_i,o_i) | \{(x_j, o_j)\}_{j=1}^{i-1}) = \sum_{i=1}^{N_{\mathrm{C}}} I (a: o_i | \{(x_j, o_j)\}_{j=1}^{i-1}, x_i).
\label{eq:classical-aux2}
\end{equation}
The second equality follows from the fact that $x_i$ is chosen by the restricted classical ML using only the information of $\{(x_j, o_j)\}_{j=1}^{i-1}$, hence the input $x_i$ does not provide any additional information about $a$, i.e., $I(a : x_i | \{(x_j, o_j)\}_{j=1}^{i-1}) = 0$. We now upper bound each term:
\begin{equation}
    I (a: o_i | \{(x_j, o_j)\}_{j=1}^{i-1}, x_i) = H(o_i | \{(x_j, o_j)\}_{j=1}^{i-1}, x_i) - H(o_i | \{(x_j, o_j)\}_{j=1}^{i-1}, x_i, a)
\end{equation}
Because $o_i$ is a two-outcome random variable, $H(o_i | \{(x_j, o_j)\}_{j=1}^{i-1}, x_i) \leq H(o_i) \leq \log_2(2)=1$.
We now consider the distribution of $o_i$ when we condition on
$\{(x_j, o_j)\}_{j=1}^{i-1}, x_i, a$.
A closer inspection of Equation~\eqref{eq:measurement-outcome} reveals that the probability of one outcome is $p=\tfrac{1}{2} \left( 1 + \sqrt{3 \epsilon} \right)$ and the other is $1-p$ (the value $a \odot x_i \in \left\{0,1\right\}$ only ever permutes the outcome sign). This ensures
\begin{equation}
H(o_i | \{(x_j, o_j)\}_{j=1}^{i-1}, x_i, a) = -p \log_2 (p) - (1-p) \log_2 (1-p) \geq \log_2 (2) - (2p-1)^2,
\end{equation}
and we conclude
\begin{equation}
I (a: o_i | \{(x_j, o_j)\}_{j=1}^{i-1}, x_i) \leq (2p-1)^2 = 3 \epsilon.
\label{eq:classical-aux3}
\end{equation}
Finally, we %use $m-1=n$ and
combine Eqs.~\eqref{eq:classical-aux1}, \eqref{eq:classical-aux2} and \eqref{eq:classical-aux3} to conclude
\begin{align*}
%\Omega (m) =
\Omega (n) \leq  I \big( a: \left\{ (x_i,o_i) \right\}_{i=1}^{N_{\mathrm{C}}} \big)
\leq \sum_{i=1}^{N_{\mathrm{C}}} I (a: o_i | \{(x_j, o_j)\}_{j=1}^{i-1}, x_i) \leq 3 \epsilon N_{\mathrm{C}}.
\end{align*}
Therefore, recalling that the output size of our set of maps is $m=n{+}1$, we have for a restricted classical ML model with small average prediction error:
\begin{equation}
\E_{x \sim \mathcal{D}} \left| h_{\mathrm{C}}(x) - \Tr \left( O \cE_a (\ketbra{x}{x}) \right) \right|^2 \leq \epsilon \quad \text{with high probability} \quad \Rightarrow \quad N_{\mathrm{C}} = \Omega \left( m/\epsilon \right).
\end{equation}
Proposition~\ref{prop:noadvquantum-appendix} follows from combining this assertion with the fact that the underlying learning problem does admit a perfect quantum solution with $N_{\mathrm{Q}}=1$, see Equation~\eqref{eq:explicit-quantum-bound}.

\section{Exponential separation for predicting expectation values of Pauli operators}
\label{app:exp-sep-Pauli}

In this section, we consider an example to demonstrate the existence of exponential information-theoretic quantum advantage when we want to achieve small worst-case prediction error.

\subsection{Task description}
\label{sec:exp-task-desc}

The learning task is to train an ML model that allows accurate prediction of
\begin{equation}
    x \in \{I, X, Y, Z\}^{n} \rightarrow \Tr(P_x \rho),
\end{equation}
where $\rho$ is an unknown $n$-qubit state, $X, Y, Z$ are the single-qubit Pauli operators, and $P_x$ is the tensor product of Pauli operators given by $x$.
This task does fit %easily fits
into the framework described in the main text.
Every unknown quantum state $\rho$ defines an unknown quantum channel $\cE_\rho$.
The unknown quantum channel $\cE_\rho$ takes a classical input $x \in \{I, X, Y, Z\}^n$, prepares the unknown quantum state $\rho$, and rotates the quantum state $\rho$ according to the input $x$ such that a Pauli-Z measurement on the first qubit is equivalent to measuring $P_x$ on $\rho$.
More precisely, we define
\begin{equation}
    \cE_\rho(\ketbra{x}{y}) = \delta_{x, y} C_x^\dagger \rho C_x,
\end{equation}
where $\ket{x}$ is an encoding of the classical input $x$ (e.g., as a $2n$-qubit computational basis state), $C_x$ is a Clifford unitary that satisfies
\begin{equation}
    C_x Z_1 C_x^\dagger = P_x.
\end{equation}
We can extend this definition linearly to all of quantum state space.
%and we extend the above definition linearly over the entire space of quantum state space.
The goal of the machine learning model is to produce $f(x)$ such that
\begin{equation}
    \max_{x \in \{I, X, Y, Z\}^n} \left| f(x) - \Tr(Z_1 \cE_\rho(\ketbra{x}{x})) \right| \leq \epsilon.
\end{equation}
We consider restricted classical ML models that can only obtain $\{(x_i, o_i)\}$, where $x_i$ are inputs denoting the Pauli operators, and $o_i$ are measurement outcome when measuring $\cE_\rho(\ketbra{x_i}{x_i})$ with $Z_1$, which is equivalent to the measurement outcome of measuring $\rho$ with $P_{x_i}$.
Classical ML models can obtain $\{(x_i, o_i)\}$, where $o_i$ is now the measurement outcome of an arbitrary POVM on the output state $\cE_\rho(\ketbra{x_i}{x_i})$.
Note that for restricted classical ML, $o_i \in \mathbb{R}$, but for classical ML, $o_i$ is in the set indexing the POVM elements $\{F_o\}_o$ with $\sum_o F_o = I$.
Quantum ML models can access the unknown quantum channel $\cE_{\rho}$ at will.
Recall that, according to Theorem \ref{thm:noadvquantum}, for achieving a small average prediction error according to any input distribution, no large quantum advantage in sample complexity can be found.

In the following, we will show that to achieve a small worst-case prediction error, an exponential quantum advantage in sample complexity is possible.
In Section~\ref{sec:pauli-qml}, we give a simple quantum ML algorithm that can accurately predict the expectation values of all $4^n$ Pauli observables using only $\mathcal{O}(n)$ samples.
In Section~\ref{sec:lowbd-cml-pauli}~and~\ref{sec:lowerboundindmeas}, we will show that any classical ML algorithm requires at least an exponential number of samples to accurately predict the expectation values of all $4^n$ Pauli observables.
In Section~\ref{sec:lowerboundentangled}, we will give a matching lower bound $\Omega(n)$ for quantum ML.

\subsection{Sample complexity of a quantum ML algorithm}
\label{sec:pauli-qml}

The quantum ML algorithm accesses the quantum channel $\cE_\rho$ multiple times to obtain multiple copies of the underlying quantum state $\rho$.
Each access to $\cE_\rho$ allows us to obtain one copy of $\rho$.
Then, the quantum ML algorithm performs a sequence of measurements on the copies of $\rho$ to accurately predict $\Tr(P_x \rho)$ for all $x \in \{I, X, Y, Z\}^n$.
To this end, we will give a detailed proof for Theorem~\ref{thm:sampPauliQ} given below.
From Theorem~\ref{thm:sampPauliQ}, we only need $\mathcal{O}(\log(100 \times 4^n) / \epsilon^4) = \mathcal{O}(n / \epsilon^4)$ copies to accurately predict all $4^n$ Pauli operators with probability at least $0.99$.
Hence we only need to access the quantum channel $\cE_\rho$ for $\mathcal{O}(n / \epsilon^4)$ times.

\begin{theorem} \label{thm:sampPauliQ}
For any $M$ Pauli operators $P_1, \ldots, P_M$, there is a procedure that produces $\hat{p}_1, \ldots, \hat{p}_M$ with
\begin{equation}
    \left|\hat{p}_i - \Tr(P_i \rho)\right| \leq \epsilon,\,\, \forall i = 1, \ldots, M,
\end{equation}
under probability at least $1-\delta$ by performing POVM measurements on
\begin{equation}
    N = \mathcal{O}\left( \frac{\log(M / \delta)}{\epsilon^4} \right)
\end{equation}
copies of the unknown quantum state $\rho$.
\end{theorem}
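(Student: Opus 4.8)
The plan is to realize the procedure announced in the main text as a two-stage algorithm. The first stage uses only a classical memory and outputs, for every target Pauli $P_x$, an estimate $\hat a_x$ of the absolute value $|\Tr(P_x\rho)|$; the second stage uses a quantum memory to recover the missing signs. I would show that the first stage costs $N_1=\mathcal{O}(\log(M/\delta)/\epsilon^4)$ copies and the second costs $N_2=\mathcal{O}(\log(M/\delta)/\epsilon^2)$ copies, so that the total $N=N_1+N_2$ is dominated by the first stage and matches the claimed bound. Throughout, I fix small stage-1 target accuracy $\Theta(\epsilon)$ and a threshold $\Theta(\epsilon)$: for every $x$ whose estimate $\hat a_x$ falls below the threshold I simply output $\hat p_x=0$, which already predicts $\Tr(P_x\rho)$ to error $\mathcal{O}(\epsilon)$ by the triangle inequality, so only those $x$ with $|\Tr(P_x\rho)|$ bounded below by a constant multiple of $\epsilon$ need a sign.

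\emph{Stage 1 (absolute values via Bell sampling).} Group $N_1$ copies into $S=\lfloor N_1/2\rfloor$ pairs $\rho\otimes\rho$ and, on each pair, measure qubit by qubit in the Bell basis. Since the four Bell states of a qubit pair are the joint eigenstates of the commuting observables $X\otimes X$ and $Z\otimes Z$, and $Y\otimes Y=-(X\otimes X)(Z\otimes Z)$, each Bell outcome reveals the eigenvalue $\pm1$ of $Q\otimes Q$ for every single-qubit Pauli $Q$. For a fixed $x$, the product of these eigenvalues over the qubits with $x_i\neq I$ is the $\pm1$-valued outcome of the observable $P_x\otimes P_x$ on $\rho\otimes\rho$, whose expectation equals $\Tr[(P_x\otimes P_x)(\rho\otimes\rho)]=\Tr(P_x\rho)^2=|\Tr(P_x\rho)|^2\in[0,1]$. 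Averaging over the $S$ pairs gives $\hat V_x$; Hoeffding's inequality plus a union bound over the $M$ Paulis show that $S=\mathcal{O}(\log(M/\delta)/\eta^2)$ suffices for $|\hat V_x-|\Tr(P_x\rho)|^2|\le\eta$ for all $x$ with probability $\ge 1-\delta/2$. Setting $\hat a_x=\sqrt{\max(0,\hat V_x)}$ and using $|\sqrt u-\sqrt v|\le\sqrt{|u-v|}$ yields $\big|\hat a_x-|\Tr(P_x\rho)|\big|\le\sqrt\eta$; choosing $\eta=\Theta(\epsilon^2)$ gives accuracy $\mathcal{O}(\epsilon)$ and $N_1=2S=\mathcal{O}(\log(M/\delta)/\epsilon^4)$.

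\emph{Stage 2 (signs via gentle measurements).} Store $N_2$ fresh copies as $\rho^{\otimes N_2}$ in the quantum memory. For each surviving $x$, form the observable $\bar P_x=\frac1{N_2}\sum_{j=1}^{N_2}P_x^{(j)}$ (the $P_x$ acting on the $j$-th copy; these mutually commute) and perform the two-outcome projective measurement $\{\Pi_{\bar P_x\ge 0},\,\Pi_{\bar P_x<0}\}$, reading the observed sign as the guess for $\sign(\Tr(P_x\rho))$. On the exact state $\rho^{\otimes N_2}$ this measurement produces $N_2$ i.i.d.\ $\pm1$ bits of mean $\Tr(P_x\rho)$, so by Hoeffding it returns the wrong sign with probability at most $\exp(-N_2(\Tr(P_x\rho))^2/2)\le\exp(-\Omega(N_2\epsilon^2))$. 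Because one outcome is this strongly biased, the Gentle Measurement Lemma guarantees the post-measurement state is within trace distance $\mathcal{O}(\exp(-\Omega(N_2\epsilon^2)))$ of $\rho^{\otimes N_2}$. I would then induct over the at most $M$ sign measurements: as long as the accumulated trace-distance drift stays small, each subsequent measurement still errs with only exponentially small probability and disturbs the register by an exponentially small amount, so a union bound recovers all signs and keeps the drift negligible provided $N_2=\mathcal{O}(\log(M/\delta)/\epsilon^2)$. Outputting $\hat p_x=(\text{recovered sign})\cdot\hat a_x$ then gives $|\hat p_x-\Tr(P_x\rho)|\le\epsilon$ for all $x$, with total failure probability $\le\delta$ after union-bounding the two stages, and $N=N_1+N_2=\mathcal{O}(\log(M/\delta)/\epsilon^4)$.

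\emph{Main obstacle.} The delicate point is Stage 2: the observables $\bar P_x$ for different $x$ do not commute, so the order of the sign measurements matters and the register really is disturbed at each step. The argument works only because each measurement is ``gentle'' — the exponential bias supplied by Hoeffding makes the per-step disturbance exponentially small — and because a careful inductive bookkeeping keeps the cumulative drift over all $M$ measurements far below that bias. Lining up the constants in the exponent, the threshold, and the Stage-1 accuracy target so that the final error is exactly $\epsilon$ at the stated sample complexity is the part that needs the most care; the Stage-1 Bell-sampling identity is routine once one is careful with the relation $Y\otimes Y=-(X\otimes X)(Z\otimes Z)$.
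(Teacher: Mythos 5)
Your two-stage plan is the same as the paper's (Bell-sampling for $|\Tr(P_x\rho)|^2$ with $N_1=\mathcal{O}(\log(M/\delta)/\epsilon^4)$, then majority-vote sign measurements on a quantum register with $N_2=\mathcal{O}(\log(M/\delta)/\epsilon^2)$, with a threshold so that small expectation values are simply set to zero). Stage 1 is correct and essentially identical to the paper's argument.

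The genuine gap is in the Stage-2 bookkeeping. The inductive trace-distance argument you sketch does not close. Write $p=\exp(-\Omega(N_2\epsilon^2))$ for the wrong-sign probability on the \emph{ideal} state and $D_k$ for the trace distance between the actual register after $k$ sign measurements and $\rho^{\otimes N_2}$. The gentle measurement lemma bounds the disturbance of a near-deterministic binary measurement by the \emph{square root} of the unlikely-outcome probability on the \emph{current} state, so the recursion is $D_{k+1}\leq D_k+\mathcal{O}\bigl(\sqrt{p+D_k}\bigr)$, not $D_{k+1}\leq D_k+\mathcal{O}(p)$. Starting from $D_1=\mathcal{O}(\sqrt{p})$ this iterates to $D_k\approx p^{1/2^k}$, which reaches order one after only $\mathcal{O}(\log\log(1/p))$ measurements; making it survive $M$ steps this way would force $N_2=\Omega(2^M/\epsilon^2)$. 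The disturbance is \emph{not} ``exponentially small of the same order as the bias'' at every step, which is the hidden assumption in your induction.

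The correct tool — and what the paper uses — is Aaronson's quantum union bound for sequential two-outcome measurements (Lemma~\ref{lem:Qunion}): if each $K_i$ satisfies $\Tr(K_i\rho)\geq 1-\eta$ on the \emph{original} state, then measuring $K_1,\dots,K_M$ in sequence yields all ``accept'' outcomes with probability at least $1-M\sqrt{\eta}$. This bound is proved by tracking amplitudes/fidelity rather than iterating the trace-distance gentle measurement lemma, and it is precisely what avoids the doubly-exponential blow-up. With it, one sets the per-measurement error to $\eta=\delta^2/M^2$, i.e.\ $N_2=\Theta(\log(M/\delta)/\epsilon^2)$, and gets total Stage-2 failure probability $M\sqrt{\eta}=\delta$. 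Replace your induction with an invocation of that lemma and the rest of your argument goes through as written.
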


The procedure takes in the Pauli operators one by one and produces the estimate $\hat{p}_i$ sequentially.
Throughout the prediction process, the procedure maintains two blocks of memory:
\begin{enumerate}
    \item Classical memory: We perform $N_1$ repetitions of Bell measurements on two copies of the quantum state $\rho \otimes \rho$. For repetition $t$, we go through every qubit, and measure the $k$-th qubit from the first and second copies in the Bell basis to obtain:
\begin{equation}\label{eq:Bell-basis-meas}
    S^{(t)}_k \in \Big\{ \ketbra{\Psi^+}{\Psi^+}, \ketbra{\Psi^-}{\Psi^-}, \ketbra{\Phi^+}{\Phi^+}, \ketbra{\Phi^-}{\Phi^-} \Big\},
\end{equation}
where the Bell basis encompasses four maximally entangled 2-qubit states. Set $\ket{\Omega} = \tfrac{1}{\sqrt{2}} \left( \ket{00}+\ket{11} \right)$ (Bell state) and define
\begin{align*}
    \ket{\Psi^+} &= I \otimes I \ket{\Omega}
    =\frac{1}{\sqrt{2}} \left( \ket{00} + \ket{11} \right),\\
    \ket{\Psi^-} &=   I \otimes Z \ket{\Omega}
    = \frac{1}{\sqrt{2}} \left( \ket{00} - \ket{11} \right),\\
   \ket{\Phi^+} &= I \otimes X \ket{\Omega}
    = \frac{1}{\sqrt{2}} \left( \ket{01} + \ket{10} \right),\\
    \ket{\Phi^-} &= \mathrm{i} I \otimes Y \ket{\Omega}
    =\frac{1}{\sqrt{2}} \left( \ket{01} - \ket{10} \right).
\end{align*}
We then efficiently store the measurement data $S^{(t)}_k, \forall k = 1, \ldots, n, \forall t = 1, \ldots, N_1$ in a classical memory with $2 nN_1$ bits.
We use this block of memory to estimate $|\Tr(P \rho)|^2$ for any Pauli operator $P$.
    \item Quantum memory: We store $N_2$ copies of the unknown quantum state $\rho$. We use this block of memory to estimate $\sign(\Tr(P \rho))$ for any Pauli operator $P$.
\end{enumerate}

Consider any tensor product of Pauli operators $P = \sigma_{1} \otimes \ldots \otimes \sigma_{n}$, where $\sigma_k \in \{I, X, Y, Z\}, \forall k = 1, \ldots, n$. The classical memory is used to predict the absolute value of $\Tr(P \rho)$, while the quantum memory is used to predict the sign of $\Tr(P \rho)$. This allows us to obtain an accurate estimate for $\Tr(P \rho)$. The following remark is central to the procedure.

\begin{remark}
If we find that the absolute value of $\Tr(P \rho)$ is close to zero, then we need not % and should not
use the quantum memory to predict the sign of $\Tr(P \rho)$. By measuring the sign only for the Pauli observables such that the absolute value of the expectation value is appreciable, we can find the sign for may Pauli observables without badly disturbing the copies of $\rho$ stored in the quantum memory.
\end{remark}

\noindent We proceed to give a detailed procedure for estimating the absolute value and the sign of $\Tr(P_i \rho)$.

\subsubsection{Measuring absolute values}

To understand how the absolute value of the expectation of a Pauli operator is estimated, first consider the case where $\rho$ is the density operator of a single qubit, and suppose that $\rho\otimes \rho$ is measured in the Bell basis. The outcome $S$ is the projector onto one of the four Bell states, as in Equation~(\ref{eq:Bell-basis-meas}). If $\sigma\in \{I,X,Y,Z\}$ is any Pauli matrix, then each Bell state is an eigenstate of $\sigma\otimes\sigma$ with eigenvalue $\pm 1$. In the state $S$, the $+1$ eigenvalue of $\sigma\otimes\sigma$ occurs with probability $\mathrm{Prob}(+) = \frac{1}{2}\Tr\left((I\otimes I+\sigma\otimes\sigma)(\rho\otimes\rho)\right)$, and the $-1$ eigenvalue  occurs with probability $\mathrm{Prob}(-) = \frac{1}{2}\Tr\left((I\otimes I-\sigma\otimes\sigma)(\rho\otimes\rho)\right)$. Therefore,
\begin{equation}
    \E\left[\Tr\left((\sigma \otimes \sigma) S\right)\right]
    = \mathrm{Prob}(+) - \mathrm{Prob}(-) = \Tr \left((\sigma\otimes\sigma)(\rho\otimes\rho)\right) = |\Tr (\sigma\rho)|^2,
\end{equation}

This observation can be generalized to the case where $\rho$ is an $n$-qubit state, and each pair of qubits in $\rho\otimes \rho$ is measured in the Bell basis, yielding the outcomes $\{S_k, k = 1,2, \dots, n\}$. If $P = \sigma_{1} \otimes \ldots \otimes \sigma_{n}$ is a Pauli observable, then $S_k$ is an eigenstate of $\sigma_k\otimes \sigma_k$ with eigenvalue $\pm 1$ for each $k$, just as in the $n=1$ case; in particular,
\begin{equation}
     \prod_{k=1}^n \Tr\left((\sigma_{k} \otimes \sigma_{k}) S_k\right)
     =\pm 1.
\end{equation}
This product is $+1$ when  $\otimes_{k=1}^n S_k$ is an eigenstate of $P\otimes P$ with eigenvalue $+1$, and it is $-1$ when  $\otimes_{k=1}^n S_k$ is an eigenstate of $P\otimes P$ with eigenvalue $-1$.
Therefore,
\begin{align}\label{eq:bell-expected}
     & \E\left[\prod_{k=1}^n \Tr\left((\sigma_{k} \otimes \sigma_{k}) S_k\right)\right]
     =\E\left[ \Tr\left( (P \otimes P) \bigotimes_{k=1}^n S_k \right) \right]\nonumber\\
    & = \mathrm{Prob}(P\otimes P =+1) - \mathrm{Prob}(P\otimes P=-1)
      =\Tr\left( (P \otimes P) (\rho \otimes \rho) \right)
     = |\Tr(P \rho)|^2.
\end{align}

Because Equation~(\ref{eq:bell-expected}) relates the distribution of Bell measurement outcomes to $|\Tr(P \rho)|$, we can estimate $|\Tr(P \rho)|$ accurately by repeating Bell measurement on $\rho\otimes\rho$ sufficiently many times. Suppose we have altogether $2N_1$ copies of $\rho$, and perform the Bell measurement on $N_1$ pairs of copies. We collect the measurement data $\{S_k^{(t)}\}$ in the classical memory, where $k=1,2,\dots n$ labels the qubit pairs, and $t=1,2, \dots N_1$ labels the repeated measurements.

For each Pauli observable $P = \sigma_{1} \otimes \ldots \otimes \sigma_{n}$, we define the corresponding expression
\begin{equation}
    \hat{a}(P) = \frac{1}{N_1} \sum_{t = 1}^{N_1} \prod_{k=1}^n \Tr\left((\sigma_{k} \otimes \sigma_{k}) S^{(t)}_k\right),
\end{equation}
which can be computed efficiently in time  $\mathcal{O}(nN_1)$.

Using the statistical property given in
Equation~(\ref{eq:bell-expected}), we can apply Hoeffding's inequality to
show that, with high probability, the estimate $\hat{a}(P)$ is close to the expectation value $|\Tr(P \rho)|^2$:
\begin{lemma}\label{lem:hoeffPauli}
Given $N_1 = \Theta(\log(1 / \delta) / \epsilon^2)$. For any Pauli operator $P$, we have
\begin{equation}
    \left|\hat{a}(P) - |\Tr(P \rho)|^2\right| \leq \epsilon,
\end{equation}
with probability at least $1 - \delta$.
\end{lemma}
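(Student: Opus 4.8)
The plan is to recognize $\hat{a}(P)$ as an empirical average of independent, bounded random variables whose common expectation is exactly $|\Tr(P\rho)|^2$, and then invoke Hoeffding's inequality. Fix a Pauli observable $P = \sigma_1 \otimes \cdots \otimes \sigma_n$ with $\sigma_k \in \{I,X,Y,Z\}$, and for each repetition $t = 1, \ldots, N_1$ set
\begin{equation}
    X_t := \prod_{k=1}^n \Tr\!\left( (\sigma_k \otimes \sigma_k)\, S^{(t)}_k \right),
\end{equation}
so that $\hat{a}(P) = \frac{1}{N_1}\sum_{t=1}^{N_1} X_t$. First I would observe that each $X_t$ takes values in $\{-1,+1\}$: as recorded in the discussion preceding Equation~\eqref{eq:bell-expected}, every Bell state is an eigenstate of $\sigma_k \otimes \sigma_k$ with eigenvalue $\pm 1$ (and with eigenvalue $+1$ in the degenerate case $\sigma_k = I$), so each factor $\Tr((\sigma_k \otimes \sigma_k)S^{(t)}_k) \in \{-1,+1\}$ and hence so does their product. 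Next I would note that the repetitions are performed on disjoint fresh pairs $\rho \otimes \rho$, so $X_1, \ldots, X_{N_1}$ are independent and identically distributed. The outcomes $S^{(t)}_1, \ldots, S^{(t)}_n$ \emph{within} a single repetition may be correlated when $\rho$ is not a product state, but this plays no role in the argument---only the boundedness of $X_t$ and its expectation matter.

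The expectation is supplied directly by Equation~\eqref{eq:bell-expected}, which gives $\E[X_t] = \Tr\big((P\otimes P)(\rho\otimes\rho)\big) = |\Tr(P\rho)|^2$ for each $t$, and therefore $\E[\hat{a}(P)] = |\Tr(P\rho)|^2$.

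Finally, since $\hat{a}(P)$ is the mean of $N_1$ independent random variables each supported on an interval of width $2$, Hoeffding's inequality yields
\begin{equation}
    \mathrm{Pr}\!\left[\, \left|\hat{a}(P) - |\Tr(P\rho)|^2\right| \geq \epsilon \,\right] \;\leq\; 2\exp\!\left(-\frac{N_1 \epsilon^2}{2}\right).
\end{equation}
Demanding that the right-hand side be at most $\delta$ gives $N_1 \geq \frac{2\log(2/\delta)}{\epsilon^2} = \Theta\!\left(\frac{\log(1/\delta)}{\epsilon^2}\right)$, which is the stated choice.

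I do not expect any genuine obstacle in this lemma; the one place that warrants a moment's care is the independence bookkeeping---distinguishing the genuine independence \emph{across} the $N_1$ repetitions (each consuming its own two copies of $\rho$) from the possible correlations \emph{within} a repetition, which are harmless once Equation~\eqref{eq:bell-expected} is in hand. This lemma will then be combined with a union bound over the $M$ Pauli operators of interest to reach the $\mathcal{O}(\log(M/\delta)/\epsilon^2)$ sample count quoted in Theorem~\ref{thm:sampPauliQ}.
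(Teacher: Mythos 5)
Your proof is correct and follows exactly the route the paper intends: Equation~\eqref{eq:bell-expected} supplies the expectation $\E[X_t]=|\Tr(P\rho)|^2$, each repetition's product of Bell-outcome eigenvalues is $\pm 1$ and the repetitions are independent across fresh pairs of copies, so Hoeffding's inequality gives the stated tail bound and sample count. The paper only sketches this step, and your bookkeeping of independence across (versus within) repetitions fills it in correctly.
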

\noindent To obtain an estimate for the absolute value $|\Tr(P \rho)|$, we consider the following estimate
\begin{equation}
    \hat{b} = \sqrt{\max(0, \hat{a})}.
\end{equation}
It is not hard to show the following implication using $\sqrt{x + y} \leq \sqrt{x} + \sqrt{y}$,
\begin{align}
    |\Tr(P \rho)|^2 - \epsilon \leq \hat{a} \leq |\Tr(P \rho)|^2 + \epsilon
    \implies \max(0, \sqrt{|\Tr(P \rho)|^2} - \sqrt{\epsilon}) \leq \hat{b} \leq \sqrt{|\Tr(P \rho)|^2} + \sqrt{\epsilon}.
\end{align}
Therefore, Lemma~\ref{lem:hoeffPauli} gives the following corollary. With high probability, we can estimate the absolute value of $\Tr(P \rho)$ for any Pauli operator $P$ accurately.
\begin{corollary} \label{cor:absPauli}
Given $N_1 = \Theta(\log(1 / \delta) / \epsilon^4)$. For any Pauli operator $P$, we have
\begin{equation}
    \left|\hat{b} - |\Tr(P \rho)|\right| \leq \epsilon,
\end{equation}
with probability at least $1 - \delta$.
\end{corollary}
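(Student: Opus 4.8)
The plan is to derive Corollary~\ref{cor:absPauli} directly from Lemma~\ref{lem:hoeffPauli}, applied at a rescaled accuracy, together with an elementary bound on how the truncated square root $t \mapsto \sqrt{\max(0,t)}$ propagates errors. First I would invoke Lemma~\ref{lem:hoeffPauli} not at accuracy $\epsilon$ but at accuracy $\epsilon^2$: taking $N_1 = \Theta\!\big(\log(1/\delta)/(\epsilon^2)^2\big) = \Theta\!\big(\log(1/\delta)/\epsilon^4\big)$ guarantees that, with probability at least $1-\delta$,
\[
    \big|\, \hat{a}(P) - |\Tr(P\rho)|^2 \,\big| \leq \epsilon^2 .
\]
This is exactly where the $1/\epsilon^2$ sample size of the lemma turns into the $1/\epsilon^4$ claimed here. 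Everything that follows is deterministic and takes place on this good event.

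Next I would carry out the square-root step. Write $v = |\Tr(P\rho)| \in [0,1]$, so that $v^2 - \epsilon^2 \le \hat{a}(P) \le v^2 + \epsilon^2$, and recall $\hat{b} = \sqrt{\max(0,\hat{a}(P))}$. For the upper bound, monotonicity of $t \mapsto \sqrt{\max(0,t)}$ and subadditivity of the square root give $\hat{b} \le \sqrt{v^2 + \epsilon^2} \le \sqrt{v^2} + \sqrt{\epsilon^2} = v + \epsilon$. For the lower bound I would split into cases. If $\hat{a}(P) \le 0$, then $\hat{b}=0$ while $v^2 \le \hat{a}(P) + \epsilon^2 \le \epsilon^2$, hence $v \le \epsilon$ and so $\hat{b} \ge v - \epsilon$ trivially. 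If $\hat{a}(P) > 0$, then from $\hat{a}(P) \ge v^2 - \epsilon^2$ and $\hat{a}(P) > 0$ we get $\hat{a}(P) \ge \max(0,\, v^2 - \epsilon^2)$, so by monotonicity and subadditivity,
\[
    \hat{b} = \sqrt{\hat{a}(P)} \geq \sqrt{\max(0,\, v^2 - \epsilon^2)} \geq \sqrt{v^2} - \sqrt{\epsilon^2} = v - \epsilon ,
\]
where the last step uses $v^2 \le \max(0,\,v^2-\epsilon^2) + \epsilon^2$, which holds regardless of the sign of $v^2 - \epsilon^2$. Combining the two bounds yields $|\hat{b} - v| \le \epsilon$ on the good event, which is the statement.

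I do not expect a genuine obstacle: the substantive content — a Hoeffding bound on the bounded $\pm 1$-valued products $\prod_{k} \Tr\!\big((\sigma_k \otimes \sigma_k) S_k^{(t)}\big)$, whose expectation equals $|\Tr(P\rho)|^2$ by Equation~\eqref{eq:bell-expected} — is already packaged inside Lemma~\ref{lem:hoeffPauli}. The only points that need a little care are (i) rescaling the target accuracy from $\epsilon$ to $\epsilon^2$, and (ii) handling the truncation $\max(0,\cdot)$ in the definition of $\hat{b}$, which is dispatched by the short case analysis above; the constant hidden in $\Theta(\cdot)$ is unaffected by either manipulation.
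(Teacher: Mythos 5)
Your proposal is correct and follows essentially the same route as the paper: invoke Lemma~\ref{lem:hoeffPauli} at the rescaled accuracy $\epsilon^2$ (whence $N_1 = \Theta(\log(1/\delta)/\epsilon^4)$) and then propagate the error through $\hat{b}=\sqrt{\max(0,\hat{a})}$ via subadditivity of the square root. The paper compresses the square-root step into the one-line implication following the definition of $\hat{b}$; your explicit case analysis for the truncation $\max(0,\cdot)$ just fills in the details the paper labels ``not hard to show.''
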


\subsubsection{Measuring signs}

Given a Pauli operator $P$, we check if $|\Tr(P \rho)|$ is large enough with the previous procedure using Bell measurement and a the classical memory.
If $|\Tr(P \rho)|$ is large enough, we proceed to measure the sign of $\Tr(P \rho)$ as well, using a quantum memory. Suppose that $N_2$ copies of the state $\rho$ are stored in the quantum memory. The sign is determined by measuring the two outcome observable
\begin{equation}
    E = \sum_{z \in \{\pm 1\}^{N_2}} \mathrm{MAJ}(z) \, \Pi^{(z_1)} \otimes \ldots \otimes \Pi^{(z_{N_2})},
\end{equation}
where $\Pi^{(z_k)}$ projects the $k$th copy onto the eigenspace of the Pauli operator $P$ with eigenvalue $z_k\in\{+1,-1\}$ and $\mathrm{MAJ}(z)$ is the majority vote. In effect, the observable $E$ measures $P$ on each of the $N_2$ copies of $\rho$, obtaining either $+1$ or $-1$ each time, and takes a majority vote on these outcomes, yielding +1 if more than half of the outcomes are +1, and yielding -1 if more than half of the outcomes are -1.

Intuitively, if we are guaranteed that $|\Tr(P \rho)| > \tilde{\epsilon}$ and $N_2$ is sufficiently large, then the majority vote will concentrate on the correct answer. Hence, the quantum state $\rho^{\otimes N_2}$ is approximately contained in one of the two eigenspaces of the observable $E$. As a result, after measuring $E$, the quantum state $\rho^{\otimes N_2}$ would remain approximately the same.
This allows us to keep measuring the sign of $\Tr(P \rho)$ for many different Pauli operators.
This strategy is a key element in the original protocol for shadow tomography \cite{aaronson2018shadow}.
The rigorous guarantee is given by Lemma~\ref{eq:signmeasure}, which relies on the quantum union bound \cite{aaronson2006qma}.

\begin{lemma}[Quantum union bound \cite{aaronson2006qma}] \label{lem:Qunion}
Given any quantum state $\rho$. Consider a sequence of $M$ two-outcome observables $\{K_i\}_{i=1}^M$, where $K_i$ has eigenvalue $0$ or $1$. Assume $\Tr(K_i \rho) \geq 1 - \epsilon$. When we measure $K_1, \ldots, K_M$ sequentially on $\rho$, the probability that all of them yield the outcome $1$ is at least $1 - M \sqrt{\epsilon}$.
\end{lemma}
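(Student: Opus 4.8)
The plan is to reduce to pure states and then control the sequential-success probability by a single overlap, which I expand once through a telescoping identity, so that no disturbed post-measurement state ever needs to be tracked. First I would replace $\rho$ by a purification $|\psi\rangle$ on a larger Hilbert space and each projector $K_i$ by $K_i\otimes I$; this changes neither the hypothesis $\Tr(K_i\rho)=\langle\psi|(K_i\otimes I)|\psi\rangle\geq 1-\epsilon$ nor the quantity of interest, so I may assume $\rho=\ketbra{\psi}{\psi}$ is pure. With this reduction, the probability that the sequential projective measurements $\{K_1,I-K_1\},\dots,\{K_M,I-K_M\}$ all return outcome $1$ equals $q:=\|K_M\cdots K_1|\psi\rangle\|^2$, and the hypothesis reads $\|(I-K_i)|\psi\rangle\|^2=1-\langle\psi|K_i|\psi\rangle\leq\epsilon$ for every $i$.

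The key step is to lower bound $\sqrt{q}=\|K_M\cdots K_1|\psi\rangle\|$ by $\mathrm{Re}\,\langle\psi|K_M\cdots K_1|\psi\rangle$ (Cauchy--Schwarz, since $|\psi\rangle$ is a unit vector), and then to expand the right-hand side using the telescoping identity $K_M\cdots K_1 - I=\sum_{i=1}^M (K_i-I)\,K_{i-1}\cdots K_1$, with the convention $K_0\cdots K_1:=I$; this identity follows at once from writing $B_i:=K_i\cdots K_1$ and summing $B_i-B_{i-1}=(K_i-I)B_{i-1}$. For each cross term I would use that $I-K_i$ is a projector, so $(I-K_i)=(I-K_i)^\dagger(I-K_i)$, together with $\|K_{i-1}\cdots K_1\|\leq 1$, to get $\bigl|\langle\psi|(K_i-I)K_{i-1}\cdots K_1|\psi\rangle\bigr|=\bigl|\langle(I-K_i)\psi\,|\,(I-K_i)K_{i-1}\cdots K_1\,\psi\rangle\bigr|\leq\|(I-K_i)|\psi\rangle\|\cdot\|K_{i-1}\cdots K_1|\psi\rangle\|\leq\sqrt{\epsilon}$.

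Summing the $M$ cross terms gives $\mathrm{Re}\,\langle\psi|K_M\cdots K_1|\psi\rangle\geq 1-M\sqrt{\epsilon}$, hence $q\geq(1-M\sqrt{\epsilon})^2$; since $(1-x)^2\geq 1-2x$ this yields $q\geq 1-2M\sqrt{\epsilon}$ (and when $M\sqrt{\epsilon}\geq 1$ the statement is vacuous), so the stated bound $1-M\sqrt{\epsilon}$ holds up to an absolute constant, with the sharper constant being the one recorded in \cite{aaronson2006qma}. The main obstacle is precisely to avoid the naive inductive argument that tracks the normalized state $|\psi_{i-1}\rangle$ obtained after $i-1$ successful projections and bounds the next failure probability $\langle\psi_{i-1}|(I-K_i)|\psi_{i-1}\rangle$: there the drift $d_{i-1}:=\||\psi_{i-1}\rangle-|\psi\rangle\|$ obeys only a recursion of the shape $d_i\leq\sqrt{\epsilon}+\sum_{j<i}d_j$, which grows geometrically in $i$ and is far too weak unless $\epsilon$ is exponentially small. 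Passing instead to the scalar $\langle\psi|K_M\cdots K_1|\psi\rangle$ and expanding it a single time sidesteps the accumulation entirely, because every resulting cross term is governed directly by the hypothesis on the \emph{undisturbed} state $|\psi\rangle$ rather than on a state that has already been perturbed by earlier measurements.
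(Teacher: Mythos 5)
Your proof is correct, and it is necessarily a different route from the paper's, because the paper does not prove this lemma at all --- it is imported verbatim from \cite{aaronson2006qma}. The standard argument behind the cited result chains the gentle-measurement (``almost as good as new'') lemma: each near-certain measurement disturbs the state by at most $O(\sqrt{\epsilon})$ in trace distance, and the disturbances are accumulated additively across the $M$ steps. Your argument instead purifies $\rho$, writes the all-accept probability exactly as $\|K_M\cdots K_1\ket{\psi}\|^2$, lower-bounds its square root by $\mathrm{Re}\braket{\psi|K_M\cdots K_1|\psi}$ via Cauchy--Schwarz, and expands that single scalar through the telescoping identity $K_M\cdots K_1-I=\sum_{i=1}^M(K_i-I)K_{i-1}\cdots K_1$, bounding each cross term by $\sqrt{\epsilon}$ using idempotence of $I-K_i$ and contractivity of products of projectors. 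All of these steps check out (the purification step preserves both the hypothesis and the joint outcome distribution, since $\|(K_M\otimes I)\cdots(K_1\otimes I)\ket{\psi}\|^2=\Tr(K_M\cdots K_1\,\rho\,K_1\cdots K_M)$). What your approach buys is a short, fully self-contained proof in which every error term is controlled by the hypothesis on the \emph{undisturbed} state, so no disturbance bookkeeping is needed; this is essentially the modern ``non-commutative union bound'' style of argument. The one deviation is quantitative: you obtain $1-2M\sqrt{\epsilon}$ rather than $1-M\sqrt{\epsilon}$, which you flag explicitly. This constant-factor loss is immaterial for the only place the lemma is used (the proof of Lemma~\ref{eq:signmeasure}, where $\epsilon=\delta^2/M^2$ and the final failure probability merely changes from $\delta$ to $2\delta$, absorbed by the $\Theta(\cdot)$ choice of $N_2$), so your argument would serve as a drop-in replacement for the citation.
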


\begin{lemma} \label{eq:signmeasure}
For any $\delta, \epsilon, M > 0$, let $N_2 = \Theta(\log(M / \delta) / \epsilon^2)$.
For any $M$ Pauli operators $P_1, \ldots, P_M$ with
\begin{equation}
    \left|\Tr(P_i \rho)\right| > \epsilon, \forall i =1, \ldots, M,
\end{equation}
let us define the corresponding two-outcome observables $E_1, \ldots, E_M$.
If we measure the two-outcome observable $E_1, \ldots, E_M$ sequentially on $\rho^{\otimes N_2}$, then we can correctly obtain
\begin{equation}
    \sign(\Tr(P_i \rho)), \forall i =1, \ldots, M,
\end{equation}
with probability at least $1 - \delta$.
\end{lemma}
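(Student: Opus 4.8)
The plan is to reduce the statement to the quantum union bound, Lemma~\ref{lem:Qunion}, once we have shown that each individual sign measurement succeeds with overwhelming probability. First I would analyze one observable $E_i$ in isolation. A measurement of $P_i$ on a single copy of $\rho$ returns $+1$ with probability $\tfrac{1}{2}(1+\Tr(P_i\rho))$ and $-1$ with probability $\tfrac{1}{2}(1-\Tr(P_i\rho))$. Because $\rho^{\otimes N_2}$ is a product state and the eigenspace of $E_i$ associated with the value $\sign(\Tr(P_i\rho))$ is spanned by the tensor products $\Pi^{(z_1)}\otimes\cdots\otimes\Pi^{(z_{N_2})}$ with $\mathrm{MAJ}(z)=\sign(\Tr(P_i\rho))$, expanding $\Tr\big(E_i^{\pm}\,\rho^{\otimes N_2}\big)$ over this product eigenbasis shows that the distribution of the outcome of measuring $E_i$ on $\rho^{\otimes N_2}$ is exactly that of taking a classical majority vote over $N_2$ i.i.d.\ $\pm 1$ samples, each with bias $\Tr(P_i\rho)$. (It is convenient to take $N_2$ odd so that no ties occur.)

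Next I would use the hypothesis $|\Tr(P_i\rho)|>\epsilon$ together with Hoeffding's inequality: the majority vote over $N_2$ independent $\pm 1$ samples of bias exceeding $\epsilon$ in absolute value disagrees with $\sign(\Tr(P_i\rho))$ with probability at most $\exp(-N_2\epsilon^2/2)$. Equivalently, letting $K_i$ denote the projector onto the $\sign(\Tr(P_i\rho))$-eigenspace of $E_i$, we obtain $\Tr\big(K_i\,\rho^{\otimes N_2}\big)\geq 1-\exp(-N_2\epsilon^2/2)$ for every $i=1,\dots,M$. Since measuring $E_i$ with its $\pm 1$ labels is literally the projective measurement $\{K_i,\,I-K_i\}$ with the outcome $1$ identified as ``reported sign is correct,'' Lemma~\ref{lem:Qunion} applied to $K_1,\dots,K_M$ and the state $\rho^{\otimes N_2}$ yields that, measuring $E_1,\dots,E_M$ sequentially, the probability that all of them report the correct sign is at least $1-M\sqrt{\exp(-N_2\epsilon^2/2)}=1-M\exp(-N_2\epsilon^2/4)$.

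Finally, choosing $N_2=\Theta(\log(M/\delta)/\epsilon^2)$ with a sufficiently large constant makes $M\exp(-N_2\epsilon^2/4)\leq\delta$, which gives the claimed success probability. The step I expect to require the most care is the first one: verifying that the joint observable $E_i$, even though it acts coherently on all $N_2$ copies at once, has outcome statistics identical to $N_2$ independent single-copy Pauli measurements followed by a classical majority vote. This is where the tensor-product structure of both $E_i$ and $\rho^{\otimes N_2}$ is genuinely used, and it is what makes the elementary Hoeffding bound applicable. The ``gentleness'' of the sequence of measurements --- the fact that each nearly-deterministic measurement of $E_i$ barely disturbs the quantum memory, so that later sign measurements remain reliable --- is entirely absorbed into Lemma~\ref{lem:Qunion}, so no separate perturbation or gentle-measurement estimate is needed here.
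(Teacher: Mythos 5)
Your proposal is correct and follows essentially the same route as the paper's proof: a Hoeffding bound showing each majority-vote observable $E_i$ accepts the correct sign with probability exponentially close to one, followed by the quantum union bound (Lemma~\ref{lem:Qunion}) applied to the projectors $K_i$ onto the correct-sign eigenspaces. Your extra care in justifying that the entangled observable $E_i$ on the product state $\rho^{\otimes N_2}$ reproduces the statistics of a classical majority vote over i.i.d.\ single-copy Pauli measurements is a point the paper passes over implicitly, but the argument and the resulting $N_2 = \Theta(\log(M/\delta)/\epsilon^2)$ scaling are identical.
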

\begin{proof}
Let us first consider the probability that $E_i$ outputs the sign of $\Tr(P_i \rho)$ when measuring on $\rho^{\otimes N_2}$.
Using Hoeffding's inequality, the probability is lower bounded by
\begin{equation}
    1 - \mathrm{e}^{-\frac{1}{2} |\Tr(P_i \rho)|^2 N_2} > 1 - \mathrm{e}^{-\frac{1}{2} \epsilon^2 N_2} = 1 - \frac{\delta^2}{2 M^2},
\end{equation}
if we take $N_2 = 4 \log(2 M / \delta) / \epsilon^2$.
Let us define a related observable $K_i$ which has outcome $1$ if the outcome of $E_i$ is equal to the sign of $\Tr(P_i \rho)$ and $0$ otherwise.
We have the following bound on the expectation value:
\begin{align}
    \Tr(K_i \rho^{\otimes N_2}) \geq 1 - \frac{\delta^2}{M^2}.
\end{align}
Note that measuring $K_i$ is the same as measuring $E_i$. The only difference is in the eigenvalue associated with the outcome ($E_i$ has outcome $\pm 1$, while $K_i$ has outcome $0, 1$).
Using the quantum union bound in Lemma~\ref{lem:Qunion}, we will obtain outcome $1$ when we measure $K_i$ for all $i = 1, \ldots, M$ with probability at least $1 - M \sqrt{\delta^2 / M^2} = 1 - \delta$.
Because measuring outcome $1$ when we measure $K_i$ is equivalent to obtaining the correct sign when we measure $E_i$, this concludes the proof.
\end{proof}

\subsubsection{Sample complexity analysis}

We can combine the previous results to obtain the sample complexity $N_1, N_2$ to guarantee accurate prediction of $\Tr(P_i \rho), \forall i = 1, \ldots, M$. Here by ``sample complexity'' we mean the number of copies of $\rho$ consumed by the protocol.
First, following Corollary~\ref{cor:absPauli} and the union bound, we choose
\begin{equation}
N_1 = \Theta\left(\frac{\log(M / \delta)}{\epsilon^4}\right)
\end{equation}
such that with probability at least $1 - (\delta / 2)$, we have
\begin{equation} \label{eq:goodabs}
    \left| \hat{b}_i - |\Tr(P_i \rho)| \right| \leq \epsilon / 3, \forall i = 1, \ldots, M.
\end{equation}
This allows accurate prediction for the absolute values.
In the next step, we only obtain the sign of $\Tr(P_i \rho)$ if $\hat{b}_i > (2/3) \epsilon$.
Let $R$ denote the number of Pauli operators that achieve $\hat{b}_i > (2/3) \epsilon$.
%We denote the number of Pauli operators with $\hat{b}_i > (2/3) \epsilon$ as $R$.
Conditioning on Event~\eqref{eq:goodabs}, for all $P_i$ with $\hat{b}_i > (2/3)\epsilon$, we have $\Tr(P_i \rho) > (1/3) \epsilon$.
Let us denote the measured signs to be $\hat{s}_i, \forall i = 1, \ldots, M$. If we do not measure the sign for $P_i$, then we set $\hat{s}_i = 0$.
Using Lemma~\ref{eq:signmeasure}, we can choose a number of samples
\begin{equation}
N_2 = \Theta\left(\frac{\log(R / \delta)}{\epsilon^2}\right) \leq \mathcal{O}\left(\frac{\log(M / \delta)}{\epsilon^2}\right)
\end{equation}
to guarantee that, with probability at least $1 - (\delta / 2)$, the measured signs are all correct for all $P_i$ with $\hat{b}_i > (2/3)\epsilon$:
\begin{equation} \label{eq:correctsign}
    \hat{s}_i = \sign(\Tr(P_i \rho)), \forall i \in \{1, \ldots, M\}: \hat{b}_i > \frac{2}{3}\epsilon.
\end{equation}
Together, with probability at least $(1 - (\delta / 2))^2 \geq 1 - \delta$, Events~\eqref{eq:goodabs}~and~\eqref{eq:correctsign} both holds.
Finally, we produce the following estimate for $\Tr(P_i \rho)$:
\begin{equation}
    \hat{p}_i = \begin{cases}
    \hat{b}_i \hat{s}_i, & \text{ if } \hat{b}_i > \frac{2}{3} \epsilon,\\
    0, & \text{ else}.
    \end{cases}
\end{equation}
For Pauli operator $P_i$ with $\hat{b}_i > \frac{2}{3} \epsilon$, we have
\begin{equation}
    \left|\hat{p}_i - \Tr(P_i \rho)\right| = \left|\hat{s}_i\left( \hat{b}_i - |\Tr(P_i \rho)| \right)\right| = \left|\hat{b}_i - |\Tr(P_i \rho)|\right| \leq \epsilon /3 \leq \epsilon.
\end{equation}
For Pauli operator $P_i$ with $\hat{b}_i \leq \frac{2}{3} \epsilon$, we have
\begin{equation}
    \left|\hat{p}_i - \Tr(P_i \rho)\right| = \left|\Tr(P_i \rho)\right| \leq \left|\hat{b}_i\right| + \left||\Tr(P_i \rho)| - \hat{b}_i\right| \leq \frac{2}{3} \epsilon + \frac{1}{3} \epsilon \leq \epsilon.
\end{equation}
The first inequality uses triangle inequality. The second inequality uses the assumption that $\hat{b}_i \leq \frac{2}{3} \epsilon$ and the fact that $\hat{b}_i \geq 0$.
Hence, we successfully predict the expectation value of $\Tr(P_i \rho), \forall i = 1, \ldots, M$.
The number of copies we used is
\begin{equation}
    N = 2 N_1 + N_2 = \mathcal{O}\left( \frac{\log(M  / \delta)}{\epsilon^4} \right).
\end{equation}
This concludes the proof of Theorem~\ref{thm:sampPauliQ}.

\subsubsection{Heuristics for near-term implementations}
\label{sec:heuristicsNTImp}

The procedure for measuring the absolute value of $\Tr(P \rho)$ requires only two-copy Bell basis measurements, which can be performed quite easily on current quantum devices \cite{cotler2019quantum, huggins2020virtual}.
On the other hand, the procedure for measuring signs of $\Tr(P \rho)$ can be rather difficult to implement on a near-term quantum device.
However, for measuring signs, we only need to investigate the Pauli operator expectation values found to be relatively large in absolute value. For any $P$ such that $|\Tr(P \rho)|$ is small,
rather than determine the sign we simply set our predicted expectation value of $P$ to zero.

Therefore we focus on the  Pauli observables whose expectation values are (comparatively) large in absolute value.
When two Pauli observables $P_1$ and $P_2$ anti-commute, the expectation values of $P_1$ and $P_2$ cannot both be large in absolute value.
Hence, it is often the case that these remaining observables can be sorted into a collection of just a few sets, where operators in each set are mutually commuting.
Various sorting strategies are known \cite{crawford2020efficient, izmaylov2019unitary, verteletskyi2020measurement, hamamura2020efficient, jiang2020optimal, huggins2019efficient, bonet2020nearly}.
Once such a commuting set is identified, all the operators in the set can be simultaneously measured using just a single copy of $\rho$,

As a simple illustrative example, suppose that the underlying state is a stabilizer state. Even if we consider all $4^n$ Pauli observables, when we filter out all Pauli observables with $\Tr(P \rho) = 0$, the rest of the Pauli observables will form a single commuting set.
Hence, we only need to measure in one appropriately chosen basis to estimate all non-zero Pauli observables simultaneously.

\subsection{Sample complexity lower bound for restricted classical ML algorithms}
\label{sec:lowbd-cml-pauli}

Recall that the restricted classical ML algorithm can only choose a certain input $x$ and obtain measurement outcome $o$ when we measure a fixed observable $O = Z_1$ on $\cE_{\rho}(\ketbra{x}{x})$.
The sample complexity lower bound can be proved by reducing the problem to a well known classical problem: learning point functions. This section establishes a sample complexity lower bound for this basic problem.

\begin{lemma} \label{lem:learnpointf}
Consider a set of point functions $f_a: \{0, 1\}^n \rightarrow \{0, 1\}$, where $n \geq 2, a \in \{0, 1\}^n$ and
\begin{equation}
    f_a(x) = \begin{cases}
    1 & \text{if } x = a,\\
    0 & \text{if } x \neq a.\\
    \end{cases}
\end{equation}
Suppose a classical randomized algorithm can output a function $\tilde{f}$ by obtaining data for $N$ different inputs $\{(x_i, f_a(x_i))\}_{i=1}^N$, such that
\begin{equation}
    \max_{x \in \{0, 1\}^n} \left|\tilde{f}(x) - f_a(x)\right| < \frac{1}{2}\quad \text{with probability at least $2/3$},
    \label{eq:point-function}
\end{equation}
 for all $a \in \{0, 1\}^n$. Then $N \geq (1/4) 2^n= \Omega(2^n)$.
\end{lemma}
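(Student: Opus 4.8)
The plan is to reduce this to a guessing game: the accuracy requirement forces a learner to identify the hidden label $a$, and identifying a uniformly random $a\in\{0,1\}^n$ from only $N$ (adaptive) queries to $f_a$ succeeds with probability $\ge 2/3$ only if $N=\Omega(2^n)$.

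First I would unpack the accuracy condition~\eqref{eq:point-function}. At $x=a$ it forces $\tilde f(a)>1/2$, and at every $x\neq a$ it forces $\tilde f(x)<1/2$; hence $a$ is the \emph{unique} input on which $\tilde f$ exceeds $1/2$. So from $\tilde f$ one can read off a guess $\hat a$, and "$\tilde f$ is $1/2$-close to $f_a$" implies "$\hat a=a$". It therefore suffices to show that any algorithm making $N$ queries and outputting a guess $\hat a$ with $\Pr[\hat a=a]\ge 2/3$ for every $a$ must have $N\ge\tfrac14 2^n$.

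Next I would average over a uniformly random hidden label $a$: since success probability is $\ge 2/3$ for each fixed $a$, it is $\ge 2/3$ over the joint randomness of $a$ and the algorithm's coins $r$. Fix $r$ and run the algorithm. As long as every query so far has returned $0$, the query sequence is a deterministic function of $r$; let $Q(r)$, with $|Q(r)|\le N$, be the resulting set of probed inputs. A query ever returns $1$ (a "hit") precisely when $a\in Q(r)$, so $\Pr[\text{hit}\mid r]=|Q(r)|/2^n\le N/2^n$. Conditioned on no hit, the entire transcript observed by the algorithm is determined by $r$, hence so is its output $\hat a$; but given $r$ and the no-hit event the posterior on $a$ is uniform on $\{0,1\}^n\setminus Q(r)$, a set of size $\ge 2^n-N$, so $\Pr[\hat a=a\mid r,\ \text{no hit}]\le 1/(2^n-N)$.

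Combining, $\Pr[\text{success}]\le \Pr[\text{hit}]+\Pr[\text{no hit}]\cdot\tfrac{1}{2^n-N}\le \tfrac{N}{2^n}+\tfrac{1}{2^n-N}$. Assuming for contradiction that $N<\tfrac14 2^n$, and using $n\ge 2$ so that $2^n\ge 4$, the right-hand side is strictly less than $\tfrac14+\tfrac{4}{3\cdot 2^n}\le \tfrac14+\tfrac13=\tfrac{7}{12}<\tfrac23$, contradicting the assumed $2/3$ success probability; hence $N\ge\tfrac14 2^n$. The one point requiring care — the main obstacle — is the adaptivity of the queries: one has to argue that conditioning on the all-zero transcript collapses the adaptive strategy (for each fixing of the coins $r$) to a fixed set $Q(r)$ of at most $N$ inputs, so that both the hitting probability and the conditional guessing probability are controlled cleanly; the remaining steps are elementary. (This is the standard lower bound for learning point functions / exact search, via Yao-style averaging; one could equivalently phrase it through Fano's inequality on $I(a:\text{transcript})$.)
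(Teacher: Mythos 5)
Your proposal is correct and follows essentially the same route as the paper's proof: reduce accurate prediction to exact identification of the hidden label $a$, condition on the event that all queries return $0$, observe that the posterior on $a$ is then uniform over the unqueried inputs, and bound the total success probability by (hit probability) plus (conditional guessing probability). Your treatment of adaptivity via fixing the coins $r$ is a slightly more explicit rendering of the same counting argument, and your final constants ($7/12$ versus the paper's $1/2$) differ only cosmetically.
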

\begin{proof}
Suppose $a$ is selected uniformly at random.
By assumption, the classical randomized algorithm will be able to output $\tilde{f}$ that obeys Equation~\eqref{eq:point-function}.
Because the worst-case prediction error is smaller than $1/2$, the classical randomized algorithm will be able to correctly identify $a$ with probability at least $2/3$.
We now use a simple calculation to obtain a lower bound to the probability for any classical randomized algorithm to identify $a$.
Because $f_a(x)=0$ for all but one inputs, a uniformly random input $x_1$ will result in $f_a(x_1) = 0$ with probability $(2^n - 1) / (2^n)$.
Conditioned on $f_a(x_1) = 0$, the second query $x_2$ will result in $f_a(x_2) = 0$ with probability $(2^n - 2) / (2^n - 1)$.
By induction, the probability that the first $k$ queries all result in function value equal to $0$ is
\begin{equation}
    \frac{2^n-k}{2^n-k+1} \ldots \frac{2^n-1}{2^n} = \frac{2^n - k}{2^n}.
\end{equation}
Hence, with $k = (1/4)2^n$, the probability that the first $k$ queries are all zero is $(3/4)$. In such an event, the classical algorithm will have no information that helps it to distinguish the remaining $(3/4) 2^n$ point functions.
Because the conditional probability of the distribution over $a$ under such an event is uniform across the $(3/4) 2^n$ point functions, no matter what the classical algorithm chooses, the probability of correctly identifying $a$ is equal to
$(4/3)2^{-n}$.
Thus the probability of correctly identifying a uniformly random label $a \in \left\{ 0,1\right\}^n$ with $k$ uniformly random queries $f_a (x_1),\ldots, f_a (x_k)$ is equal to
\begin{equation}
    \frac{1}{4} + \frac{3}{4} \cdot \frac{4}{3} \frac{1}{2^n} \leq \frac{1}{2}
\end{equation}
under the extra assumption $n \geq 2$ ($2^{n} \geq 4$).
Combining this fundamental lower bound with the constructive argument above reveals $N \geq (1/4)2^n=\Omega (2^n)$.
The number of inputs $N$ must be greater than $k = (2^n) / 4$ to achieve a success probability of at least $2/3$.
\end{proof}

It is not hard to consider a subset of all quantum states that can be mapped to the basic problem of learning point functions.
We can equate $x \in \{I, X, Y, Z\}^n$ (input) and $a \in \left\{I,X,Y,Z\right\}^n$
with bit strings of size $2n$ (two bits unambiguously characterize all 4 single-qubit Paulis).
We recall the definition that $P_x \in \{I, X, Y, Z\}^{\otimes n}$ is the tensor product of Pauli basis based on $x \in \{I, X, Y, Z\}^n$. For each $a$, we define a mixed state
\begin{equation}
    \rho_a = \frac{I + P_a}{2^n}, \quad \text{such that} \quad  \Tr(Z_1 \cE_{\rho_a}(\ketbra{x}{x})) = \Tr(P_x \rho_a) = \begin{cases}
    1 & \text{if } x = a,\\
    0 & \text{if } x \neq a.\\
    \end{cases}
\end{equation}
This is now exactly the same as the problem for learning point function given in Lemma~\ref{lem:learnpointf} with input size $2n$.
Hence if the restricted classical ML model can produce $f(x)$ such that
\begin{equation}
    \max_{x \in \{I, X, Y, Z\}^n} \left| f(x) - \Tr(Z_1 \cE_\rho(\ketbra{x}{x})) \right| < \frac{1}{2}\quad \text{with probability at least $2/3$},
\end{equation}
 for all $a \in \{I, X, Y, Z\}^n$ from $N$ data samples. Then, the number of inputs must obey $N \geq (1/4)2^{2n}= \Omega(4^n)$.

\subsection{Sample complexity lower bound for any classical ML algorithm}
\label{sec:lowerboundindmeas}

While restricted classical ML models can only obtain measurement outcome $o$ for a fixed observable $O$, one may be curious what the sample complexity lower bound would be for a standard classical ML model that can perform an arbitrary POVM measurement (which is equivalent to performing quantum computation with ancilla qubits follow by a computational basis measurement) on the output quantum state $\cE_{\rho}(\ketbra{x}{x})$.
While this is much more powerful than restricted classical ML models, we will show that the sample complexity is still exponential.
A quantum ML model that can process the quantum data in an entangled fashion has an exponential advantage over classical ML model that can only process each quantum data separately.

We will first focus on non-adaptive measurements, where the POVM measurement for each copy is fixed and do not change throughout the training process, and show that any procedure needs to measure $\Omega(n 2^n)$ copies of the state $\rho$.
A matching upper bound can be obtained using classical shadows with random Clifford measurements \cite{huang2020predicting}. Classical shadows with random Clifford measurements can predict $M$ observables $O_1, \ldots, O_M$ using only $N_{\mathrm{C}} = \mathcal{O}(\max_{i} \Tr(O_i^2)\log(M))$ copies.
For the set of all $n$-qubit Pauli observables, we have $\Tr(P_i^2) = 2^n$ and $M = 4^n$, so $N_{\mathrm{C}} = \mathcal{O}(n 2^n)$.
This matches with the lower bound for non-adaptive measurements.

We also give a lower bound of $\Omega(2^{n / 3})$ for adaptive measurements, where each POVM measurement can depend on the outcomes of previous POVM measurements. This sample complexity lower bound could be further improved using a more sophisticated analysis, which we leave for future work.

\subsubsection{Non-adaptive measurements}

When one could perform arbitrary POVM measurement on $\cE_{\rho}(\ketbra{x}{x})$, the input $x$ is no longer useful since $x$ only rotates the quantum state $\rho$, which can be absorbed into the POVM measurement. Let us denote the POVM for the $i$-th copy of $\rho$ to be $F_i$. We can reduce the classical ML models with non-adaptive measurements to the following setup:
\begin{equation}
    \rho \xrightarrow{F_1} o_1, \,\, \ldots, \,\, \rho \xrightarrow{F_N} o_N,
\end{equation}
where $o_i$ is the POVM measurement outcome (single shot). Hence, it is a random variable that depends on $\rho$ and $F_i$.
This setup is known as \emph{single-copy independent measurements} in the quantum state tomography literature \cite{haah2017sample, huang2020predicting}.
Without loss, we can further restrict our attention to POVM measurements comprised of rank-one projectors. Such measurements always reveal more information and we write
$F_i = \{w_{i o_i} 2^n \ketbra{\psi_{i o_i}}{\psi_{i o_i}}\}_j$.
The classical ML model then uses the classical measurement outcomes $o_1, \ldots, o_N$ to learn a function $f(x)$ such that the following prediction error bound holds with high probability:
\begin{equation}
    \max_{x \in \{I, X, Y, Z\}^n} \left| f(x) - \Tr(Z_1 \cE_\rho(\ketbra{x}{x})) \right| = \max_{x \in \{I, X, Y, Z\}^n} \left| f(x) - \Tr(P_x \rho) \right| < \frac{1}{2}.
\end{equation}
We will show that this necessarily requires $N \geq \Omega(n 2^n)$.
Recall that the sample complexity for achieving a constant worst-case prediction error using a quantum ML model is $N = \mathcal{O}(n)$.

The proof uses a mutual information analysis similar to the sample complexity lower bound for quantum ML models given in Section~\ref{sec:MIanalysisML}.
We consider a communication protocol between Alice and Bob.
First, we define a codebook that Alice will use to encode classical information in quantum states:
\begin{equation} \label{eq:rhoa-def}
    (a,s) \in \left\{1, \ldots, 4^n - 1\right\} \times  \left\{\pm 1 \right\} \,\,\, \longrightarrow \,\,\, \rho_{(a,s)} = \frac{I + s P_a}{2^n},
\end{equation}
where $P_a$ runs through all Pauli matrices  $\{I, X, Y, Z\}^{\otimes n} \setminus  \{I^{\otimes n}\}$ that are not the global identity and $s$ is an additional sign.
There are $2 (4^n-1)$ combinations in total and
Alice will sample one of them uniformly at random.
Then, Alice she prepares $N$ copies of $\rho_{(a,s)}$ and sends them to Bob.
Bob will perform the POVM measurements $F_1, \ldots, F_N$ on the $N$ copies of $\rho_{(a,s)}$ he receives to obtain classical measurement outcomes $o_1, \ldots, o_N$.
He will use them to train the classical ML model to produce  a function $\tilde{f}(x)$ that is guaranteed to obey
\begin{equation} \label{eq:learningguarantee}
    \max_{x \in \{I, X, Y, Z\}^n} \left| \tilde{f}(x) - \Tr(Z_1 \cE_\rho(\ketbra{x}{x})) \right| = \max_{x \in \{I, X, Y, Z\}^n} \left| \tilde{f}(x) - \Tr(P_x \rho_{(a,s)}) \right| < \frac{1}{2}
\end{equation}
with high probability.
Because $\Tr(P_x \rho_{(a,s)})$ is either $+1, 0, -1$, it is not hard to see that Bob can use $\tilde{f}(x)$ to determine Alice's original message $a$ -- as long as Equation~\eqref{eq:learningguarantee} holds.
Using Fano's inequality and data processing inequality, the mutual information between $a$ and the measurement outcomes $o_1, \ldots, o_N$ must be lower bounded by
\begin{equation}
    I({\small (a,s)} : o_1,\ldots, o_N) = \Omega(\log(2(4^n - 1))) = \Omega(n).
\end{equation}
Furthermore, when conditioned on $a$, the measurement outcomes $o_1, \ldots, o_N$ are all independent from each other. Hence we can use the chain rule of mutual information to obtain
\begin{equation}
    I({\small (a,s)} : o_1, \ldots, o_N) \leq \sum_{i=1}^N I({\small (a,s)} : o_i).
\end{equation}
By construction of $\rho_a$, we can show that $I(a : o_i) \leq 1 / (2^n + 1)$. This is the content of Lemma~\ref{lem:expsmallI} below. This technical result allows us to conclude
\begin{equation}
    \frac{N}{2^n + 1} \geq I( {\small (a,s)} : o_1, \ldots, o_N) = \Omega(n).
\end{equation}
This establishes the advertised result: $N =\Omega(n 2^n)$.

The bound on mutual information is a nontrivial consequence of the following technical statement.

\begin{lemma} \label{lem:pauli-averages}
Fix a pure state $|\psi \rangle \! \langle \psi|$ and
set $\rho_{a,s} = (I + s P_a)/2^n$, where $P_a$ is chosen uniformly from $\{I, X, Y, Z\}^{\otimes n} \setminus \{ I^{\otimes n}\}$ and $s \in \left\{ \pm 1 \right\}$ is a random sign. Then,
\begin{align}
\E_{a,s} \langle \psi| \rho_{(a,s)} |\psi \rangle =& \frac{1}{2^n} \quad \text{and} \\
\E_{a,s} \langle \psi| \rho_{(a,s)} |\psi \rangle^2 =& \frac{1}{4^n}\left( 1 + \frac{1}{4^n-1}\sum_a \langle \psi|P_a |\psi \rangle^2 \right) = \frac{1}{4^n} \left( 1 + \frac{1}{2^n+1}\right).
\end{align}
%Without the factor of two, the latter display remains true if $P_a$ is chosen uniformly from $\{ I,X,Y,Z \}^{\otimes n} \setminus I^{\otimes n}$ instead (only positive signs).
\end{lemma}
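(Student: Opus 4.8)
The plan is to compute both moments by expanding the quadratic form and averaging over the random sign $s$ before averaging over the Pauli label $a$. Write $\langle\psi|\rho_{(a,s)}|\psi\rangle = 2^{-n}\left(1 + s\langle\psi|P_a|\psi\rangle\right)$, which follows directly from $\rho_{(a,s)} = (I + sP_a)/2^n$ together with $\langle\psi|\psi\rangle = 1$. For the first moment, since $s$ is an unbiased sign, $\E_s\left[1 + s\langle\psi|P_a|\psi\rangle\right] = 1$ for every fixed $a$, so $\E_{a,s}\langle\psi|\rho_{(a,s)}|\psi\rangle = 2^{-n}$ with no further work.

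For the second moment, squaring the expansion gives $\langle\psi|\rho_{(a,s)}|\psi\rangle^2 = 4^{-n}\left(1 + 2s\langle\psi|P_a|\psi\rangle + \langle\psi|P_a|\psi\rangle^2\right)$. Averaging over $s$ annihilates the linear term and leaves $4^{-n}\left(1 + \langle\psi|P_a|\psi\rangle^2\right)$; averaging uniformly over the $4^n - 1$ non-identity Pauli strings then yields $\frac{1}{4^n}\left(1 + \frac{1}{4^n - 1}\sum_{a : P_a \neq I^{\otimes n}}\langle\psi|P_a|\psi\rangle^2\right)$, which is the first of the two claimed expressions.

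The only nontrivial step is evaluating the Pauli sum, and I would do it via completeness of the normalized $n$-qubit Pauli operators as an orthonormal basis of Hermitian matrices. Expanding $\ketbra{\psi}{\psi} = 2^{-n}\sum_P \langle\psi|P|\psi\rangle\, P$, where $P$ ranges over all $4^n$ Pauli strings, and taking the trace of the square of both sides gives $1 = \Tr\left(\ketbra{\psi}{\psi}^2\right) = 2^{-n}\sum_P \langle\psi|P|\psi\rangle^2$, hence $\sum_P \langle\psi|P|\psi\rangle^2 = 2^n$. Removing the identity term, whose contribution is $\langle\psi|I^{\otimes n}|\psi\rangle^2 = 1$, gives $\sum_{a : P_a \neq I^{\otimes n}}\langle\psi|P_a|\psi\rangle^2 = 2^n - 1$. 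Substituting and simplifying $\frac{2^n - 1}{4^n - 1} = \frac{1}{2^n + 1}$ produces the final form $\frac{1}{4^n}\left(1 + \frac{1}{2^n + 1}\right)$.

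There is essentially no obstacle here; the computation is elementary once the Pauli completeness identity is recalled. The one place demanding care is bookkeeping which sums include the identity Pauli: the average over $a$ runs over the $4^n - 1$ nontrivial Paulis, whereas the completeness relation naturally involves all $4^n$, so the identity contribution must be subtracted exactly once.
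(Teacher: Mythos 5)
Your proof is correct and follows essentially the same route as the paper's: expand $\langle\psi|\rho_{(a,s)}|\psi\rangle = 2^{-n}(1+s\langle\psi|P_a|\psi\rangle)$, use the unbiased sign to kill the linear/cross terms, and invoke Parseval's identity for the orthonormal basis $\{2^{-n/2}P_a\}$ to evaluate $\sum_a \langle\psi|P_a|\psi\rangle^2 = 2^n - 1$ after subtracting the identity term. The bookkeeping of the identity Pauli and the simplification $\frac{2^n-1}{4^n-1}=\frac{1}{2^n+1}$ are both handled exactly as in the paper.
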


\begin{proof}
The first display is an immediate consequence of symmetry:
\begin{align}
\E_{a,s} \langle \psi | \rho_{(a,s)} |\psi \rangle = \frac{1}{2^n} \E_{a,s} \langle \psi| I |\psi \rangle + \frac{1}{2^n}\E_a\left( \E_s s \langle \psi| P_a |\psi \rangle \right)
= \frac{1}{2^n} + 0.
\end{align}
The second display follows from the fact that the collection $\left\{2^{-n/2} P_a \right\}$ forms an orthonormal basis of the space of all Hermitian $2^n \times 2^n$ matrices (with respect to the Hilbert-Schmidt inner product). Parseval's identity then asserts
$
\frac{1}{2^n} (\sum_a \langle \psi| P_a |\psi \rangle^2 + \langle \psi |I| \psi \rangle^2)
= \| |\psi \rangle \! \langle \psi \|_2^2 = 1
$ and, together with symmetry, we conclude
\begin{align}
\E_{a,s} \langle \psi| \rho_{(a,s)}|\psi \rangle^2
%=& \frac{1}{4^n} \E_{a,s} \left( \langle \psi| I |\psi \rangle + s \langle \psi| P_a | \psi \rangle \right)^2 \\
=& \frac{1}{4^n} \E_{a,s} \langle \psi| I |\psi \rangle^2 + \frac{2}{4^n} \E_a \left( \E_s s \langle \psi| P_a |\psi \rangle \right) + \frac{1}{4^n}\E_a (\E_s s^2 )\langle \psi| P_a |\psi \rangle^2 \\
=& \frac{1}{4^n} + 0 + \frac{1}{4^n(4^n-1)}\sum_a \langle \psi| P_a |\psi \rangle^2
=\frac{1}{4^n} + \frac{1}{2^n (4^n-1)}\left( 1- \frac{1}{2^n}\right) \\
=&
%\frac{1}{4^n}+ \frac{2^n-1}{2^n (4^n-1)}=
\frac{1}{4^n}\left( 1+ \frac{1}{ (2^n+1)} \right)
%= \frac{1}{4^n}\left( 1 + \frac{1}{2^n+1}\right)
\end{align}
\end{proof}

We now give the desired upper bound on the mutual information between $a$ and $o_i$.
\begin{lemma} \label{lem:expsmallI}
$I(a : o_i) \leq 1 / (2^n + 1).$
\end{lemma}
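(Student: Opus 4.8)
The plan is to prove the slightly stronger bound $I\big((a,s):o_i\big)\le \frac{1}{2^n+1}$, where $(a,s)$ is Alice's full message from the codebook in Equation~\eqref{eq:rhoa-def}; the stated inequality $I(a:o_i)\le\frac{1}{2^n+1}$ then follows immediately by the data-processing inequality applied to the deterministic map $(a,s)\mapsto a$. (This stronger form is also what the chain-rule step preceding the lemma actually needs.) Fix the copy index $i$ and, to lighten notation, drop it and write the rank-one POVM as $F_i=\{w_o\,2^n\ketbra{\psi_o}{\psi_o}\}_o$; taking the trace of $\sum_o w_o 2^n\ketbra{\psi_o}{\psi_o}=I$ yields the normalization $\sum_o w_o=1$. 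For message $(a,s)$ the outcome probability is $p_o^{(a,s)}:=\Pr[o_i=o\mid(a,s)]=w_o\,2^n\langle\psi_o|\rho_{(a,s)}|\psi_o\rangle$, and by the first identity in Lemma~\ref{lem:pauli-averages} its average over the uniformly random message is $q_o:=\E_{a,s}\big[p_o^{(a,s)}\big]=w_o\,2^n\cdot 2^{-n}=w_o$, which is also the unconditional outcome probability.

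Next I would bound the mutual information by a $\chi^2$-divergence. Writing $I\big((a,s):o_i\big)=\E_{a,s}\big[D(p^{(a,s)}\,\|\,q)\big]$ and using the elementary bound $D(p\|q)\le\sum_o\frac{(p_o-q_o)^2}{q_o}=\chi^2(p\|q)$, and then exchanging sum and expectation, we get
\begin{equation}
I\big((a,s):o_i\big)\;\le\;\sum_o\frac{\E_{a,s}\big[(p_o^{(a,s)}-w_o)^2\big]}{w_o}\;=\;\sum_o\frac{\Var_{a,s}\big[\,p_o^{(a,s)}\,\big]}{w_o},
\end{equation}
since $q_o=w_o$ is exactly the mean of $p_o^{(a,s)}$. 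It then remains only to control each variance. Because $p_o^{(a,s)}=w_o\,2^n\langle\psi_o|\rho_{(a,s)}|\psi_o\rangle$, the two identities in Lemma~\ref{lem:pauli-averages} give $\Var_{a,s}\langle\psi_o|\rho_{(a,s)}|\psi_o\rangle=\frac{1}{4^n}\big(1+\frac{1}{2^n+1}\big)-\frac{1}{4^n}=\frac{1}{4^n(2^n+1)}$, hence $\Var_{a,s}\big[p_o^{(a,s)}\big]=w_o^2\,4^n\cdot\frac{1}{4^n(2^n+1)}=\frac{w_o^2}{2^n+1}$. Substituting back, $I\big((a,s):o_i\big)\le\sum_o\frac{w_o}{2^n+1}=\frac{1}{2^n+1}$, which is the claim.

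Two minor points deserve a remark rather than actual work. First, restricting to rank-one POVM elements is without loss of generality: refining any POVM into rank-one pieces can only increase the mutual information (the refined outcome determines the coarse one, so this is again data processing), and this reduction is already in force in the surrounding argument. Second, if the POVM has a continuum of outcomes the sums over $o$ become integrals against the appropriate outcome measure and nothing changes. The only genuinely load-bearing ingredient is Lemma~\ref{lem:pauli-averages} — ultimately the Parseval identity $\sum_a\langle\psi_o|P_a|\psi_o\rangle^2=2^n$ for the orthonormal Pauli basis, which forces the per-outcome conditional variance to be exponentially small in $n$ — so beyond assembling these pieces I do not anticipate a real obstacle; the only thing to be careful about is bookkeeping the normalization $\sum_o w_o=1$ and not confusing $q_o=w_o$ with $2^n w_o$.
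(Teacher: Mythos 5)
Your proposal is correct and follows essentially the same route as the paper: the paper also bounds the mutual information by the expected $\chi^2$-divergence (via the concavity inequality $\log x \le \log y + (x-y)/y$, which is exactly $D(p\|q)\le\chi^2(p\|q)$) and then evaluates the first and second moments of $p_{(a,s)}(o_i)$ with Lemma~\ref{lem:pauli-averages} to get $\sum_{o_i} w_{io_i}/(2^n+1) = 1/(2^n+1)$. Your explicit remark that the bound is really on $I((a,s):o_i)$ and descends to $I(a:o_i)$ by data processing is a small tidiness improvement over the paper's notation, not a different argument.
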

\begin{proof}
Suppose that the POVM measurement $F_i$ is given by $F_{o_i} = \{w_{i o_i} 2^n \ketbra{\psi_{i o_i}}{\psi_{i o_i}}\}$, where $j$ ranges through all possible measurement outcomes.
Using the condition $\sum_{j} w_{i o_i} 2^n \ketbra{\psi_{i o_j}}{\psi_{i o_i}} = I$ and $\braket{\psi_{i o_j}|\psi_{i o_i}} = 1$, we have $\sum_{o_i} w_{i o_i} = 1$ (take the trace of both sides of the equation)
The probability distribution of $o_i$ conditioned on $(a,s)$ can hence be written as
\begin{equation}
    p_{(a,s)}(o_i) = w_{i o_i} 2^n \bra{\psi_{i o_i}} \rho_{(a,s)} \ket{\psi_{i o_i}}
\end{equation}
and the mutual information obeys
\begin{align}
    \label{eq:MI-upper-bound}
    I(\small{(a,s)} : o_i) =& \E_{a,s}\left[ \sum_{o_i} p_{(a,s)}(o_i) \log p_{(a,s)}(o_i) \right] - \sum_{o_i} \left(\E_{a,s} p_{(a,s)}(o_i)\right)  \log\left(\E_{a,s} p_{(a,s)}(o_i)\right) \\
    \leq& \sum_{o_i} \frac{\E_{a,s} \left[ p_{(a,s)}(o_i)^2\right] - \left(\E_{a,s} p_{(a,s)}(o_i)\right)^2 }{\E_{a,s} p_{(a,s)}(o_i)}.
\end{align}
The inequality uses the fact that $\log(x)$ is concave, so $\log(x) \leq \log(y) + \frac{x-y}{y}$.
We can separately compute $\E_{a,s} p_{(a,s)}(o_i)$ and $\E_{a,s} \left[ p_{(a,s)}(o_i)^2\right]$ using Lemma~\ref{lem:pauli-averages}:
\begin{align}
\E_{a,s} p_{(a,s)} (o_i) =& w_{i o_i} 2^n \E_{a,s} \langle \psi_{i o_i}| \rho_{(a,s)}|\psi_{i o_i} \rangle
= w_{i o_i}, \\
\E_{a,s} p_{(a,s)}(o_i)^2
=& w_{i o_i}^2 4^n \E_{a,s} \langle \psi_{i o_i}| \rho_{(a,s)}|\psi_{i o_i}\rangle^2 = w_{i o_i}^2 \left(1+\frac{1}{2^n+1}\right)
\end{align}
Inserting these expressions into
Equation~\eqref{eq:MI-upper-bound} reveals
\begin{equation}
    I(a : o_i) \leq
    \sum_{o_i} \frac{1}{w_{i o_i}}\left(w_{i o_i}^2 \left(1 + \frac{1}{2^n+1} \right) - w_{i o_i}^2 \right)
    =
    \frac{1}{2^n+1}\sum_{o_i} w_{i o_i} = \frac{1}{2^n + 1},
\end{equation}
because the $w_{io_i}$'s are expansion coefficients of a POVM ($\sum_{o_i} w_{i o_i} = 1$).
This is the advertised result.
%This concludes the proof of the lemma.
\end{proof}

\subsubsection{Adaptive measurements}

In the last section, we have derived a sample complexity lower bound for independent single-copy quantum measurements. Several results in the literature address this restricted setting, see e.g.\ \cite{haah2017sample,huang2020predicting}.
In stark contrast, very little is known about the more realistic setting of adaptive single-copy measurements; see \cite{bubeck2020entanglement, aharonov2021quantum} for lower bounds on quantum mixedness testing and the task of distinguishing between physical experiments.
Here, we give an elementary proof that provides an exponential sample complexity lower bound for predicting Pauli expectation values based on single-copy adaptive measurements.
Such an extension to adaptive measurement strategies is nontrivial -- very few results are known for this setting.
However, the actual result is not (yet) tight.
We believe that further improvements are possible using more sophisticated analysis and we leave this as a future work.
When adaptive measurements are allowed, each POVM measurement can depend on previous POVM measurement outcomes. And the entire training process can be written as
\begin{equation}
    \rho \xrightarrow{F^{o_{ < 1}}_1} o_1, \rho \xrightarrow{F^{o_{ < 2}}_2} o_2, \,\, \ldots, \,\, \rho \xrightarrow{F^{o_{ < N}}_N} o_N.
\end{equation}
Here, $o_i$ is the POVM measurement outcome of the $i$-th measurement and $o_{ < k}$ subsumes all previous measurement outcomes (think $o_{<k} = \{ o_1,\ldots,o_{k-1}\}$).
Without loss, we again restrict our attention to POVM measurements comprised of rank-one projectors (these always provide more information in the measurement outcome) and write
\begin{equation}
    F^{o_{< i}}_i = \left\{ w^{o_{< i}}_{ij} 2^n \ketbra{\psi^{o_{< i}}_{ij}}{\psi^{o_{< i}}_{ij}} \right\}.
\end{equation}
Using the conditions $\sum_{j} w^{o_{< i}}_{ij} 2^n \ketbra{\psi^{o_{< i}}_{ij}}{\psi^{o_{< i}}_{ij}} = I$ and $\langle \psi^{o_{< i}}_{ij}| \psi^{o_{< i}}_{ij} \rangle=1$, we have $\sum_{j} w^{o_{< i}}_{ij} = 1$.

Suppose Alice randomly chooses one of $4^n$ possible $n$-qubit states based on a \emph{non-uniform} probability distribution:
\begin{equation}
    \rho =
    \begin{cases}
    \rho_{\mathrm{mm}} = \frac{I}{2^n}, & \text{with probability} \,\, \frac{1}{2},\\
    \rho_{a} = \frac{I + P_a}{2^n}, & \text{with probability} \,\, \frac{1}{2(4^n - 1)},
    \end{cases}
\end{equation}
where $P_a$ is chosen among all $4^n - 1$ nontrivial Pauli operators $\{I, X, Y, Z\}^{\otimes n} \setminus \{I^{\otimes n}\}$. (In contrast to the previous subsection, we only include positive signs, that is $\rho_a = \rho_{(a,+1)}$).
Alice then sends $\rho^{\otimes N}$ to Bob.
Hence, Bob will receive
\begin{equation}
    \begin{cases}
    \rho_{\mathrm{mm}}^{\otimes N}, & \text{with probability} \,\, \frac{1}{2},\\
    \rho_{a}^{\otimes N}, & \text{with probability} \,\, \frac{1}{2 (4^n - 1)}.
    \end{cases}
\end{equation}
Bob uses the classical ML model with adaptive measurement outcomes $o_1, \ldots, o_N$ to infer all Pauli expectation values of $\Tr(P_b \rho)$ with a small error (the assumption of the classical ML model).
Because $\Tr(P_b \rho_a) = \delta_{ab}$, but $\Tr(P_b \rho_{\mathrm{mm}}) = 0$ for all $b$, Bob can successfully distinguish the quantum state chosen by Alice once he knows the expectation values of all Pauli operators.
This implies that the probability distribution for Bob's measurement outcomes $o_1, \ldots, o_N$ must be able to distinguish the two events:
\begin{enumerate}
    \item $\rho = \rho_{\mathrm{mm}}$, which happens with probability $1/2$.
    \item $\rho = \rho_a$ for a random $a$, which happens with probability $1/2$.
\end{enumerate}
We have thus reduced a multiple-hypothesis testing problem -- distinguishing among $4^n$ possible states -- to a two-hypothesis testing problem -- distinguishing between the completely mixed state and a randomly chosen $\rho_a$.
We will use this observation to derive an information-theoretic lower bound on $N$.

Importantly, the two hypothesises give rise to different joint probability distributions of all $N$ outcomes.
Under the first hypothesis ($\rho=\rho_{\mathrm{mm}}$),
\begin{equation}
    p_1(o_1, \ldots, o_N) = \prod_{i=1}^N w^{o_{< i}}_{i o_i}, \label{eq:p1}
\end{equation}
while the second hypothesis ($\rho=\rho_a$ and $a$ is itself uniformly random) would imply
\begin{equation}
    p_2(o_1, \ldots, o_N) = \frac{1}{4^n - 1} \sum_{P_a: \{I, X, Y, Z\}^{\otimes n} \setminus \{I^{\otimes n}\}} \prod_{i=1}^N w^{o_{< i}}_{i o_i} (1 + \bra{\psi^{o_{< i}}_{i o_i}} P_a\ket{\psi^{o_{< i}}_{i o_i}}). \label{eq:p2}
\end{equation}
When Bob performs his measurement strategy, he exactly obtains one sample from such a joint probability distribution. And based on this sample, he must distinguish between the two hypotheses. A well-known fact from statistics states that the optimal decision strategy is the maximum likelihood rule (pick the joint probability distribution that is most likely, given the observed event). This strategy succeeds with a probability that is determined by the total variational (TV)distance:
\begin{align}
\mathrm{Pr} \left[ \text{successful discrimination} \right] =& \frac{1}{2}+\frac{1}{2} \mathrm{TV}(p_1, p_2)
\quad \text{with} \\
\mathrm{TV}(p_1, p_2) =&  \frac{1}{2}\sum_{x \in \mathcal{X}} \left|p_1(x) - p_2(x)\right|    =\sum_{x \in \mathcal{X}: p_1(x) > p_2(x)} \left(p_1(x) - p_2(x)\right). \label{eq:TV-def}
\end{align}
We note in passing that this classical observation is actually the starting point for the celebrated Holevo-Helstrom theorem \cite{Helstrom1969,Holevo1973}. We refer to \cite{ambainis_wise_2007,Matthews_2009} and also \cite[Lecture 1]{Kueng2019} for a modern discussion from a quantum information perspective.

Importantly, the TV distance between $p_1 (o_1,\ldots,o_N)$ and $p_2 (o_1,\ldots,o_N)$ remains tiny until $N$ becomes exponentially large: $\mathrm{TV}\left(p_1 (o_1,\ldots,o_N), p_2 (o_1,\ldots,o_N)\right) \leq 2N / (2^n+1)^{1/3}$. This is the content of Lemma~\ref{lem:total-variation} below.
This TV upper bounds Bob's bias for successful discrimination of the two possibilities.
Because Bob can successfully discriminate between the two hypotheses, we have $\mathrm{TV}(p_1,p_2)=\Omega (1)$, which gives the desired result
\begin{equation}
N  = \Omega ( 2^{n/3}).
\end{equation}
We emphasize that an exponential lower bound can be proven using the same method whenever we want to predict a class of observables $\{O_1, \ldots, O_M\}$ such that $\frac{1}{M} \sum_{i=1}^M \bra{\psi} O_i \ket{\psi}^2$ is an exponentially small number for all pure states $\ket{\psi}$.
Pauli observables are one example of such a class of observables.

\begin{lemma} \label{lem:total-variation}
$\mathrm{TV}\left(p_1 (o_1,\ldots,o_N), p_2 (o_1,\ldots,o_N)\right) \leq 2N / (2^n+1)^{1/3}$.
\end{lemma}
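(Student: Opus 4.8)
The plan is to reduce the global distinguishing problem to a per‑measurement, per‑label estimate by two successive applications of convexity/subadditivity of total variation. First, write $p_{2,a}(o_1,\ldots,o_N) = \prod_{i=1}^N w^{o_{<i}}_{i o_i}\big(1 + \langle \psi^{o_{<i}}_{i o_i}| P_a |\psi^{o_{<i}}_{i o_i}\rangle\big)$ for the outcome distribution when Alice sends $\rho_a$; this is a genuine probability distribution along every branch of the adaptive tree, because $1 + \langle\psi|P_a|\psi\rangle = 2^n\langle\psi|\rho_a|\psi\rangle \ge 0$ and $\sum_j w^{o_{<i}}_{ij}\langle\psi^{o_{<i}}_{ij}|P_a|\psi^{o_{<i}}_{ij}\rangle = 2^{-n}\Tr(P_a)=0$ for nontrivial $P_a$. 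By \eqref{eq:p2} we then have $p_2 = \mathbb{E}_a[p_{2,a}]$ with $a$ uniform over the $4^n-1$ nontrivial Paulis, so joint convexity of TV gives $\mathrm{TV}(p_1,p_2) \le \mathbb{E}_a[\mathrm{TV}(p_1,p_{2,a})]$. (A naive second–moment/$\chi^2$ bound is not available here, since the likelihood ratio $p_{2,a}/p_1$ can be as large as $2^N$; the linearization below sidesteps this.)

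Second, for each fixed $a$ I would bound $\mathrm{TV}(p_1,p_{2,a})$ by a hybrid argument over the $N$ rounds. Define $H_k$ to be the law that draws coordinates $1,\dots,k$ from the $p_1$‑conditionals and coordinates $k+1,\dots,N$ from the $p_{2,a}$‑conditionals, so $H_0 = p_{2,a}$ and $H_N = p_1$. Then $H_{k-1}$ and $H_k$ agree on the first $k-1$ coordinates (whose law is the $p_1$‑marginal), differ only in the $k$‑th conditional, and both continue with $p_{2,a}$‑conditionals afterwards; a one‑line computation gives $\mathrm{TV}(H_{k-1},H_k) = \mathbb{E}_{o_{<k}\sim p_1}\big[\mathrm{TV}\big(p_1(\cdot|o_{<k}),p_{2,a}(\cdot|o_{<k})\big)\big]$, and the triangle inequality for TV yields $\mathrm{TV}(p_1,p_{2,a}) \le \sum_{k=1}^N \mathbb{E}_{o_{<k}\sim p_1}\big[\mathrm{TV}(p_1(\cdot|o_{<k}),p_{2,a}(\cdot|o_{<k}))\big]$. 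Conditioned on the history $o_{<k}$ the measurement $F^{o_{<k}}_k$ is fixed; $p_1$ assigns weight $w^{o_{<k}}_{kj}$ to outcome $j$ (since $2^n\langle\psi|\rho_{\mathrm{mm}}|\psi\rangle=1$) while $p_{2,a}$ assigns $w^{o_{<k}}_{kj}(1+\langle\psi^{o_{<k}}_{kj}|P_a|\psi^{o_{<k}}_{kj}\rangle)$, so the per‑round TV is exactly $\tfrac12 \sum_j w^{o_{<k}}_{kj}\,|\langle\psi^{o_{<k}}_{kj}|P_a|\psi^{o_{<k}}_{kj}\rangle|$.

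To finish, average the per‑round bound over $a$ and invoke the moment identity already proved in Lemma~\ref{lem:pauli-averages}: for any fixed pure state $|\psi\rangle$, Jensen's inequality gives $\mathbb{E}_a|\langle\psi|P_a|\psi\rangle| \le \big(\mathbb{E}_a\langle\psi|P_a|\psi\rangle^2\big)^{1/2} = (2^n+1)^{-1/2}$, the equality being precisely the Parseval computation in that lemma. Since $\sum_j w^{o_{<k}}_{kj}=1$ for every history, chaining everything together gives $\mathrm{TV}(p_1,p_2) \le \sum_{k=1}^N \tfrac12 (2^n+1)^{-1/2} = \tfrac{N}{2}(2^n+1)^{-1/2}$, which is stronger than (hence implies) the stated bound $2N/(2^n+1)^{1/3}$.

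The main obstacle is setting up the telescoping step correctly: the hybrid distributions must be ordered so that in $H_{k-1}$ versus $H_k$ the shared prefix is distributed according to the base measure $p_1$ rather than $p_{2,a}$ — that is exactly what makes the trailing $p_{2,a}$‑conditionals cancel in $H_{k-1}-H_k$ and leaves a clean expectation of a single‑round TV under $p_1$. Everything else is routine, relying only on $\Tr(P_a)=0$ for nontrivial Paulis, positivity of $\rho_a$, and the already‑established second‑moment identity.
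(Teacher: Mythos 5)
Your proof is correct, and it takes a genuinely different route from the paper's. The paper starts from the same decomposition $p_1-p_2=\frac{1}{4^n-1}\sum_a(p_1-p_{2,a})$ but then controls the product $\prod_{i=1}^N\bigl(1+\langle\psi^{o_{<i}}_{io_i}|P_a|\psi^{o_{<i}}_{io_i}\rangle\bigr)$ by a counting argument: from the Parseval identity $\sum_a\langle\psi|P_a|\psi\rangle^2=2^n-1$ it deduces (by a Markov-type threshold at $(2^n+1)^{-1/3}$) that all but a $(2^n+1)^{-1/3}$ fraction of Paulis have small overlap with each projector, intersects these sets over the $N$ rounds to form a ``good set'' $G^{(o_1,\ldots,o_N)}$, bounds the good Paulis via $1-(1-(2^n+1)^{-1/3})^N\leq N(2^n+1)^{-1/3}$ and the bad ones trivially, arriving at $2N/(2^n+1)^{1/3}$; the $1/3$ exponent is an artifact of balancing that threshold. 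Your argument replaces the product/counting step entirely: convexity of TV over the mixture, a hybrid/telescoping decomposition of $\mathrm{TV}(p_1,p_{2,a})$ into per-round conditional TVs (with the base prefix drawn from $p_1$ so the trailing $p_{2,a}$-conditionals marginalize to one), and then Cauchy--Schwarz applied directly to $\E_a|\langle\psi|P_a|\psi\rangle|\leq(2^n+1)^{-1/2}$ using the very same second-moment identity from Lemma~\ref{lem:pauli-averages}. All the individual steps check out: $p_{2,a}$ is a legitimate chain-rule factorization because $1+\langle\psi|P_a|\psi\rangle\geq 0$ and $\sum_j w^{o_{<i}}_{ij}\langle\psi^{o_{<i}}_{ij}|P_a|\psi^{o_{<i}}_{ij}\rangle=2^{-n}\Tr(P_a)=0$, and the one-line computation of $\mathrm{TV}(H_{k-1},H_k)$ is exact. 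What your approach buys is a cleaner proof and a strictly stronger bound, $\tfrac{N}{2}(2^n+1)^{-1/2}$, which would upgrade the adaptive-measurement lower bound of that subsection from $\Omega(2^{n/3})$ to $\Omega(2^{n/2})$ --- precisely the kind of improvement the paper flags as future work.
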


\begin{proof}
The key insight is that -- regardless of the actual choice of measurements -- single-copy, rank-one POVMs are ill equipped to distinguish the maximally mixed state $\rho_{\mathrm{mm}}= I/2^n$ from $\rho_a=(I+P_a)/2^n$, where $a$ is a uniformly random index.
To see this, let us first re-use Lemma~\ref{lem:pauli-averages} to obtain
\begin{equation}
 \sum_a \bra{\psi^{o_{< i}}_{i o_i}} P_a \ket{\psi^{o_{< i}}_{i o_i}}^2
 = (4^n-1) \E_a \bra{\psi^{o_{< i}}_{i o_i}} P_a \ket{\psi^{o_{< i}}_{i o_i}}^2 = \frac{4^n-1}{2^n+1}=2^n-1 \quad \text{for any $\ket{\psi^{o_{< i}}_{i o_i}}$}. \label{eq:Pauli-sum}
\end{equation}
This, in turn implies that for an overwhelmingly large fraction of Pauli indices, $\bra{\psi^{o_{< i}}_{i o_i}} P_a \ket{\psi^{o_{< i}}_{i o_i}}$ must be exponentially close to zero. More precisely,
\begin{equation}
    \left|\left\{ P_a \in \{I, X, Y, Z\}^{\otimes n} \setminus \{I^{\otimes n}\}: |\bra{\psi^{o_{< i}}_{i o_i}} P_a\ket{\psi^{o_{< i}}_{i o_i}}| \leq \frac{1}{(2^{n} + 1)^{1/3}} \right\}\right| \geq \left(1 - \frac{1}{(2^{n} + 1)^{1/3}}\right) (4^n - 1), \label{eq:small-paulis}
\end{equation}
and this cardinality bound can be proven by contradiction.
Suppose that the number of very small Pauli operators is smaller than $\left(1 - \frac{1}{(2^{n} + 1)^{1/3}}\right) (4^n - 1)$. Then, this would necessarily imply
\begin{equation}
    \sum_{P_a: \{I, X, Y, Z\}^{\otimes n}} \bra{\psi^{o_{< i}}_{i o_i}} P_a\ket{\psi^{o_{< i}}_{i o_i}}^2 > \frac{4^n - 1}{(2^{n} + 1)^{1/3}} \times \left(\frac{1}{(2^{n} + 1)^{1/3}}\right)^2 = 2^n - 1
\end{equation}
which is in direct conflict with Equation~\eqref{eq:Pauli-sum}.

Equation~\eqref{eq:small-paulis} states that an overwhelming fraction of all Pauli observables $P_a$ have small overlap with a single fixed rank-one projector. This feature makes the associated states $\rho_a$ difficult to distinguish from the maximally mixed state and an adaptive measurement procedure cannot easily wash out this phenomenon. Fix a tuple $(o_1,\ldots,o_N)$ of measurement outcomes and let
\begin{equation}
G^{(o_1,\ldots,o_N)} = \left\{ P_a \in \{I, X, Y, Z\}^{\otimes n} \setminus \{I^{\otimes n}\}: |\bra{\psi^{o_{< i}}_{i o_i}} P_a\ket{\psi^{o_{< i}}_{i o_i}}| \leq \frac{1}{(2^{n} + 1)^{1/3}}, \forall i =1, \ldots, N \right\}
\end{equation}
denote the set of Pauli operators that have exponentially small overlap with the associated rank-one projectors.
Note that the set $G^{(o_1,\ldots,o_N)}$ depends on the measurement outcomes $(o_1,\ldots,o_N)$.
Then, the cardinality of this set follows from the lower bound of the size of the set given in Equation~\eqref{eq:small-paulis}:
\begin{equation*}
|G^{(o_1,\ldots,o_N)}| \geq \left( 1 - \frac{N}{(2^n+1)^{1/3}} \right) \left(4^n-1\right)
\end{equation*}
and $G^{(o_1,\ldots,o_N)}$ exclusively contains mixed states $\rho_a$ that are difficult to distinguish from the maximally mixed state. This has profound implications on the TV distance.
Using the definition of $p_1, p_2$ and the triangle inequality, we obtain
\begin{equation}\label{eq:TV-Pauli-sum}
    \mathrm{TV}(p_1, p_2) \leq \frac{1}{4^n - 1} \sum_{\substack{o_{1:N}:\\ p_1(o_{1:N}) > p_2(o_{1:N})}} \sum_{P_a: \{I, X, Y, Z\}^{\otimes n} \setminus \{I^{\otimes n}\}} \left(\prod_{i=1}^N w^{o_{< i}}_{i o_i}\right) \left( 1 - \prod_{i=1}^N (1 + \bra{\psi^{o_{< i}}_{i o_i}} P_a\ket{\psi^{o_{< i}}_{i o_i}})\right)
\end{equation}
and each expression on the very right is guaranteed to be contained in an (exponentially) small interval
\begin{equation}\label{eq:bound-Pauli-product}
\left( 1 - \prod_{i=1}^N (1 + \bra{\psi^{o_{< i}}_{i o_i}} P_a\ket{\psi^{o_{< i}}_{i o_i}})\right) \in \left[1-2^N,1 \right].
\end{equation}
To obtain an upper bound on the TV distance we make an additional rounding argument: if $P_a \notin G^{(o_1,\ldots,o_N)}$, we assume that the difference takes the maximum possible value $1 - \prod_{i=1}^N (1 + \bra{\psi^{o_{< i}}_{i o_i}} P_a\ket{\psi^{o_{< i}}_{i o_i}}) = 1$.
We also used the short-hand notation $o_{1:N} = (o_1,\ldots,o_N)$.
This gives
\begin{align}
    \mathrm{TV}(p_1, p_2) &\leq \frac{1}{4^n - 1} \sum_{\substack{o_{1:N}:\\ p_1(o_{1:N}) > p_2(o_{1:N})}} |G^{(o_1,\ldots,o_N)}| \left(\prod_{i=1}^N w^{o_{< i}}_{i o_i}\right) \left( 1 - \left( 1 - \frac{1}{(2^n+1)^{1/3}}\right)^N\right)\\
    &+ \frac{1}{4^n - 1} \sum_{\substack{o_{1:N}:\\ p_1(o_{1:N}) > p_2(o_{1:N})}} (4^n - 1 - |G^{(o_1,\ldots,o_N)}|) \left(\prod_{i=1}^N w^{o_{< i}}_{i o_i}\right)\\
    &\leq \frac{1}{4^n - 1} \sum_{\substack{o_{1:N}:\\ p_1(o_{1:N}) > p_2(o_{1:N})}} (4^n - 1) \left(\prod_{i=1}^N w^{o_{< i}}_{i o_i}\right) \left( 1 - \left( 1 - \frac{1}{(2^n+1)^{1/3}}\right)^N\right)\\
    &+ \frac{1}{4^n - 1} \sum_{\substack{o_{1:N}:\\ p_1(o_{1:N}) > p_2(o_{1:N})}} \frac{N}{(2^{n} + 1)^{1/3}} (4^n - 1) \left(\prod_{i=1}^N w^{o_{< i}}_{i o_i}\right)\\
    &=  \sum_{\substack{o_{1:N}:\\ p_1(o_{1:N}) > p_2(o_{1:N})}} \left(\prod_{i=1}^N w^{o_{< i}}_{i o_i}\right) \left( \left( 1 - \left( 1 - \frac{1}{(2^n+1)^{1/3}}\right)^N\right) + \frac{N}{(2^{n} + 1)^{1/3}}\right)\\
    &\leq \left( 1 - \left( 1 - \frac{1}{(2^n+1)^{1/3}}\right)^N\right) + \frac{N}{(2^{n} + 1)^{1/3}}.
\end{align}
To obtain the first inequality, we divide the sum over $P_a$ in Equation~(\ref{eq:TV-Pauli-sum}) into a sum over $P_a\in G^{(o_1,\ldots,o_N)}$ and a sum over $P_a\not\in G^{(o_1,\ldots,o_N)}$, and apply the upper bound in Equation~(\ref{eq:bound-Pauli-product}) to the sum over $P_a\not\in G^{(o_1,\ldots,o_N)}$.
The second inequality uses the fact $\left(1 - \frac{N}{(2^n-1)^{1/3}}\right)(4^n-1) \leq |G^{(o_1,\ldots,o_N)}| \leq 4^n - 1$.
The final line follows from $w^{o_{< i}}_{i o_i} \geq 0$ and $\sum_{o_i} w^{o_{< i}}_{i o_i} = 1$; hence
\begin{equation}
    \sum_{\substack{o_{1:N}:\\ p_1(o_{1:N}) > p_2(o_{1:N})}} \left(\prod_{i=1}^N w^{o_{< i}}_{i o_i}\right) \leq 1.
\end{equation}
Finally, we use the inequality $1 - (1-x)^N \leq Nx, \forall x \leq 1, N \in \mathbb{N}$ to find
\begin{equation}
    \mathrm{TV}(p_1, p_2) \leq \frac{2N}{(2^{n} + 1)^{1/3}}.
\end{equation}
This concludes the desired upper bound on the total variation distance.
\end{proof}

\subsection{Sample complexity lower bound for general entangled measurements}
\label{sec:lowerboundentangled}

We establish a lower bound of $N = \Omega(n)$ for quantum ML models that can perform general entangled measurements on $N$ copies of $\rho$. This matches with the sample complexity of the quantum ML model considered in Section~\ref{sec:pauli-qml}.
The proof is similar to the sample complexity lower bound for single-copy independent measurements given in Section~\ref{sec:lowerboundindmeas}.

We consider a communication protocol between Alice and Bob.
First, we define a codebook that Alice will use to encode classical information in quantum states:
\begin{equation} \label{eq:rhoa-def-2}
    a \in \left\{1, \ldots, 4^n - 1\right\}\; \text{and} \; s \in \left\{\pm 1 \right\} \,\,\, \longrightarrow \,\,\, \rho_{(a,s)} = \frac{I + s P_a}{2^n},
\end{equation}
where $P_a$ runs through all Pauli matrices  $\{I, X, Y, Z\}^{\otimes n}\} \setminus  \{I^{\otimes n}\}$ that are not the global identity.
Alice samples a Pauli index $a$ and a sign $s$ uniformly at random.
Then, she prepares $N$ copies of $\rho_a$ and sends them to Bob.
Bob will use the quantum ML model on the $N$ copies of $\rho$ to predict the expectation values of all $4^n$ Pauli observables.
The expectation values will be a general entangled POVM measurement outcome $o$ on the $N$ copies of $\rho$.
By assumption, Bob can use $O$ to construct a function $\tilde{f}(x): \{I, X, Y, Z\}^n \rightarrow \mathbb{R}$ that is guaranteed to satisfy
\begin{equation} \label{eq:learningguarantee-Q}
    \max_{x \in \{I, X, Y, Z\}^n} \left| \tilde{f}(x) - \Tr(Z_1 \cE_\rho(\ketbra{x}{x})) \right| = \max_{x \in \{I, X, Y, Z\}^n} \left| \tilde{f}(x) - \Tr(P_x \rho_{(a,s)}) \right| < \frac{1}{2}
\end{equation}
with high probability.
Because $\Tr(P_x \rho_{(a,s)})$ is either $+1, 0, -1$, it is not hard to see that Bob can use $\tilde{f}(x)$ to determine both $a$ and $s$ as long as Equation~\eqref{eq:learningguarantee} holds.
Using Fano's inequality and data processing inequality, the mutual information between $(a,s)$ and the measurement outcome $o$ must be lower bounded by
\begin{equation}
    I({\small (a,s)} : o) \geq \Omega(\log(2(4^n - 1))) = \Omega(n).
\end{equation}
In conjunction with Holevo's theorem \cite{Helstrom1969,Holevo1973}, we have
\begin{align}
    I({\small (a,s)} : o) &\leq S\left(\frac{1}{2 (4^n - 1)} \sum_{a,s} \rho_{(a,s)}^{\otimes N}\right) - \frac{1}{2 (4^n - 1)} \sum_{a,s} S\left(\rho_{(a,s)}^{\otimes N}\right)\\
    &\leq N n \log(2) - \frac{1}{2 (4^n - 1)} \sum_{a,s} N S\left( \rho_{(a,s)} \right) \\
    &= N n \log(2) - N (n-1) \log(2) = N \log(2),
\end{align}
where $S(\cdot)$ is the von Neumann entropy, the second inequality uses the fact a $Nn$-qubit system has an entropy upper bounded by $Nn \log(2)$.
Hence, we must have $N = \Omega(n)$.

\end{document}